\newcommand{\tgraphic}[1]{\includegraphics[width=0.75\linewidth, trim=0cm 1.5cm 0cm 1.5cm, clip]{#1}}
\newcommand{\BEDec}[1]{\operatorname{BE-Decomp}(#1)}
\newcommand{\VSet}{\operatorname{V}}
\newcommand{\ESet}{\operatorname{E}}
\newcommand{\nsp}{\textsf{NextSP}}
\newcommand{\nspl}{\textsf{NextSP-Layered}}
\newcommand{\nsps}{\textsf{NextSP-Straight}}
\newcommand{\ignore}[1]{}
\newcommand{\utov}{\!\!\to\!\!}
\newcommand{\tightto}[2]{#1\utov #2}
\newcommand{\stot}{s\utov t}
\newcommand{\anyP}{\mathscr{P}}
\newcommand{\downV}{V^{\operatorname{down}}}
\newcommand{\upV}{V^{\operatorname{up}}}
\newcommand{\gapV}{V^{\operatorname{gap}}}
\newcommand{\interV}{V^{\operatorname{inter}}}
\newcommand{\calI}{\mathcal{I}}
\newtheorem{theorem}{Theorem}[section]
\newtheorem{theorem*}{Theorem}
\newtheorem{lemma}[theorem]{Lemma}
\newtheorem*{lemma*}{Lemma}
\newtheorem*{keylemma*}{Key Lemma}
\theoremstyle{definition}
\newtheorem{definition}[theorem]{Definition}
\newenvironment{proofoflem}[1] 
  {\vspace{0.2cm}
  \noindent 
  {\em Proof of Lemma #1.}\mbox{}} 
  {\hfill$\square$} 
\newenvironment{proofofthm}[1] 
  {\vspace{0.2cm}
  \noindent 
  {\em Proof of Theorem #1.}\mbox{}} 
  {\hfill$\square$} 
\newenvironment{proofwithassumption}[2]
  {\vspace{0.2cm}
  \noindent 
  {\em Proof of #1 (given #2).}\mbox{}} 
  {\hfill$\square$\vspace{0.2cm}} 
\title{A Polynomial-Time Algorithm for the Next-to-Shortest Path Problem on Positively Weighted Directed Graphs}
\author{Kuowen Chen\footnote{This work was initiated while the author was at the University of Michigan. \texttt{ckw22@mails.tsinghua.edu.cn}}\\
{\small IIIS, Tsinghua University}
\and
Nicole Wein\footnote{\texttt{nswein@umich.edu}}\\
{\small University of Michigan}
\and 
Yiran Zhang\footnote{\texttt{zhangyir22@mails.tsinghua.edu.cn}}\\
{\small IIIS, Tsinghua University}
}
\date{}
\begin{document}

\maketitle
\begin{abstract}

Given a graph and a pair of terminals $s$, $t$, the \emph{next-to-shortest path} problem asks for an $s\utov t$ (simple) path that is shortest among all \emph{not} shortest $s\utov t$ paths (if one exists). This problem was introduced in 1996, and soon after was shown to be NP-complete for directed graphs with non-negative edge weights, leaving open the case of positive edge weights. Subsequent work investigated this open question, and developed polynomial-time algorithms for the cases of undirected graphs and planar directed graphs. In this work, we resolve this nearly 30-year-old open problem by providing an algorithm for the next-to-shortest path problem on directed graphs with positive edge weights.

\end{abstract}
\pagenumbering{gobble}
\clearpage
\pagebreak
\pagenumbering{arabic}
\section{Introduction}

Given a directed graph with positive edge weights, and a pair of terminals $s$, $t$, the \emph{next-to-shortest path} problem (also known as the \emph{strictly second-shortest path} problem) asks for an $s\utov t$ (simple) path that is shortest among all \emph{not} shortest $s\utov t$ paths (if one exists). We present the first polynomial time algorithm for this problem. 

The next-to-shortest path problem was first introduced by Lalgudi, Papaefthyrniou, and Potkonjak in 1996 for the application of processing data streams in hardware~\cite{lalgudi1996optimizing,lalgudi2000optimizing}. Soon after, Lalgudi and Papaefthymiou proved that the problem is NP-complete on directed graphs with non-negative edge weights~\cite{lalgudi1997computing}, but that there is a polynomial-time algorithm for the relaxation where the path need not be simple. In their NP-hardness reduction, the weight-0 edges play a crucial role, and the case of positive edge weights remained open for almost 30 years until the present work. In the meantime there was a long line of work on the next-to-shortest path problem in \emph{undirected} graphs. 

In 2004, Krasikov and Noble gave the first polynomial-time algorithm for the undirected next-to-shortest path problem with positive edge weights~\cite{krasikov2004finding}. (They also conjectured that it is NP-complete for directed graphs; we disprove this conjecture if $P \not = NP$.) For an $n$-vertex, $m$-edge graph, their algorithm ran in time $O(n^3m)$ time, and was subsequently improved by Li, Sun, and Chen to $O(n^3)$~\cite{li2006improved}, then by Kao, Chang, Wang, and Juan to $O(n^2)$~\cite{kao2011quadratic}, and finally by Wu to $O(n \log n + m)$ with the additional guarantee that the path can be found in linear time if the distances from $s$ and $t$ to all other vertices have already been computed~\cite{wu2013simpler}. This was extended to non-negative edge weights by Wu, Guo, and Wang~\cite{wu2012linear} (see also independent work by Zhang and Nagamochi with larger running time~\cite{zhang2012next}). 
Before the general solution was known, the problem was also studied on special classes of graphs~\cite{mondal2006sequential,barman2009efficient}. See also~\cite{demaine2014canadians} for a data structure that implicitly represents near-shortest $s$-$t$ paths. Additionally, the problem of finding an \emph{induced not-shortest} path  is also solvable in polynomial time on unweighted graphs~\cite{berger2021finding}. 

For \emph{directed} graphs with positive edge weights, Wu and Wang gave a polynomial-time algorithm of the next-to-shortest path problem when the graph is planar~\cite{DBLP:journals/networks/WuW15}. They also proved some properties of a solution for general positively weighted directed graphs. However no further progress was made, and in fact, it was not even known how to find \emph{any} not-shortest $s\utov t$ path in polynomial time.

At least 11 papers have explicitly stated the open problem solved in this paper: 
\begin{center} \emph{Is the next-to-shortest path problem (or even the not-shortest path problem) on positively weighted directed graphs solvable in polynomial time?}
\end{center} 
In these papers~\cite{krasikov2004finding,kao2011quadratic, wu2012linear,wu2013simpler, DBLP:journals/networks/WuW15,dolgui2018simple,akmal2022local,akmal2023faster,hatzel2023simpler,fomin2023detours,assadi2025covering}, the problem has been called ``important", ``interesting", ``surprising", and ``arguably one of the most tantalizing open questions within the area of graph algorithms."

\subsection{Our Contribution}
We provide a polynomial-time algorithm for the next-to-shortest path problem on directed graphs with positive edge weights:
\begin{restatable}{theorem}{mainthm}\label{thm:main}
Given a directed graph $G=(V,E,w)$, where $w: E\rightarrow \mathbb{R}_{>0}$ assigns the edge weights, and two vertices $s,t\in V$, there exists an $O(|V|^4|E|^3\log |V|)$-time algorithm that determines if there exists a next-to-shortest path from $s$ to $t$, and if so, outputs such a path. 
\end{restatable}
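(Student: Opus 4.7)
My plan is to reduce the problem to a polynomial number of structurally constrained subproblems, exploiting the difference between \emph{tight} edges (those on some shortest $s\utov t$ path) and \emph{slack} edges (those not). As setup, I would precompute all-pairs shortest-path distances $d(\cdot,\cdot)$ in $O(|V||E|\log|V|)$ via Dijkstra from each source (valid since weights are positive). Call an edge $(u,v)$ tight if $d(s,u)+w(u,v)+d(v,t)=d(s,t)$ and slack otherwise; the tight edges form the shortest-path DAG $H$. Any simple $s\utov t$ path has length $d(s,t)+\sigma$, where $\sigma$ is the total slackness of its slack edges, so the task becomes: minimize $\sigma$ over simple $s\utov t$ paths that use at least one slack edge.

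The first key step is a structural decomposition. Let $P^*$ be an optimal next-to-shortest path; let $(a,b)$ be its first slack edge and $(c,d)$ its last (possibly equal). Then $P^*=P_1\cdot(a,b)\cdot P_2\cdot(c,d)\cdot P_3$, where $P_1$ is a tight $s\utov a$ path in $H$, $P_3$ is a tight $d\utov t$ path in $H$, and $P_2$ is some $b\utov c$ subpath whose structure is unconstrained except that the concatenation must be simple. I would enumerate all $O(|E|^2)$ choices of $(a,b),(c,d)$, and for each attempt to find $P_1,P_2,P_3$ pairwise vertex-disjoint minimizing the total length.

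The crux — and the main obstacle — is handling the middle portion $P_2$ together with the vertex-disjointness constraint. Two-vertex-disjoint paths in directed graphs is NP-hard in general, so we cannot solve the joint problem naively. Instead, I would aim to prove a structural lemma: some optimal $P^*$ admits a skeleton described by a constant number of additional ``anchor'' vertices — for instance, the places where $P_2$ enters and exits $H$, or vertices where potential conflicts among $P_1,P_2,P_3$ are resolved. If, say, four such anchors suffice, enumerating them ($O(|V|^4)$ guesses) on top of the slack-edge pair enumeration leaves a sub-problem where each of $P_1,P_2,P_3$ is pinned to a specific sequence of anchors, and the remaining work is to compute minimum-weight pairwise vertex-disjoint completions through those anchors. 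A natural tool here is min-cost flow on the vertex-split graph (each vertex replaced by an in/out pair joined by a unit-capacity edge); pushing $O(|V|)$ units through the appropriate source-sink pairs via successive shortest augmenting paths would plausibly give $O(|E|^3\log|V|)$ per enumerated skeleton, matching the stated running time.

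The hardest part will be the structural lemma bounding the skeleton's complexity. Exchange arguments — showing that any candidate path with a ``too-complicated'' middle can be replaced by one with a simpler middle of no larger length — are standard but notoriously fragile in directed graphs, because local rerouting easily breaks simplicity in ways that undirected analyses do not encounter. I expect the heart of the paper to be a careful case analysis showing that an optimal $P^*$ can be chosen so that $P_2$ either avoids $H$ entirely except at a constant number of crossings or has a specific ``nested'' interaction with $H$, either of which yields a $|V|^{O(1)}$-size parameterization. Without such a lemma the directed simplicity constraint is exactly the obstruction that has kept this problem open for nearly three decades; with it, the algorithm reduces to polynomially many standard min-cost flow computations.
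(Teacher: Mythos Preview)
Your outline matches the paper's skeleton surprisingly well: the paper also enumerates the first and last slack edge (it calls the relevant vertices $A,B$), enumerates a constant number of additional anchors (two same-layer forward edges $(X',X),(Y',Y)$), and rests everything on a structural lemma. Where your proposal diverges is in what the structural lemma says and, consequently, in the algorithmic primitive you invoke.

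The genuine gap is your final step. You propose that once the skeleton is fixed, ``the remaining work is to compute minimum-weight pairwise vertex-disjoint completions through those anchors'' by min-cost flow on the vertex-split graph. This does not work: min-cost flow with a super-source and super-sink will happily return three disjoint paths that connect the \emph{wrong} terminal pairs (e.g.\ $s\to c$, $b\to a$, $d\to t$), and forcing the correct pairing is precisely $k$-disjoint-paths in a directed graph, which is NP-hard already for $k=2$. No amount of anchor enumeration rescues this unless the anchors reduce the remaining instance to something structurally special --- and your lemma statement (``$P_2$ avoids $H$ except at a constant number of crossings'') does not yield such a reduction, since the middle path $P_2$ can still weave through the full directed graph.

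The paper's escape from this trap is a \emph{decoupling} statement rather than a crossing-number bound. Its Key Lemma says: there exist anchor edges $(X',X),(Y',Y)$ such that \emph{every} pair of vertex-disjoint shortest paths $P_1\colon s\!\to\! X'\!\to\! X\!\to\! A$ and $P_3\colon B\!\to\! Y'\!\to\! Y\!\to\! t$ is automatically disjoint from the optimal middle segment $P_0^*$. This is much stronger than bounding $P_0^*$'s interaction with $H$; it means you may choose $P_1,P_3$ \emph{first and arbitrarily} (a 2-VDP instance in the shortest-path DAG, solvable in linear time), delete them, and then take $P_0$ to be a plain shortest $A\!\to\! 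B$ path in what remains. The three-way joint disjointness problem never arises. Your exchange-argument intuition is exactly what proves the Key Lemma, but the target of that argument --- ``arbitrary disjoint DAG paths through the anchors miss $P_0^*$'' --- is the idea your proposal does not reach.
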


Our focus was to obtain a polynomial-time algorithm rather than to optimize the running time. We introduce some inefficiencies into the algorithm for the sake of clarity. For example, by modifying the algorithm to not subdivide edges, we believe it could be possible to get time $O(|V|^4|E|^2\log |V|)$, but getting a much faster algorithm is an interesting open question.

Our result opens the door to studying generalizations of the problem that have received attention in undirected graphs and special classes of directed graphs, but have been out-of-reach for general directed graphs. The most prevalent example is the \emph{$k$-longest detour} problem from parameterized algorithms. The input is a graph, specified vertices $s,t$, and a positive integer $k$, and the goal is to determine if $G$ has an $s\utov t$ path of length at least $d(s,t)+k$. The not-shortest path problem is the special case where $k=1$. The $k$-longest detour problem is known to be fixed-parameter tractable for both undirected graphs~\cite{bezakova2019finding,akmal2023faster} and directed planar graphs~\cite{fomin2023detours,hatzel2023simpler}.  Our techniques may lead to a solution to this problem in general directed graphs.

\subsection{Related Work} 

\paragraph{Exact Detours.} The \emph{$k$-exact detour} problem is the same as the above $k$-longest detour problem, except the path must be of length \emph{exactly} $d(s,t)+k$. This problem is known to be fixed-parameter tractable for general graphs, both undirected and directed~\cite{bezakova2019finding,akmal2023faster}. 

\paragraph{Paths of exact and forbidden lengths.}
The problem of finding an $s\utov t$ path of an exact given length has been studied~\cite{nykanen2002exact,salmela2016gap}, as well as the more general problem of finding paths whose length falls into a specified range~\cite{dolgui2018simple}.

\paragraph{Disjoint Paths.} The next-to-shortest path problem is closely related to disjoint path problems. Indeed, ensuring that a next-to-shortest path is simple requires the segments of the path to be disjoint from each other.  For example, the NP-completeness proof for the next-to-shortest path problem in directed graphs with non-negative edge weights~\cite{lalgudi1997computing} uses a simple reduction from the 2-disjoint paths problem~\cite{FORTUNE1980111} (given a directed graph and terminals $s_1,t_1,s_2,t_2$, find an $s_1\utov t_1$ path and an $s_2\utov t_2$ path that are disjoint). As another example, our algorithm employs a subroutine for solving the 2-disjoint paths problem in directed acyclic graphs~\cite{tholey2012linear}. 

The $k$-disjoint paths problem has been studied extensively on undirected graphs~\cite{lynch1975equivalence,middendorf1993complexity,robertson1995graph,kawarabayashi2012disjoint}, directed graphs~\cite{FORTUNE1980111}, directed acyclic graphs~\cite{fleischer2007efficient,FORTUNE1980111,slivkins2010parameterized,chitnis2023tight}, directed planar graphs~\cite{schrijver1994finding,6686155}, and undirected planar graphs~\cite{lokshtanov2020exponential,10353213,lokshtanov2020exponential,cho2023parameterized}.

There is also a wealth of work on disjoint paths problems with various objective functions such as \emph{min-sum} (minimizing the sum of the lengths of the paths), \emph{min-max}, and \emph{min-min}. Additionally, there is a literature on the \emph{disjoint shortest paths} problem where each of the paths must individually be shortest.

\subsection{Organization}
In~\Cref{sec:tech}, we provide a technical overview. \Cref{sec:prelim} is the preliminaries. In~\Cref{sec:reduction}, we present a reduction from a positively weighted graph to a special class of graphs, which we call \emph{$(s,t)$-layered} graphs.  
In~\Cref{sec:algorithm}, we state the Key Lemma (\Cref{lem:key}) without proof and, based on it, present a polynomial-time algorithm for the next-to-shortest path problem.  
Finally, in~\Cref{sec:proof-of-key-lemma}, we prove the Key Lemma.

\section{Technical Overview}\label{sec:tech}

First (\Cref{sec:reduction}) we reduce to the case of \emph{$(s,t)$-layered} graphs, which we define in~\Cref{def:layered-graph}. Informally, (1) every vertex is on some shortest $s\utov t$ path, and (2) if we partition the vertices into layers based on their distance from $s$, every edge either goes exactly one layer forward (forward-edge), or an arbitrary number of layers backward (back-edge). This reduction requires careful treatment of technical details, but it is not the main conceptual idea of the proof.

Instead, the main idea is captured by the Key Lemma (\Cref{lem:key}), which can be informally stated as follows: 
\begin{keylemma*}[Informal]
    Given an $(s,t)$-layered graph, fix a next-to-shortest path $P^*$, and let $P_0^*$ be the subpath from first to last back-edge, denoting the endpoints of $P_0^*$ as $A,B$. Then, there exists a pair of forward edges $(X',X)$, $(Y',Y)$ on the same layer as each other, so that \emph{all} pairs of vertex-disjoint shortest paths $P_1,P_2$ of the form $s \utov X' \utov X \utov A$ and $B\utov Y'\utov Y \utov t$ (respectively) are internally vertex-disjoint from $P_0^*$.
\end{keylemma*}

Intuitively, the Key Lemma is powerful because it yields a polynomial-time algorithm (\Cref{sec:algorithm}) that tries all possible choices of $A$, $B$, $(X',X)$, $(Y',Y)$, and works roughly as follows. For each choice, pick an \emph{arbitrary} pair $P_1$, $P_2$ of disjoint shortest paths of the form $s \utov X'\utov X\utov A$ and $B\utov Y'\utov Y\utov t$, respectively. It is not difficult to show that this is possible by utilizing a known linear-time algorithm for finding 2 disjoint paths in a directed acyclic graph~\cite{tholey2012linear} (building on prior work~\cite{shiloach1978finding,lucchesi1992irrelevance}). After computing $P_1,P_2$, remove their edges from the graph and take $P_0$ to be a shortest $A \utov B$ path in the remaining graph. Then return the concatenation $P_1\circ P_0\circ P_2$. The Key Lemma guarantees that it suffices to first fix $P_1,P_2$ before finding $P_0$ since \emph{all} such $P_1,P_2$ are guaranteed to be disjoint from $P_0$, so the resulting path is indeed a simple path.

Conceptually, the main part of the overall proof is to prove the Key Lemma. Towards doing so, we suppose for contradiction that $P_1 \cup P_2$ intersects $P_0^*$. Ultimately, the contradiction comes from constructing an $s\utov t$ not-shortest path $P$ that is \emph{shorter} than $P^*$, which contradicts the fact that $P^*$ is a \emph{next-to-shortest} path. To construct such a path $P$, we will stitch together segments of $P_1$, $P_2$, and $P^*$ (and other paths that we show exist). In doing so, the assumption that $P_1 \cup P_2$ intersects $P_0^*$ is important since these intersection points enable the concatenation of segments of different paths. 

In~\Cref{subsec:pdfps}, we consider how $P_1 \cup P_2$ intersects with the other parts of $P^*$ besides $P_0^*$, since such intersection points will also be useful to construct $P$: Let $P_1^*$ be the segment of $P^*$ from $s$ to $A$ and let $P_2^*$ be the segment of $P^*$ from $B$ to $t$. Intuitively, using the fact that $P_1$ and $P_2$ are \emph{shortest} paths, we conclude that neither $P_1$ nor $P_2$ can intersect $P_1^*$ and then subsequently intersect $P_0^*$ because this would constitute a ``shortcut'' along $P^*$ which would result in a not-shortest path that would be shorter than $P^*$. For a similar reason, neither $P_1$ nor $P_2$ can intersect $P_0^*$ and then subsequently intersect $P_2^*$. These properties provide structure that aids the construction of $P$.

Additionally, for technical reasons, it is important to establish \emph{multiple} intersection points between $P_1 \cup P_2$ and $P_0^*$  to provide flexibility when constructing $P$. To do so, we need to choose the pair of edges $(X',X)$, $(Y',Y)$ carefully. A key definition (in \Cref{subsec:construction}) is that $(X',X)$ and $(Y',Y)$ are defined to be edges on $P_1^*$, $P_2^*$ respectively, so that: 
\begin{enumerate}
    \item there exist $s \utov X$ and $B \utov Y$ disjoint shortest paths whose union intersects $P_0^*$, and
    \item there do \emph{not} exist $s \utov X'$ and $B \utov Y'$ disjoint shortest paths whose union intersects $P_0^*$.
\end{enumerate}

Item 1 establishes an intersection point $v$ between some $P_1 \cup P_2$ and $P_0^*$, where $v$ appears before the layer of $X,Y$. Item 2 establishes another intersection point: Consider shortest disjoint paths $s \utov X' \utov X \utov A$ and $B \utov Y' \utov Y \utov t$. If no such paths intersect $P_0^*$ then we are already done, and if they do, then the intersection point must appear \emph{after} the layer of $X,Y$ by the condition of item 2. These two intersection points, before and after the layer of $X,Y$, are critical in constructing the path $P$. 

After establishing the above properties and intersection points, in~\Cref{subsec:final} we construct $P$ by considering two cases based on whether the first point along $P_0^*$ that is shared with $P_1 \cup P_2$ appears in a layer before or after $X,Y$. For each of these two cases, we stitch together path segments in two different ways. To do so, we use both the above structural properties of how $P_1,P_2$ can intersect with $P^*$, and the above pair of intersection points between $P_1 \cup P_2$ and $P_0^*$. The final path $P$ is the concatenation of 7 or 9 path segments (in each of the two cases respectively), and is depicted in~\Cref{fig:final-1} and~\Cref{fig:final-2}.

\section{Preliminaries}\label{sec:prelim}
\paragraph{Graphs}
In this paper, we only consider directed graphs with positive edge weights. We define a graph as a triple \( G = (V, E, w) \), where \( V \) is the set of vertices, \( E \subseteq V \times V \) is the set of edges, and \( w : E \rightarrow \mathbb{R}_{>0} \) assigns a positive weight to each edge. Without loss of generality, throughout this paper, all graphs are assumed to be simple directed graphs, i.e., they contain no self-loops or multiple edges, unless stated otherwise.

When the graph is \emph{unweighted}, we treat it as a special case where all edge weights are equal to 1; that is, \( w(e) = 1 \) for all \( e \in E \). In this case, we abbreviate the graph as \( G = (V, E) \), omitting the weight function.

\paragraph{Walks and Paths}

Let \( G = (V, E, w) \) be a directed graph, where \( w : E \to \mathbb{R}_{> 0} \) assigns positive weights to edges.  
A \emph{walk} in \( G \) is a finite sequence of vertices
\[
W = (v_0, v_1, \dots, v_k)
\]
such that \( (v_{i-1}, v_i) \in E \) for all \( i = 1, \dots, k \). A walk may contain repeated vertices and edges. 

We define the following notation:
\begin{itemize}
  \item \textbf{Length (or Cost)}: $w(W):=\sum_{1\le i\le k} w(v_{i-1},v_i)$ denotes the sum of edge weights in $W$; 
  \item \( \VSet(W) := \{ v_0, v_1, \dots, v_k \} \) denotes the set of vertices appearing in the walk \( W \);
  \item \( \ESet(W) := \big\lbrace (v_0, v_1), (v_1, v_2), \dots, (v_{k-1}, v_k) \big\rbrace \) denotes the set of edges in \( W \);
\end{itemize}
For two walks \( W_1 = (u_0, u_1, \dots, u_k) \) and \( W_2 = (u'_0, u'_1, \dots, u'_\ell) \),  
if \( u_k = u'_0 \), we define their \textbf{concatenation} as
\[
W_1 \circ W_2 := (u_0, u_1, \dots, u_k, u'_1, \dots, u'_\ell),
\]
which is a walk formed by appending \( W_2 \) to \( W_1 \) at the shared vertex \( u_k = u'_0 \). In addition, we say $W_1$ and $W_2$ are \textbf{fragments} of $W_1\circ W_2$ in the context of concatenation.

A \textbf{path} is a walk in which all vertices are distinct, i.e., \( v_i \neq v_j \) for all \( i \neq j \), and hence all edges are distinct as well.

Given a path \( P = (v_0, v_1, \dots, v_k) \), a \textbf{subpath} is any contiguous subsequence of vertices.  
For two vertices \( x = v_i \) and \( y = v_j \) with \( 0 \le i < j \le k \), we write \( P_{x \to y} := (v_i, v_{i+1}, \dots, v_j) \) to denote the subpath of \( P \) from \( x \) to \( y \).
\paragraph{Distance} Consider a directed graph $G=(V,E,w)$. For any two vertices \( u, v \in V \), we call a $\stot$ path $P$ shortest iff the cost of $P$ is minimum among all paths from $s$ to $t$. We define the distance \( d_G(u, v) \) as the cost of the shortest path from $u$ to $v$, i.e.,
\[
d_G(u, v) := \min\{ w(P) : P \text{ is a path from } u \text{ to } v \}.
\]

Specially, if there is no path from $u$ to $v$, we denote $d_{G}(u,v)=\infty$. Since all edge weights are positive, the shortest-path distance function \( d_G(\cdot, \cdot) \) satisfies the triangle inequality:
\[
d_G(u,v) + d_G(v,w) \geq d_G(u,w), \quad \text{for all } u,v,w \in V.
\]
We will use this property freely throughout the paper.
\paragraph{Not-Shortest Path and Next-to-Shortest Path}

Now we show the definition of not-shortest path and next-to-shortest path.

\begin{definition}[$\stot$ Not-Shortest and Next-to-Shortest Paths]
Let $G=(V,E,w)$ be a directed graph and let $s,t\in V$. 
For an $s\utov t$ path $P$, we define:
\begin{itemize}
    \item $P$ is an $\stot$ \textbf{not-shortest path} if $P$ is not a shortest path (i.e., $w(P) > d_G(s,t)$).
    \item $P$ is an $\stot$ \textbf{next-to-shortest path} if $P$ has the minimum length among all $s\utov t$ not-shortest paths. In other words, for any $s\utov t$ path $P_1$, if $w(P_1)<w(P)$, then $w(P_1)=d_{G}(s,t)$.
\end{itemize}
\end{definition}
\begin{definition}[The Next-to-Shortest Path Problem]
In the next-to-shortest path problem, we are given a directed graph $G=(V,E,w)$ and two vertices $s,t\in V$. The objective is to find a $s\utov t$ next-to-shortest path or to return that there is no such path. 
\end{definition}

If we only consider the decision version of the next-to-shortest path problem, we can notice that there exists a next-to-shortest path if and only if there exists a not-shortest path.

To show that a walk \( W \) is a not-shortest path, it is typically necessary to verify the following two conditions:
\begin{enumerate}
  \item \( W \) contains no repeated vertices; that is, \( W \) is a simple path.
  \item The cost of \( W \) is strictly greater than the length of a $\stot$ shortest path, i.e., \( w(W) > d_G(s, t) \).
\end{enumerate}

\paragraph{Back-Edges}
Given a graph $G$ with vertices $s,t$, 
we call an edge $(u,v)\in E$ a \textbf{back-edge} if $d_{G}(s,u)+w(u,v)>d_{G}(s,v)$. Otherwise, if $d_{G}(s,u)+w(u,v)=d_{G}(s,v)$, we call $(u,v)$ a \textbf{forward-edge}. In addition, we denote $E_{B}(G)$ and $E_{F}(G)$ as the sets of back-edge and forward-edge respectively.

Similar concepts have been studied in several prior works, including \cite{DBLP:journals/networks/WuW15}. We use the terminology \emph{back-edge} because when restricting to the class of \emph{$(s,t)$-layered} graphs that we will define in \Cref{def:layered-graph}, back-edges go backwards through the layers. 

\begin{lemma}\label{lem:basic-back-edges}
In a graph $G=(V,E,w)$, an $s \utov t$ path $P$ is an $s \utov t$ not-shortest path if and only if $\ESet(P)$ contains at least one back-edge.
\end{lemma}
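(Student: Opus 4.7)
The plan is to express $w(P) - d_G(s,t)$ as a non-negative sum over the edges of $P$ in which each summand isolates whether the corresponding edge is a forward- or back-edge, so that the equivalence drops out directly.

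Concretely, write $P = (v_0, v_1, \ldots, v_k)$ with $v_0 = s$ and $v_k = t$. I will start from the trivial telescoping identity
\[
d_G(s,t) \;=\; d_G(s,v_k) - d_G(s,v_0) \;=\; \sum_{i=1}^{k}\bigl[d_G(s,v_i) - d_G(s,v_{i-1})\bigr],
\]
and subtract it from $w(P) = \sum_{i=1}^{k} w(v_{i-1},v_i)$ to obtain
\[
w(P) - d_G(s,t) \;=\; \sum_{i=1}^{k}\Bigl[d_G(s,v_{i-1}) + w(v_{i-1},v_i) - d_G(s,v_i)\Bigr].
\]
By the triangle inequality (explicitly noted in the preliminaries), each summand is $\geq 0$. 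Moreover, by the definitions stated just before the lemma, the $i$-th summand equals $0$ precisely when $(v_{i-1},v_i)$ is a forward-edge and is strictly positive precisely when $(v_{i-1},v_i)$ is a back-edge.

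Both directions of the biconditional then follow immediately. If $P$ contains at least one back-edge, the corresponding summand is strictly positive while all others are non-negative, so $w(P) > d_G(s,t)$ and $P$ is not-shortest. Conversely, if every edge of $P$ is a forward-edge, every summand vanishes, so $w(P) = d_G(s,t)$ and $P$ is a shortest path, i.e.\ not a not-shortest path. There is no real obstacle here; the only subtlety is setting up the telescoping sum so that the per-edge back-edge defect appears transparently, and this is routine.
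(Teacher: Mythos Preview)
Your proof is correct and follows essentially the same approach as the paper's: both express $w(P)-d_G(s,t)$ via a telescoping sum as $\sum_i\bigl[d_G(s,v_{i-1})+w(v_{i-1},v_i)-d_G(s,v_i)\bigr]$, observe that each term is non-negative and vanishes exactly on forward-edges, and conclude. The presentations differ only cosmetically (indexing from $0$ versus $1$, and you phrase the telescoping identity separately before subtracting).
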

\begin{proof}
Let $s\utov t$ path $P=(v_1,v_2,\ldots,v_k)$, where $k=|\VSet(P)|$. We know 
$$\begin{aligned}
    w(P) &=\sum_{i=1}^{k-1} w(v_i,v_{i+1})=\sum_{i=1}^{k-1} \left( d_{G}(s,v_{i+1})-d_{G}(s,v_{i})+\left(w(v_{i},v_{i+1})-d_{G}(s,v_{i+1})+d_{G}(s,v_i)\right)\right) \\
    &=d_{G}(s,t)+\sum_{i=1}^{k-1}\left(w(v_{i},v_{i+1})-d_{G}(s,v_{i+1})+d_{G}(s,v_i)\right).
\end{aligned}$$

Due to the definition of $d_{G}(\cdot,\cdot)$, we know $w(v_{i},v_{i+1})-d_{G}(s,v_{i+1})+d_{G}(s,v_i)\ge 0$ for all $1\le i<k$. Hence, $w(P)>d_{G}(s,t)$ if and only if there exists some $1\le j<k$, $w(v_{j},v_{j+1})-d_{G}(s,v_{j+1})+d_{G}(s,v_j)>0$, which means that $(v_{j},v_{j+1})$ is a back-edge.
\end{proof}
\paragraph{Vertex Disjoint Paths} Recall that the \( k \)-Vertex Disjoint Paths ($k$-VDP) problem asks for \( k \) vertex-disjoint paths in a graph, each connecting a specified pair of terminals.

If we only consider the decision version of next-to-shortest path (i.e., determine if there exists a next-to-shortest path), an intuitive approach is to enumerate all back-edges $(u,v)\in E$ and check if there exist 2 vertex-disjoint paths, one from $s$ to $u$ and the other from $v$ to $t$. Unfortunately, 2-VDP is NP-hard by \cite{FORTUNE1980111}.

For the $2$-VDP problem in a directed acyclic graph, we have the following lemma: 

\begin{lemma}[\cite{tholey2012linear}] \label{lem:DAG-VDP}
Given a directed acyclic graph $G$ with $n$ vertices and $m$ edges, and two vertex pairs $(s_1, t_1)$ and $(s_2, t_2)$, there exists an algorithm that, in $O(n + m)$ time, outputs vertex-disjoint paths from $s_1$ to $t_1$ and from $s_2$ to $t_2$, if such a pair of paths exists.
\end{lemma}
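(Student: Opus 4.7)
This lemma is cited from Tholey~\cite{tholey2012linear}, so the paper itself will almost certainly give no proof beyond the citation. Nevertheless, if I had to reconstruct a proof, here is how I would proceed.

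The plan is to start from the classical Fortune--Hopcroft--Wyllie product-graph construction, which already gives a polynomial (not linear) algorithm. Build an auxiliary DAG $G'$ whose vertex set is $\{(u,v) : u,v \in V, u \ne v\}$, with an edge $(u,v) \to (u',v)$ whenever $(u,u') \in E$ and an edge $(u,v) \to (u,v')$ whenever $(v,v') \in E$. A straightforward argument shows that vertex-disjoint paths $s_1 \utov t_1$ and $s_2 \utov t_2$ exist in $G$ if and only if $(t_1,t_2)$ is reachable from $(s_1,s_2)$ in $G'$; acyclicity of $G$ guarantees that $G'$ is itself a DAG, so a single topological pass decides reachability. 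The paths themselves can then be recovered by standard backtracking along the witness in $G'$. This already establishes the existence of a polynomial-time algorithm, and is morally all we use in the rest of the paper.

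To push the running time down to $O(n+m)$, the naive product graph is far too expensive: it has $\Theta(n^2)$ vertices and $\Theta(nm)$ edges. My plan would be to process the vertices of $G$ in topological order and, for each vertex $v$, maintain only a succinct summary of ``which configurations $(v, u)$ are reachable from $(s_1, s_2)$,'' arguing a strong domination/subsumption property so that the set of states kept per vertex is essentially constant amortized. Combined with a careful case analysis of how two disjoint paths in a DAG can interleave (roughly, one path is always ``ahead'' in topological order, and the ways a temporarily ``behind'' path can catch up are highly restricted), one should be able to charge each update to an edge of $G$ exactly once.

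The hard part is precisely the domination argument in the second step: showing that one does not need to remember an unboundedly growing frontier of states. This is where Tholey's technical contribution lives, and it is essentially a delicate structural theorem about $2$-linkages in DAGs rather than a clever data structure. For the purposes of \Cref{thm:main}, however, we only need the \emph{existence} of such a linear-time routine as a black box, so in our paper it is enough to invoke~\cite{tholey2012linear} directly; any polynomial-time $2$-VDP subroutine for DAGs would suffice to preserve a polynomial overall running time, and one could even fall back on the Fortune--Hopcroft--Wyllie algorithm without affecting \Cref{thm:main} qualitatively.
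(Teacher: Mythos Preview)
Your assessment is correct: the paper gives no proof of \Cref{lem:DAG-VDP} beyond the citation to~\cite{tholey2012linear}, exactly as you anticipated. Your reconstruction sketch is a reasonable summary of the underlying ideas, and your observation that any polynomial-time $2$-VDP routine for DAGs would suffice qualitatively for \Cref{thm:main} is also accurate.
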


\section{Reduction to (s,t)-Layered Graphs}\label{sec:reduction}

In this section, we reduce the next-to-shortest path problem on directed graphs with positive edge weights to a special class of digraphs which we called \textbf{$(s,t)$-layered} graphs. 

\begin{definition}[$(s,t)$-Layered Graph] \label{def:layered-graph}
Let $G=(V,E,w)$ be a positively weighted directed graph, and let $s,t\in V$. We say that $G$ is an \textbf{$(s,t)$-layered graph} if the following three conditions hold:
\begin{enumerate}
    \item For every vertex $u\in V$, $d_{G}(s,u)+d_{G}(u,t)=d_{G}(s,t)$. In other words, every vertex lies on at least one shortest $s\to t$ path.
    \item For every edge $(u,v)\in E$, $d_{G}(s,u)\not =d_{G}(s,v)$.
    \item For every edge $(u,v)\in E$ with $d_{G}(s,u) < d_{G}(s,v)$, we have that for all $x\in V$,
    \[
        d_{G}(s,x)\le d_{G}(s,u) \quad \text{or} \quad d_{G}(s,x)\ge d_{G}(s,u)+w(u,v).
    \]
\end{enumerate}
\end{definition}

Here we provide some intuition behind \Cref{def:layered-graph}. 
In an $(s,t)$-layered graph $G$, the vertex set $V$ admits a natural partition into layers according to the distance from $s$. Every forward edge goes only from one layer to the next, but never skips layers. In addition, each back-edge goes strictly backwards through the layers. (These structural properties are not guaranteed in a general directed graph. For instance, if we take an arbitrary directed graph and partition the vertices into layers by their distance from $s$, then the back-edges may not go backwards across the layers.) For more details about the intuition, see~\Cref{lem:layered-lambda} in~\Cref{sec:algorithm}.
Throughout this paper, when we refer to an $(s,t)$-layered (or $(s,t)$-straight, defined later) graph, it is implicitly understood that the next-to-shortest path problem under consideration is the problem of finding an $s\utov t$ next-to-shortest path (not merely deciding its existence). In order to analyze the complexity, we further introduce the following concepts:
\begin{itemize}
\item $V_{B}(G)=\lbrace u\in V:\exists v\in V, \,\, (u,v)\in E_B(G)\text{ or } (v,u)\in E_B(G)\rbrace$ is the set of vertices incident to back-edges.
\item $E_{FP}(G)=\lbrace ((u_1,v_1),(u_2,v_2))\in E_F(G)\times E_F(G): d_{G}(s,u_1)=d_{G}(s,u_2)\rbrace$ is the set of pairs of same-layer forward-edges. 
\end{itemize}
Here is the main theorem in this section:

\begin{theorem}\label{thm:reduction}
Assume that:
\begin{itemize}
\item There exists a $O\left(|V_{B}(G_L)|^2|E_{FP}(G_L)|\left(|E(G_L)|+|V(G_L)|\log |V(G_L)|\right)\right)$-time algorithm that can solve the next-to-shortest path problem on any $(s,t)$-layered graph $G_L$.
\end{itemize}
Then 
\begin{itemize}
\item There exists a $O(|V|^4|E|^3\log |V|)$-time algorithm that can solve the next-to-shortest path problem on any positively weighted directed graph $G=(V,E,w)$.
\end{itemize}
\end{theorem}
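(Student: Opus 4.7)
The plan is to reduce the next-to-shortest path problem on a general positively-weighted directed graph $G=(V,E,w)$ to a polynomial number of instances on $(s,t)$-layered graphs, dispatching each by the assumed subroutine. The reduction combines a local structural modification that enforces conditions~2 and~3 of Definition~\ref{def:layered-graph} with an enumeration that ensures condition~1 can be met without losing the target path. First I would preprocess by running Dijkstra from $s$ and reverse Dijkstra from $t$ in $O(|E|+|V|\log|V|)$ time, which yields $d_G(s,\cdot)$, $d_G(\cdot,t)$, a classification of each edge as forward or back, and identification of $V_B(G)$ and $E_{FP}(G)$.

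To enforce conditions~2 and~3, I would subdivide each forward edge $(u,v)$ by inserting fresh intermediate vertices at every layer level $\ell\in\{d_G(s,x):x\in V\}$ with $d_G(s,u)<\ell<d_G(s,v)$, distributing $w(u,v)$ along the chain; same-distance edges are subdivided once into two positive-weight pieces. Since subdivision vertices touch only forward edges, $V_B$ is unchanged and the back-edge structure is preserved. Condition~1 (every vertex on some shortest $s\to t$ path) is the main obstruction, since a next-to-shortest path may pass through vertices $v$ with $d_G(s,v)+d_G(v,t)>d_G(s,t)$. I handle this by enumeration: by Lemma~\ref{lem:basic-back-edges}, any next-to-shortest path $P^*$ has a first and last back-edge, so it decomposes as $P^*=P_1^*\circ P_0^*\circ P_2^*$ with $P_1^*$ a shortest $s\to A$ path, $P_2^*$ a shortest $B\to t$ path, and $P_0^*$ an $A\to B$ subpath beginning and ending with back-edges. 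For each pair $(A,B)\in V\times V$, I would build a tailored graph $G_L^{A,B}$ by gluing together the shortest-path subgraph from $s$ to $A$, a restricted detour subgraph on which $P_0^*$ is allowed to live, and the shortest-path subgraph from $B$ to $t$, carefully augmented (e.g., by contraction of prefix/suffix trees or by adding auxiliary edges) so that every vertex in $G_L^{A,B}$ lies on some shortest $s\to t$ path of $G_L^{A,B}$ and conditions~2,~3 continue to hold. I then call the assumed subroutine on each $G_L^{A,B}$ and return the minimum-weight not-shortest path across the $O(|V|^2)$ calls.

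The complexity budget of $O(|V|^4|E|^3\log|V|)$ would follow from multiplying the $O(|V|^2)$ enumerations by the subroutine cost, noting that the subdivision only affects forward-edge vertices and thus keeps $|V_B(G_L^{A,B})|$ polynomially bounded, while $|E_{FP}(G_L^{A,B})|$ and the size of $G_L^{A,B}$ remain polynomial in $|V|$ and $|E|$.

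The main obstacle will be the construction and correctness of $G_L^{A,B}$. I must establish both completeness — every $s\to t$ next-to-shortest path of $G$ is preserved as a next-to-shortest path in at least one $G_L^{A,B}$ — and soundness — each $G_L^{A,B}$ genuinely satisfies all three conditions of Definition~\ref{def:layered-graph}, so that the subroutine's output lifts back to a valid not-shortest path of $G$ of the same weight. Enforcing condition~1 in the glued graph while retaining the detour vertices that lie outside the shortest-path subgraph of $G$ is the most delicate step, and will likely require a careful re-weighting or compression of the prefix/suffix layers whose weight-preservation for \emph{all} not-shortest paths (not just shortest ones) must be verified. A secondary technical challenge is keeping $|V_B|$ and $|E_{FP}|$ under control after subdivision, since the subroutine's complexity is sensitive to these two parameters rather than to $|V|$ and $|E|$ directly.
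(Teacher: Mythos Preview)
Your proposal has a genuine gap in enforcing condition~1 of Definition~\ref{def:layered-graph}. You correctly identify that the detour portion $P_0^*$ may pass through vertices $v$ with $d_G(s,v)+d_G(v,t)>d_G(s,t)$, but your proposed fix---enumerating $(A,B)$ and building a ``tailored'' graph $G_L^{A,B}$ via ``contraction of prefix/suffix trees or adding auxiliary edges''---is not an actual construction. The core difficulty is that \emph{every} vertex of $G_L^{A,B}$, including those detour vertices, must lie on a shortest $s\to t$ path of $G_L^{A,B}$; you give no mechanism for this, and re-weighting to force it would in general distort the lengths of the very not-shortest paths you are trying to preserve. The paper's idea here is entirely different and is the heart of the reduction: it removes an off-shortest-path vertex $u$ one at a time, replacing it with shortcut edges $(x,y)$ of weight $w(x,u)+w(u,y)$ for each $x\in N^{in}(u)$, $y\in N^{out}(u)$ (Definition~\ref{def:shortcut-edge}), proves that this preserves all pairwise distances (Lemma~\ref{lem:map-paths}), and handles separately the one delicate case where the next-to-shortest path uses $u$ but the shortcut coincides with an existing shorter edge (step~\ref{step:nextsp} of $\nsp$, justified by Lemma~\ref{lem:G-to-G'}). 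This vertex-smoothing trick is precisely the missing ingredient.

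A second, smaller gap: your plan to ``subdivide same-distance edges once into two positive-weight pieces'' does not work. Such an edge $(u,v)$ is a back-edge, and inserting a fresh vertex $z$ on it yields $d(s,z)=d(s,u)+w(u,z)>d(s,u)=d(s,v)$, so $z$ is not on any shortest $s\to t$ path and you have just re-created a violation of condition~1. The paper instead \emph{removes} any back-edge that violates conditions~2--3, after explicitly testing whether the obvious path through that edge is itself a candidate answer (step~\ref{step:nsps-2} of $\nsps$); only genuine forward edges are subdivided. Getting the order right---first smooth to an $(s,t)$-straight graph, then massage forward/back edges to reach an $(s,t)$-layered graph---is what makes the running-time bookkeeping in Lemmas~\ref{lem:reduce-to-straight} and~\ref{lem:straight-to-layered} go through.
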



To prove this theorem, we proceed in two steps. First, we reduce the original problem to the next-to-shortest path problem restricted to graphs that satisfy Property~(1) of \Cref{def:layered-graph}, namely that every vertex lies on some shortest $\stot$ path; we refer to such graphs as \textbf{$(s,t)$-straight graphs}. Second, we further reduce the next-to-shortest path problem on $(s,t)$-straight graphs to the case of $(s,t)$-layered graphs. 

\subsection{Reducing to \texorpdfstring{$(s,t)$}{(s,t)}-Straight Graphs}
In this subsection, we reduce the original problem to the next-to-shortest path problem restricted to \textbf{$(s,t)$-straight graphs}.
The concept of $(s,t)$-straight graphs follows the terminology introduced by Berger, Seymour, and Spirkl~\cite{berger2021finding}. 
In that work, the authors studied induced paths in unweighted undirected graphs and showed that the problem of finding an induced not-shortest $(s,t)$-path can be reduced to the same problem on (unweighted, undirected) $(s,t)$-straight graphs. 
In this section, we apply a similar, though more involved, argument to reduce the next-to-shortest path problem (on positively weighted, directed graphs) to the same problem on (positively weighted, directed) $(s,t)$-straight graphs.

In this subsection, we are given a positively weighted directed graph $G$ and $s,t\in V$. To show the reduction, assume we have a polynomial-time algorithm $\nsps(G;s,t)$ that can solve the next-to-shortest path problem when $G$ is an $(s,t)$-straight graph. Based on such an oracle, we present a recursive algorithm $\nsp(G;s,t)$  that solves the next-to-shortest path problem when $G$ is any positively weighted directed graph. In particular, $\nsp(G;s,t)$ (or $\nsps(G;s,t)$) outputs a next-to-shortest path if one exists, or $\bot$ otherwise. 
In the second case, where there is no next-to-shortest path, we define the weight $w$ of the output as $w(\bot)=+\infty$. 

If $G$ is already an $(s,t)$-straight graph, then we call $\nsps(G;s,t)$. Otherwise, we can find some $u\in V$ that $d_{G}(s,u)+d_{G}(u,t)>d_{G}(s,t)$. 
If $d_{G}(s,u)=\infty$ or $d_{G}(u,t)=\infty$, we can simply output $\nsp(G\lbrack V\setminus \lbrace u\rbrace\rbrack;s,t)$ since $u$ doesn't appear in any $s\utov t$ path. 

Therefore, we can assume that $d_{G}(s,t)<d_{G}(s,u)+d_{G}(u,t)<\infty$. 
Let $N^{in}(u)$ and $N^{out}(u)$ be the in-neighbors and the out-neighbors of $u$, respectively, i.e., 
$$\begin{aligned}
N^{in}(u)=\left\lbrace v\in V: (v,u)\in E\right\rbrace,&&
N^{out}(u)=\left\lbrace v\in V:(u,v)\in E\right\rbrace.\end{aligned}$$

We would like to construct a graph $G'=(V',E',w')$ that does not contain $u$, since $u$ violates the definition of $(s,t)$-straight. To do so, we will replace $u$ with the set of all edges from its in-neighborhood to its out-neighborhood with appropriate weights (omitting self-loops), as shown in \Cref{fig:reduction}.
Formally, denote $N^{\times}(u)=(N^{in}(u)\times N^{out}(u))\setminus \lbrace (x,x):x\in V\rbrace$.
\begin{definition}\label{def:G'}
We construct $G'$ from $G$ by removing vertex $u$ and adding edges connecting the neighbors $N^{in}(u)$ and $N^{out}(u)$. In particular, $V'=V\setminus \lbrace u\rbrace$, $E'=\lbrace (x,y)\in E:x\not =u, y\not =u\rbrace\cup N^{\times}(u)$, and for $(x,y)\in E'$, we set

$$w'(x,y)=\left\lbrace \begin{aligned}
&w(x,y) && \text{if} && (x,y)\in E \text{ and } (x,y)\not \in N^{\times}(u);\\
&w(x,u)+w(u,y) && \text{if} && (x,y)\not\in E \text{ and } (x,y)\in N^{\times}(u);\\
&\min(w(x,y),w(x,u)+w(u,y)) && \text{if} && (x,y)\in E \text{ and } (x,y) \in N^{\times}(u).
\end{aligned}\right.$$
\end{definition}

\begin{figure}[h!]
    \centering
    \begin{subfigure}[t]{0.45\linewidth}
        \centering
        \includegraphics[width=\linewidth,
            trim=2cm 2cm 3cm 1.2cm, clip]{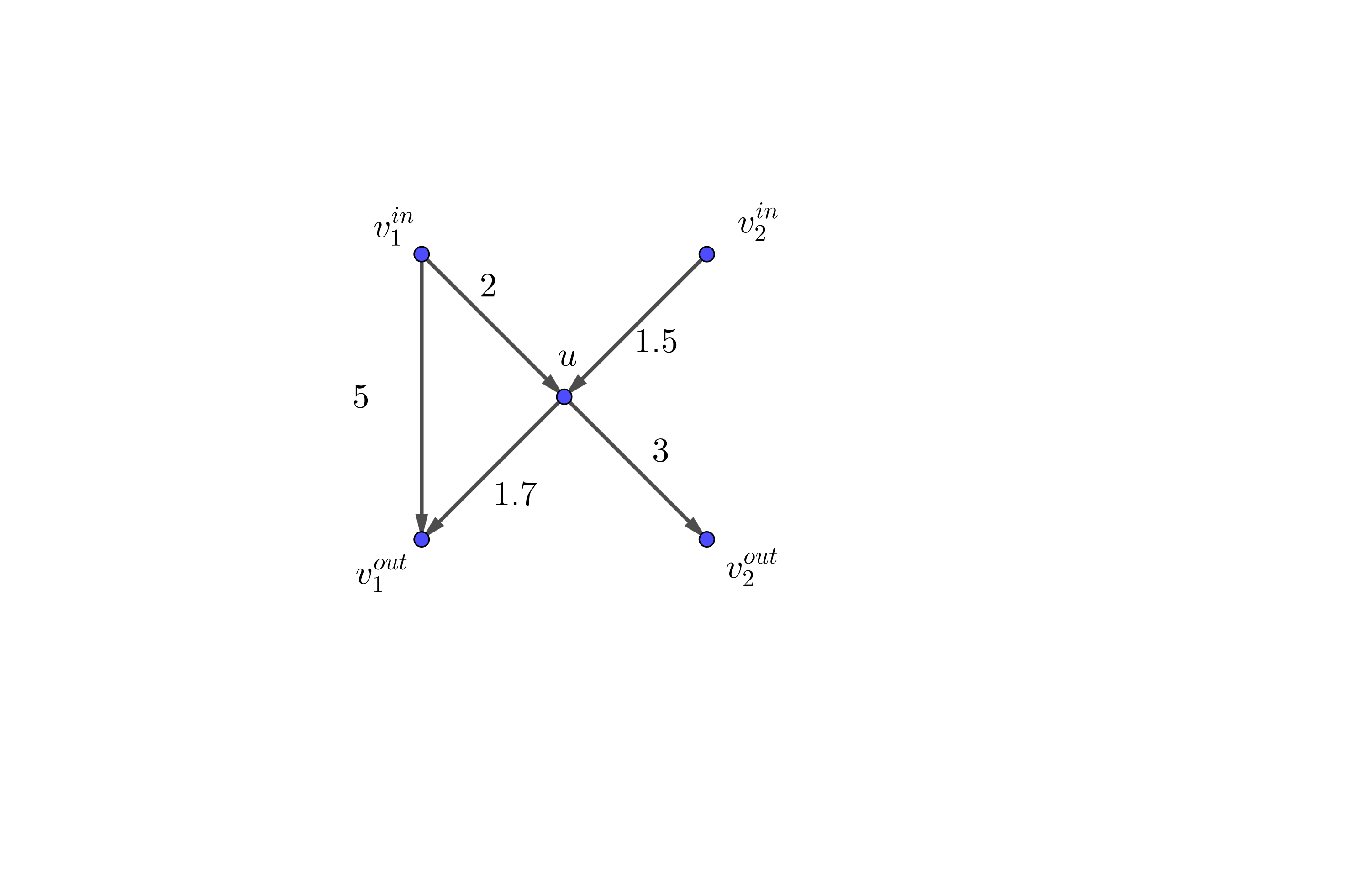}
        \caption{A subgraph of $G$.}
        \label{fig:reduction-1}
    \end{subfigure}%
    \hfill
    \begin{subfigure}[t]{0.45\linewidth}
        \centering
        \includegraphics[width=\linewidth,
            trim=2cm 2cm 3cm 1.2cm, clip]{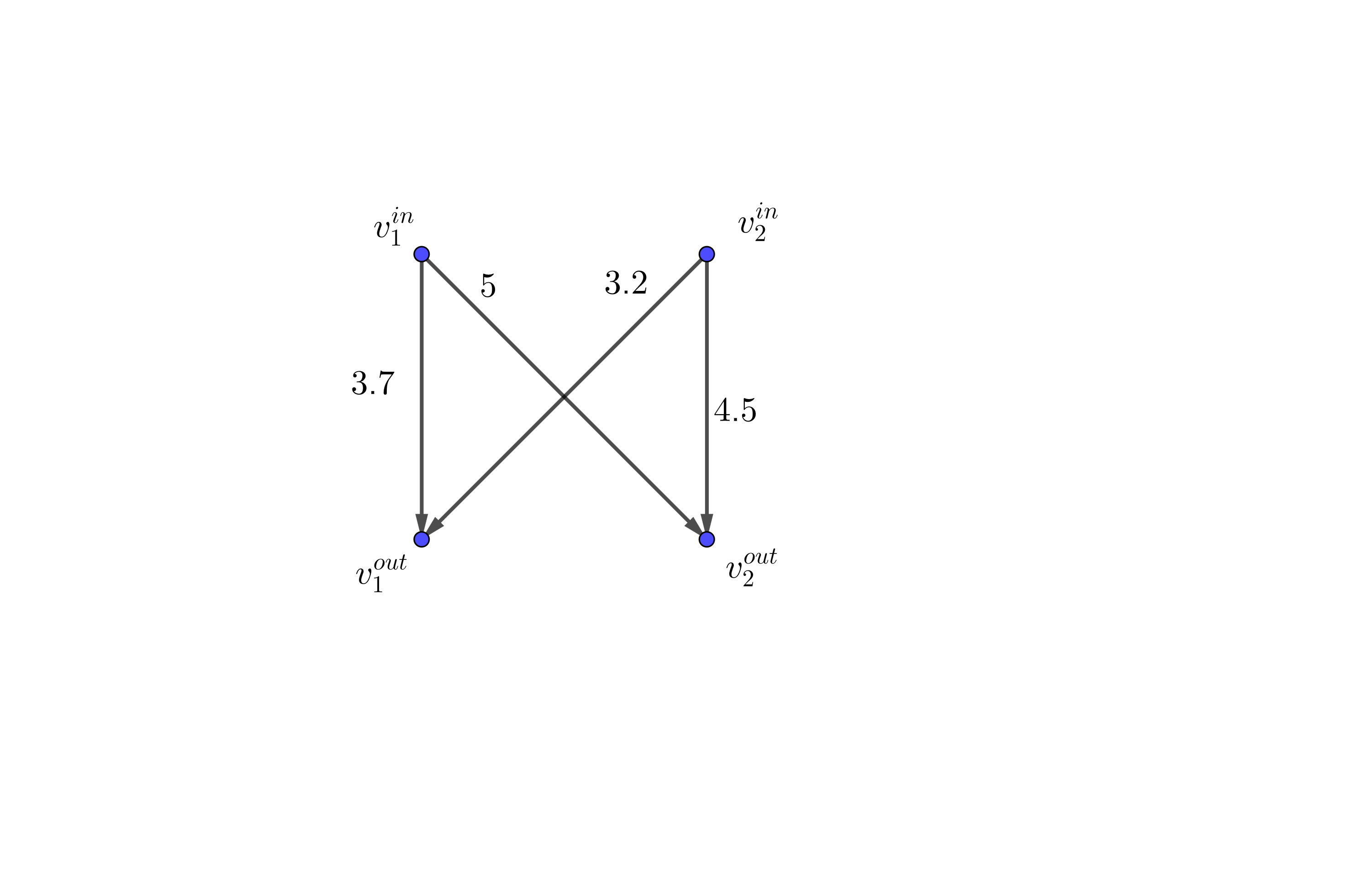}
        \caption{The modified subgraph in $G'$.}
        \label{fig:reduction-2}
    \end{subfigure}
    \caption{Illustration of the reduction: (a) a subgraph of $G$, and (b) the corresponding subgraph in $G'$ obtained by modification.}
    \label{fig:reduction}
\end{figure}

\begin{definition}[Shortcut Edge] \label{def:shortcut-edge}
Call an edge $(v^{in},v^{out})\in N^{\times}(u)$ a \textbf{shortcut edge} if 
\[
(v^{in},v^{out})\notin E \ \text{or}\ w'(v^{in},v^{out})<w(v^{in},v^{out}).
\]
\end{definition}


There is a natural correspondence between paths in $G$ and paths in $G'$, described in the following two definitions: 
\begin{definition}[$\pi$]
For $x,y\in V'$ and an $x\utov y$ path $P$ in $G$ (or $P=\bot$), define $\pi(P)$ as follows:
\begin{itemize}
    \item If $P$ does not contain $u$, then $\pi(P) = P$.
    \item If $P$ contains $u$, let $v^{in}$ and $v^{out}$ denote the predecessor and successor of $u$ on $P$, respectively. 
    Then $\pi(P)$ is the path obtained by replacing the subpath $(v^{in},u,v^{out})$ with the single edge $(v^{in},v^{out})$. 
\end{itemize}
\end{definition}


\begin{definition}[$\pi'$] \label{def:pi-prime}
For $x,y\in V'$ and an $x\utov y$ path $P'$ in $G'$ (or $P'=\bot$), define $\pi'(P')$ as follows: 
\begin{itemize}
\item If $P'$ contains no shortcut edges, define $\pi'(P')=P'$. 
\item If $P'$ contains at least one shortcut edge, let
$(v^{in}_1,v^{out}_1)$ and $(v^{in}_2,v^{out}_2)$ be the first and last shortcut edges
traversed by $P'$ (in order along $P'$). Define
\[
\pi'(P') := P'_{x\to v^{in}_1} \circ R \circ P'_{v^{out}_2 \to y},
\]
where $R=(v^{in}_1,u,v^{out}_2)$.
\ignore{
\[
R \;=\;(v^{in}_1,u,v^{out}_2).
\]
Writing $P'_{a\to b}$ for the subpath of $P'$ from $a$ to $b$, and using $\circ$ for path concatenation, define
\[
\pi'(P') \;=\; P'_{x\to v^{in}_1} \;\circ\; R \;\circ\; P'_{v^{out}_2 \to y}.
\]}
\end{itemize}
\end{definition}

Now we will prove a lemma relating the paths and distances in $G$ and $G'$:
\begin{lemma}\label{lem:map-paths}
For any $x,y\in V'$, we have the following properties:
\begin{enumerate}[label=(\alph*)]
\item For any $x\utov y$ path $P$ in $G$, $\pi(P)$ is a $x\utov y$ path in $G'$ and $w'(\pi(P))\le w(P)$.
\item For any $x\utov y$ path $P'$ in $G'$, $\pi'(P')$ is a $x\utov y$ path in $G$ and $w(\pi'(P'))\le w'(P')$.
\item $d_{G}(x,y)=d_{G'}(x,y)$.
\end{enumerate}
\end{lemma}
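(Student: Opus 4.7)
The plan is to prove the three parts in order, with (c) following almost immediately from (a) and (b).

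For part (a), I would case-split on whether $P$ contains $u$. If $u \notin \VSet(P)$, then $\pi(P) = P$, so $\pi(P)$ is clearly an $x\utov y$ walk; the vertices remain distinct so it is a path; and every edge $e$ of $P$ satisfies $w'(e) \le w(e)$ (equality unless $e \in N^{\times}(u)$, in which case $w'(e)$ is the $\min$). If $u \in \VSet(P)$, then $u$ appears exactly once (since $P$ is a path), with predecessor $v^{in}$ and successor $v^{out}$; we have $v^{in} \ne v^{out}$ (both are on the path $P$ and are distinct), so $(v^{in}, v^{out}) \in N^{\times}(u) \subseteq E'$. The vertex sequence of $\pi(P)$ is the sequence of $P$ with $u$ deleted, so vertices remain distinct. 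The weight check reduces to $w'(v^{in}, v^{out}) \le w(v^{in}, u) + w(u, v^{out})$, which is immediate from the definition of $w'$ on $N^{\times}(u)$.

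For part (b), again case-split on whether $P'$ contains a shortcut edge. If not, then every edge $e$ of $P'$ lies in $E$ with $w(e) = w'(e)$ (any edge in $N^{\times}(u) \cap E$ that is not a shortcut edge has $w'(e)=w(e)$ by definition, and any edge in $E' \setminus E$ would be a shortcut edge), so $\pi'(P') = P'$ is a path in $G$ with identical weight. Otherwise, let $(v^{in}_1, v^{out}_1)$ and $(v^{in}_2, v^{out}_2)$ be the first and last shortcut edges. The crucial observation is that for \emph{any} shortcut edge $(v^{in}_i, v^{out}_i)$, we in fact have $w'(v^{in}_i, v^{out}_i) = w(v^{in}_i, u) + w(u, v^{out}_i)$ (the shortcut definition rules out $w'(\cdot) = w(\cdot)$ on $N^\times(u) \cap E$, leaving the sum case). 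To verify that $\pi'(P') = P'_{x \to v^{in}_1} \circ R \circ P'_{v^{out}_2 \to y}$ is a path, note that $v^{in}_1$ strictly precedes $v^{out}_2$ on $P'$, so the two subpaths are vertex-disjoint (since $P'$ is a simple path), and $u \notin V'$ means the internal vertex of $R$ does not appear elsewhere. For the weight bound, the two outer subpaths contain no shortcut edges (by the first/last choice), so their $w$-weights equal their $w'$-weights, and the task reduces to showing $w(R) \le w'(P'_{v^{in}_1 \to v^{out}_2})$. Using the shortcut identity at both endpoints,
\[
w'(P'_{v^{in}_1 \to v^{out}_2}) \;\ge\; w'(v^{in}_1, v^{out}_1) + w'(v^{in}_2, v^{out}_2) \;=\; w(v^{in}_1,u) + w(u,v^{out}_1) + w(v^{in}_2, u) + w(u, v^{out}_2),
\]
which is at least $w(v^{in}_1, u) + w(u, v^{out}_2) = w(R)$ since edge weights are positive.

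Part (c) is then a two-line argument: applying (a) to a shortest $x\utov y$ path in $G$ gives $d_{G'}(x,y) \le d_G(x,y)$, and applying (b) to a shortest $x\utov y$ path in $G'$ gives $d_G(x,y) \le d_{G'}(x,y)$; the $\infty$ cases follow by contrapositive from the same two statements (existence of a path in one graph yields existence in the other).

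The main place where care is needed is the weight comparison in (b): the naive bookkeeping on $P'$ could include arbitrarily many shortcut edges between $v^{out}_1$ and $v^{in}_2$, and it might not be obvious why collapsing the whole block to $R = (v^{in}_1, u, v^{out}_2)$ doesn't cost weight. The point that unlocks it is the identity $w'(\text{shortcut edge}) = w(\cdot, u) + w(u, \cdot)$, which lets the two extreme shortcut edges alone already pay for $R$, with all middle terms as free slack.
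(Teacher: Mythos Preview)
Your approach is essentially the same as the paper's: case-split on whether $u$ (resp.\ a shortcut edge) appears, and use the definition of $w'$ on $N^{\times}(u)$ to bound the replaced segment. Part (c) is identical.

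There is one small slip in part (b). Your displayed inequality
\[
w'(P'_{v^{in}_1 \to v^{out}_2}) \;\ge\; w'(v^{in}_1, v^{out}_1) + w'(v^{in}_2, v^{out}_2)
\]
is false when $P'$ has exactly one shortcut edge, since then $(v^{in}_1,v^{out}_1)=(v^{in}_2,v^{out}_2)$ and the right-hand side double-counts that edge. The paper handles this by treating the single-shortcut-edge case separately: there $w'(P'_{v^{in}_1\to v^{out}_1}) = w'(v^{in}_1,v^{out}_1) = w(v^{in}_1,u)+w(u,v^{out}_1) = w(R)$ directly, using exactly the identity you already isolated. Your inequality is valid only once there are at least two distinct shortcut edges (so that the path from $v^{in}_1$ to $v^{out}_2$ genuinely contains both as separate edges). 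With that one-line case distinction added, your argument is complete and matches the paper's.
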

\begin{proof}
\mbox{}\\
\noindent \emph{Proof of (a).} Since $u\not \in V'$, the mapping $\pi(P)$ does not change the endpoints of $P$. 
By~\Cref{def:G'}, $w'(a,b)\le w(a,b)$ for every edge $(a,b)\in E$ with $a,b\in V'$. If $P$ does not go through $u$, then $w'(\pi(P))\le w(P)$. Otherwise, let $v^{in}$ and $v^{out}$ denote, respectively, the predecessor and successor of $u$ on $P$. We know that $\pi(P)$ is obtained from $P$ by removing $u$ and adding $(v^{in},v^{out})$.  Since $w'(v^{in},v^{out})\le w(v^{in},u)+w(u,v^{out})$, $w'(\pi(P))\le w(P)$.

\noindent\emph{Proof of (b).} It is clear that $\pi'(P')$ is also a $x\utov y$ path in $G$. Assume that $P'$ goes through $k$ shortcut edges $(v^{in}_1,v^{out}_1),\ldots,(v^{in}_k,v^{out}_k)$. 
\begin{itemize}
    \item If $k=0$, then it is clear that $w(\pi'(P'))=w'(P')$.
    \item If $k\ge 1$, since $(v^{in}_i,v^{out}_i)$ are shortcut edges (\Cref{def:shortcut-edge}) for $1\le i\le k$, we have
    \[
    w'(v^{in}_i,v^{out}_i)=w(v^{in}_i,u)+w(u,v^{out}_i).
    \]
    Hence
    \[\begin{aligned}
    w'\!\left(P'_{v^{in}_1\to v^{out}_k}\right)&=\sum_{1\le i\le k} w'(v^{in}_i,v^{out}_i)
    = \sum_{1\le i\le k} \left(w(v^{in}_i,u)+w(u,v^{out}_i)\right)\\
    &\ge w(v^{in}_1,u)+w(u,v^{out}_k).
    \end{aligned}
    \]
    Therefore $w(\pi'(P'))\le w'(P')$.
\end{itemize}

\noindent \emph{Proof of (c).} We consider the shortest path from $x$ to $y$ ($x,y\in V'$). (a) implies $d_{G'}(x,y)\le d_{G}(x,y)$, and (b) implies $d_{G}(x,y)\le d_{G'}(x,y)$. Hence $d_{G'}(x,y)=d_{G}(x,y)$.

\end{proof}

\begin{lemma}\label{lem:no-new-forward}
For any forward-edge $(x,y)\in E(G')$,  if $y$ lies on some shortest $s\utov t$ path in $G'$, then $(x,y)$ is also a forward-edge in $G$. That is,
\[\begin{aligned}
&\{(x,y)\in E_F(G') : d_{G'}(s,y)+d_{G'}(y,t)=d_{G'}(s,t)\}\\
&\subseteq
\{(x,y)\in E_F(G) : d_G(s,y)+d_G(y,t)=d_G(s,t)\}.
\end{aligned}
\]
\end{lemma}

\begin{proof}
Consider any $(x,y)\in E_F(G')$ such that
\[
d_{G'}(s,y)+d_{G'}(y,t)=d_{G'}(s,t).
\]
By~\Cref{lem:map-paths}(c), $d_G(s,x)=d_{G'}(s,x)$ and $d_G(s,y)=d_{G'}(s,y)$.

\begin{itemize}

\item If $(x,y)$ is not a shortcut edge, then $w'(x,y)=w(x,y)$, and hence
\[
d_G(s,x)+w(x,y)=d_G(s,y).
\]
Thus $(x,y)\in E_F(G)$.

\item Otherwise, $(x,y)$ is a shortcut edge and
\[
w'(x,y)=w(x,u)+w(u,y).
\]
Hence
\[
d_{G'}(s,x)+w'(x,y)
= d_G(s,x)+w(x,u)+w(u,y)
\ge d_G(s,u)+w(u,y).
\]

Since $d_G(s,u)+d_G(u,t)>d_G(s,t)$ and
$d_G(s,y)+d_G(y,t)=d_G(s,t)$,
it follows that
\[
d_{G'}(s,x)+w'(x,y) > d_{G'}(s,y).
\]

This contradicts the assumption that $(x,y)\in E_F(G')$.
\end{itemize}
\end{proof}

Given an oracle $\nsps(G;s,t)$, the algorithm is formally defined in the algorithm block below. The idea of this algorithm is to recursively call $\nsp(G';s,t)$. However, it would not be accurate to directly return the output. The reason is that $u$ could appear in the only next-to-shortest path $P$ in $G$, but the corresponding path $\pi(P)$ in $G'$ may be a \emph{shortest} path, since the weight of an edge $(x,y)\in E\cap N^{\times}(u)$ is defined as $\min(w(x,y),w(x,u)+w(u,y))$. 
For this reason, in such cases, we explicitly compute these candidate next-to-shortest paths that are not preserved in $G'$, in step~\ref{step:nextsp} of the algorithm block below.

\begin{mdframed}
\vspace{-1em}
\paragraph{Algorithm $\nsp(G;s,t)$}\mbox{}\\
\textit{Input:} A positively weighted directed graph $G=(V,E,w)$, and two vertices $s,t\in V$ \\
\textit{Output:} A $s\utov t$ next-to-shortest path if one exists, or report $\bot$ otherwise.
\begin{enumerate}
\item If $G$ is an $(s,t)$-straight graph, then return $\nsps(G;s,t)$. \label{step:boundary-1}
\item Select a vertex $u\in V$ such that $d_{G}(s,u)+d_{G}(u,t)\not =d_{G}(s,t)$. 
\item If $d_{G}(s,u)=\infty$ or $d_{G}(u,t)=\infty$, then return $\nsp(G\lbrack V\setminus \lbrace u\rbrace\rbrack;s,t)$. \label{step:boundary-2}
\item Construct $G'$ based on $u$ and initialize $\hat{P} \gets \pi'(\nsp(G';s,t))$. 
\item For each edge $(x,y)\in E\cap N^{\times}(u)$ with $w(x,y)<w(x,u)+w(u,y)$, do: \label{step:nextsp}
\begin{enumerate}
    \item If there exists an $s\utov t$ shortest path in $G$ that traverses $(x,y)$, 
          construct a path $Q$ by replacing $(x,y)$ with $(x,u),(u,y)$. 
          (Since $u$ cannot appear on any $s \utov t$ shortest path, 
          the resulting path $Q$ is not a shortest path.)
    \item Otherwise, set $Q \gets \bot$.
    \item If $w(Q)<w(\hat{P})$, then update $\hat{P} \gets Q$. 
\end{enumerate}
\item Return $\hat{P}$.
\end{enumerate}
\end{mdframed}

\begin{lemma}\label{lem:G-to-G'}
We have the following properties: 
\begin{enumerate}[label=(\alph*)]
\item If there exists a next-to-shortest path $P$ in $G$, then either $d_{G'}(s,t)<w'(\pi(P))=w(P)$, or $\nsp(G;s,t)$ 
finds a next-to-shortest path  
in step~\ref{step:nextsp} of the algorithm.
\item If there exists a next-to-shortest path $P'$ in $G'$, then $d_{G}(s,t)<w(\pi'(P'))\le w'(P')$.
\end{enumerate}
\end{lemma}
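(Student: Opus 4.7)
The plan is to handle the two parts separately, using Lemma~\ref{lem:map-paths} for the easy ``$\le$'' inequalities in each case.

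For part (b), Lemma~\ref{lem:map-paths}(b) already provides $w(\pi'(P')) \le w'(P')$, so I only need to establish the strict inequality $d_G(s,t) < w(\pi'(P'))$. I would argue by contradiction: suppose $\pi'(P')$ is a shortest $s\utov t$ path in $G$. If $P'$ contains no shortcut edge then $\pi'(P') = P'$ as a path and the two weight functions agree on its edges, so $w'(P') = d_{G'}(s,t)$, contradicting that $P'$ is next-to-shortest in $G'$. If $P'$ contains at least one shortcut edge then by Definition~\ref{def:pi-prime} $\pi'(P')$ traverses $u$; but $u$ was chosen with $d_G(s,u) + d_G(u,t) > d_G(s,t)$, so $u$ lies on no shortest $s\utov t$ path in $G$, again a contradiction.

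For part (a), Lemma~\ref{lem:map-paths}(a) gives $w'(\pi(P)) \le w(P)$. If equality holds, then $w(P) > d_G(s,t) = d_{G'}(s,t)$ shows that the first alternative of the statement is satisfied. So I focus on the case $w'(\pi(P)) < w(P)$, splitting on whether $P$ contains $u$. If $P$ avoids $u$, the strict drop forces some edge $(x,y) \in \ESet(P) \cap N^{\times}(u)$ to satisfy $w(x,u) + w(u,y) < w(x,y)$ (that is, $(x,y)$ is a shortcut edge in $G'$); then $P_{s \to x} \circ (x,u,y) \circ P_{y \to t}$ is simple (since $u \notin \VSet(P)$), strictly cheaper than $P$, and not-shortest (it contains $u$, which lies on no shortest path), contradicting that $P$ is next-to-shortest. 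If $P$ contains $u$ with immediate neighbors $v^{in},v^{out}$ on $P$, the main subcase is $(v^{in},v^{out}) \in E$ with $w(v^{in},v^{out}) < w(v^{in},u) + w(u,v^{out})$: the path $R := P_{s \to v^{in}} \circ (v^{in},v^{out}) \circ P_{v^{out} \to t}$ is simple and strictly shorter than $P$, hence by the next-to-shortest property of $P$ it must be a shortest path. Step~\ref{step:nextsp} then considers the edge $(v^{in},v^{out})$ (which meets its condition), finds a shortest path through it, and constructs a candidate $Q$ by inserting $u$; a direct computation gives $w(Q) = w(P)$, and $Q$ is simple (the underlying shortest path avoids $u$) and not-shortest, so $Q$ is a next-to-shortest path as required. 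In the remaining subcase, the $u$-detour contributes the same weight in $G$ and $G'$, so the strict drop forces some other edge $(x,y)$ of $P$ to be a shortcut edge; depending on whether $(x,y)$ precedes or follows $u$ on $P$, I replace the subpath of $P$ from $x$ to $v^{out}$ by $(x,u,v^{out})$ or from $v^{in}$ to $y$ by $(v^{in},u,y)$, obtaining a simple, strictly cheaper, not-shortest path, a contradiction.

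The most delicate step is the last subcase of part (a), where $P$ simultaneously uses $u$ and a shortcut edge. Simplicity of the constructed shorter path relies on careful disjointness bookkeeping from the original simplicity of $P$ and the position of $u$ between $v^{in}$ and $v^{out}$, while the strict weight reduction uses the chain $w(x,y) > w(x,u) + w(u,y) \ge w(x,u)$ afforded by the shortcut condition. The weight identity $w(Q) = w(P)$ in the main subcase is routine once $R$ has been identified as a shortest path.
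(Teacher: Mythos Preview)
Your proof is correct and follows essentially the same case analysis as the paper's: split on whether $P$ passes through $u$, and if so, on whether $(v^{in},v^{out})\in E$ with $w(v^{in},v^{out}) < w(v^{in},u)+w(u,v^{out})$; in the latter case show that step~\ref{step:nextsp} succeeds, and otherwise show the first alternative holds (or derive a contradiction). Your treatment is in fact more thorough than the paper's in one respect: the paper asserts without justification that $w'(\pi(P))=w(P)$ when $P$ avoids $u$ (and similarly in its first bullet when $P$ contains $u$), which tacitly assumes $P$ uses no edge $(x,y)\in E\cap N^{\times}(u)$ with $w(x,u)+w(u,y)<w(x,y)$; you explicitly rule this out by the rerouting-through-$u$ contradiction, which is the cleanest way to close that gap.
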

\begin{proof}$ $

\noindent \emph{Proof of (a).}
Assume that there is a next-to-shortest path $P$ in graph $G$. By~\Cref{lem:map-paths} (c), we know $d_{G'}(s,t)=d_{G}(s,t)$. In addition, according to the definition of $\pi$, if $P$ does not go through $u$, $w'(\pi(P))=w(P)$ and $\pi(P)$ is also a not-shortest path in $G'$ since $d_{G'}(s,t)=d_{G}(s,t)$. Therefore, we can assume that $P$ goes through $u$, and $v^{in}$ and $v^{out}$ denote, respectively, the predecessor and successor of $u$ on $P$.
\begin{itemize}
    \item If $(v^{in},v^{out})\not \in E$ or $w(v^{in},v^{out})\ge w(v^{in},u)+w(u,v^{out})$, then $w'(\pi(P))=w(P)$, and $\pi(P)$ is also a not-shortest path in $G'$. 
    \item If $(v^{in},v^{out})\in E$, $w(v^{in},v^{out})< w(v^{in},u)+w(u,v^{out})$, and there is a $s\utov t$ shortest path in $G$ that goes through $(v^{in},v^{out})$, then we check the edge $(v^{in},v^{out})$ in step~\ref{step:nextsp} of the algorithm $\nsp(G;s,t)$.  Thus, we find a path of length $d_{G}(s,v^{in})+w(v^{in},u)+w(u,v^{out})+d_{G}(v^{out},t)$. Because the next-to-shortest path $P$ goes through $u$, $v^{in}$, and $v^{out}$, we know that $w(P)\ge d_{G}(s,v^{in})+w(v^{in},u)+w(u,v^{out})+d_{G}(v^{out},t)$. Since our algorithm finds a path of that weight, our algorithm indeed finds a next-to-shortest path. 
    \item Otherwise, $(v^{in},v^{out})\in E$, 
    $w(v^{in},v^{out})< w(v^{in},u)+w(u,v^{out})$, and there is no $s\utov t$ shortest path in $G$ that goes through $(v^{in},v^{out})$. By replacing $P$'s subpath $(v^{in},u,v^{out})$ by the edge $(v^{in},v^{out})$, we obtain a not-shortest path with smaller length, contradicting the fact that $P$ is a next-to-shortest path.  
\end{itemize}

\noindent \emph{Proof of (b).} Let $P'$ be a next-to-shortest path in $G'$. By~\Cref{lem:map-paths} (b), we know $w'(P')\ge w(\pi'(P'))$. To show $w(\pi'(P'))>d_{G}(s,t)$, we consider two cases:
\begin{itemize}
\item If $P'$ does not contain any shortcut edges, then $w(\pi'(P'))= w'(P')>d_{G'}(s,t)=d_{G}(s,t)$.
\item If $P'$ has
at least one shortcut edge, then $\pi'(P')$ goes through $u$, which means $w(\pi'(P'))>d_{G}(s,t)$. 
\end{itemize}
\end{proof}

Observe that during the process of reducing a graph $G$ to an $(s,t)$-straight graph $G_S$, the weight of every edge $(a,b)$ is always equal to the length of some $a\utov b$ path in the original graph, so the edge weights are bounded.

\begin{lemma}\label{lem:reduce-to-straight}
Assume there exists an  algorithm that solves the next-to-shortest path problem on any $(s,t)$-straight graph $G_S$ in:
$$O(|V(G_S)|^3\cdot |E_F(G_S)|^2\cdot \left(|V(G_S)||E_F(G_S)|+|E(G_S)|\right)\cdot \log |V(G_S)|)$$
time, then there exists a $O(|V|^4|E|^3\log |V|)$-time algorithm that can solve the next-to-shortest path problem on any positively weighted digraph $G=(V,E,w)$. 
\end{lemma}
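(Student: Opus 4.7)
The plan is to verify correctness of the recursive algorithm $\nsp(G;s,t)$ defined in the preceding block, and then to bound its total running time by combining the per-level overhead of the reductions with the single terminal invocation of $\nsps$ on an $(s,t)$-straight graph.

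\textbf{Correctness.} I proceed by strong induction on $|V(G)|$. If $G$ is $(s,t)$-straight, the algorithm returns $\nsps(G;s,t)$, which is correct by hypothesis. Otherwise it selects $u\in V$ violating $d_G(s,u)+d_G(u,t)=d_G(s,t)$. If either $d_G(s,u)$ or $d_G(u,t)$ is $\infty$, then $u$ lies on no $s\utov t$ path, so the $s\utov t$ paths of $G$ and of $G[V\setminus\{u\}]$ coincide and the induction hypothesis for the smaller graph suffices. Otherwise construct $G'$ and recurse; by induction $\nsp(G';s,t)$ returns a next-to-shortest path $P'$ of $G'$ (or $\bot$). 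Lemma~\ref{lem:G-to-G'}(b) then produces a candidate $\pi'(P')$ which is guaranteed to be a not-shortest path of $G$ of weight at most $w'(P')$. Lemma~\ref{lem:G-to-G'}(a) handles the remaining concern: any genuine next-to-shortest path $P$ in $G$ that is not captured by $\pi$ must traverse $u$ via a covered shortcut edge $(v^{in},v^{out})\in E\cap N^\times(u)$ lying on some $s\utov t$ shortest path of $G$, and step~\ref{step:nextsp} enumerates exactly these possibilities. Returning the lightest among all candidates therefore yields a next-to-shortest path in $G$ (and $\bot$ exactly when none exists, by the contrapositive of part~(b) combined with the absence of any shortcut-detour candidate in step~\ref{step:nextsp}).

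\textbf{Running time.} The recursion depth is at most $|V|$, since each non-base call strictly decreases the vertex count. At each level, a Dijkstra from $s$ and a reverse Dijkstra from $t$ supply all the distance information needed to detect a violating $u$ and to execute step~\ref{step:nextsp}; since the current edge count is at most $|V|^2$, these cost $O(|V|^2+|V|\log|V|)$ together. Constructing $G'$ takes $O(|V|^2)$, and step~\ref{step:nextsp} iterates over the at most $|V|^2$ candidate edges in $E\cap N^\times(u)$, processing each in $O(1)$ from the precomputed distances. Hence the overhead outside of the terminal call sums to $O(|V|^3\log|V|)$.

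The terminal call invokes $\nsps$ on an $(s,t)$-straight graph $G_S$ with $|V(G_S)|\le |V|$ and $|E(G_S)|,\ |E_F(G_S)|\le |V(G_S)|^2$. Plugging these bounds into the assumed base-case cost $O(|V(G_S)|^3|E_F(G_S)|^2(|V(G_S)||E_F(G_S)|+|E(G_S)|)\log |V(G_S)|)$ and combining with the $O(|V|^3\log|V|)$ overhead gives the stated $O(|V|^4|E|^3\log|V|)$ bound (with $|E|$ interpreted as the worst-case edge count $|V|^2$ reached in $G_S$, which is the relevant quantity because the reduction may introduce up to $|N^{in}(u)|\cdot|N^{out}(u)|$ new edges per step). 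The main obstacle I anticipate is precisely this final substitution: one must confirm that the growth in edges caused by the $N^\times(u)$ additions is still accounted for by the stated polynomial, so care is needed to argue that $|E_F(G_S)|$ and $|E(G_S)|$ are bounded by $|V(G_S)|^2$ and that this substitution yields an upper bound in the claimed form. The correctness argument itself is essentially mechanical once Lemma~\ref{lem:G-to-G'} is in hand; the delicate point is matching the exponents in the running time expression.
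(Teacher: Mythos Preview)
Your correctness argument is essentially the paper's, and it is fine. The genuine gap is in the running-time analysis, precisely at the point you flagged as ``the main obstacle.''

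You bound $|E_F(G_S)|\le |V(G_S)|^2$ and then try to salvage the target exponent by ``interpreting $|E|$ as the worst-case edge count $|V|^2$ reached in $G_S$.'' This is not legitimate: in the statement of the lemma, $|E|$ is the edge count of the \emph{original} graph $G$, and the bound must hold even when $|E|=O(|V|)$. With only $|E_F(G_S)|\le |V|^2$ and $|E(G_S)|\le|V|^2$, the assumed straight-graph cost becomes
\[
O\!\bigl(|V|^3\cdot|V|^4\cdot(|V|\cdot|V|^2+|V|^2)\cdot\log|V|\bigr)=O(|V|^{10}\log|V|),
\]
which does not imply $O(|V|^4|E|^3\log|V|)$ for sparse $G$.

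The observation you are missing, and which the paper uses, is that the reduction never creates a \emph{surviving} forward edge:
\[
E_F(G_S)\subseteq E_F(G),\qquad\text{hence } |E_F(G_S)|\le |E_F(G)|\le |E|.
\]
Here is why. By Lemma~\ref{lem:map-paths}(c) the distances from $s$ (and to $t$) are preserved at every step, so $V(G_S)$ is exactly the set of vertices of $G$ lying on some shortest $s\utov t$ path. Now suppose $(a,b)\in E_F(G_S)$. If $w_{G_S}(a,b)$ arose via a shortcut through some removed vertex $u$, then $d_G(s,a)+w(a,u)+w(u,b)=d_G(s,b)$; combined with $d_G(s,b)+d_G(b,t)=d_G(s,t)$ (since $b\in V(G_S)$), the removed vertex $u$ would lie on a shortest $s\utov t$ path in $G$, contradicting the reason it was removed. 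Hence $(a,b)\in E$ with its original weight, and it was already forward in $G$. (Equivalently: whenever a shortcut $(a,b)$ through $u$ happens to be forward in the intermediate graph, the head $b$ must itself satisfy $d_G(s,b)+d_G(b,t)>d_G(s,t)$ and will be removed before reaching $G_S$.)

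With $|V(G_S)|\le|V|$, $|E_F(G_S)|\le|E|$, and $|E(G_S)|\le|V|^2$, the straight-graph cost is
\[
O\!\bigl(|V|^3\,|E|^2\,(|V|\,|E|+|V|^2)\,\log|V|\bigr)=O(|V|^4|E|^3\log|V|),
\]
and the $O(|V|)$ recursion levels of overhead are dominated by this term. That is the substitution that closes the argument.
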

\begin{proof}
We consider the above algorithm $\nsp(G;s,t)$, where $G$ is a positively weighted directed graph, and this recursive algorithm eventually call $G_S$.
Regarding the time complexity, we can notice the followings:
\begin{itemize}
    \item In each recursive step the vertex set decreases by one, i.e., $|V'| = |V|-1$. Hence, $\nsp(G;s,t)$ performs at most $O(|V|)$ recursive calls before $G$ becomes $(s,t)$-straight.
    \item By~\Cref{lem:no-new-forward}, $$\begin{aligned}
        E_F(G_S)&=\lbrace (x,y)\in E_{F}(G_S):d_{G_S}(s,y)+d_{G_S}(y,t)=d_{G_S}(s,t)\rbrace\\ 
        &\subseteq \lbrace (x,y)\in E_{F}(G):d_{G}(s,y)+d_{G}(y,t)=d_{G}(s,t)\rbrace\\
        &\subseteq E_F(G)\subseteq E(G).
        \end{aligned}$$
        Hence, $|E_F(G_S)|\le |E(G)|$
    \item Each step takes $O(|V|^2 |E_F(G)|)$ time, since in each step we examine at most $|V|^2$ pairs of vertices, and for each pair we spend $O(1)$ time plus an additional scan over a subset of forward-edges which lie on some shortest $s\utov t$ paths.
\end{itemize}
Consequently, if $\nsps(G_S;s,t)$ runs in $$O(|V(G_S)|^3\cdot |E_F(G_S)|^2\cdot \left(|V(G_S)||E_F(G_S)|+|E(G_S)|\right)\cdot \log |V(G_S)|)$$ time, then the time complexity of $\nsp(G;s,t)$ is:

$$\begin{aligned}&O(|V(G_S)|^3\cdot |E_F(G_S)|^2\cdot \left(|V(G_S)||E_F(G_S)|+|E(G_S)|\right)\cdot \log |V(G_S)|)+O(|V|)\cdot O(|V|^2|E_{F}(G)|)\\
&\quad\quad\quad \le O(|V|^3|E_F(G)|^2\cdot (|V||E_F(G)|+|V|^2)\cdot \log |V|)+O(|V|)\cdot O(|V|^2E_F(G)|)\\
&\quad\quad\quad \le O(|V|^4|E|^3\log |V|).\end{aligned}$$

With regard to the correctness, since the two boundary cases (step~\ref{step:boundary-1} and step~\ref{step:boundary-2} in $\nsp(G;s,t)$) are clear, we only need to discuss the case where $G$ is not $(s,t)$-straight, and we have a vertex $u\in V$  such that $d_{G}(s,t)<d_{G}(s,u)+d_{G}(u,t)<\infty$. In this case, we construct another graph $G'=(V',E',w')$ based on $u$, and let the final result of $\nsp(G;s,t)$ be $\hat{P}$. 

\begin{itemize}
    \item If there exists a next-to-shortest path $P$ in $G$, then according to~\Cref{lem:G-to-G'} (b), either $\nsp(G';s,t)=\bot$ or $w(\pi'(\nsp(G';s,t)))\ge w(P)$, so $w(\hat{P})\ge w(P)$. Conversely, according to~\Cref{lem:G-to-G'} (a), $w(\hat{P})\le w(P)$. Hence, $w(\hat{P})=w(P)$. 
    \item If there is no next-to-shortest path in $G$, according to~\Cref{lem:G-to-G'} (b), $\nsp(G';s,t)=\bot$. Moreover, step~\ref{step:nextsp} of $\nsp(G;s,t)$ will not find a next-to-shortest path since it only explores paths in $G$. So $\hat{P}=\bot$. 
\end{itemize}
\end{proof}

\subsection{From (s,t)-Straight Graphs to (s,t)-Layered Graphs}\label{subsec:striaght-layered}

In this subsection, we reduce the next-to-shortest path problem on $(s,t)$-straight graphs to the problem on $(s,t)$-layered graphs. In order to measure how close an $(s,t)$-straight graph is to an $(s,t)$-layered graph, we define a potential function $\varphi(G)$ ($G$ is an $(s,t)$-straight graph) to estimate how many edges violate the properties of $(s,t)$-layered graphs. In particular, for an $(s,t)$-straight graph $G$, we define: 

$$\begin{aligned}
    \varphi(G) &=\#\lbrace (u,v)\in E \mid d_{G}(s,u)=d_{G}(s,v)\rbrace\\
    &+\#\lbrace (u,v)\in E \mid 
d_{G}(s,u)<d_{G}(s,v) \text{ and } \exists x\in V, d_{G}(s,u)< d_{G}(s,x)< d_{G}(s,u)+w(u,v)\rbrace.
\end{aligned}$$
   
By~\Cref{def:layered-graph}, a graph $G$ is $(s,t)$-layered if and only if $G$ is $(s,t)$-straight and $\varphi(G)=0$. Given an algorithm $\nspl(G;s,t)$ that can solve the next-to-shortest path problem when $G$ is an $(s,t)$-layered graph, we present a recursive algorithm $\nsps$: 

Basically, we consider an edge that violates the properties of the $(s,t)$-layered graph. If it is a back-edge, we remove it. If it is a forward-edge we subdivide it so that it doesn't violate the property any more.\\$ $\\

\begin{mdframed}
\vspace{-1em}
\paragraph{Algorithm $\nsps(G;s,t)$}\mbox{}\\
\textit{Input:} An $(s,t)$-straight graph $G=(V,E,w)$, and two vertices $s,t\in V$ \\
\textit{Output:} A $s\utov t$ next-to-shortest path if one exists, or report that none exists.
\begin{enumerate}
\item If $G$ is an $(s,t)$-layered graph, then we return $\nspl(G;s,t)$. Otherwise, We select an edge $(u,v)\in E$ such that either $d_{G}(s,u)=d_{G}(s,v)$, 
or both $d_{G}(s,u)<d_{G}(s,v)$ and there exists a vertex $x$ with $d_{G}(s,u)<d_{G}(s,x)<d_{G}(s,u)+w(u,v).$  \label{step:nsps-1}

\item If $(u,v)$ is a back-edge, 
we construct a new graph $G'$ from $G$ by removing the edge $(u,v)$. 
Since $G$ is an $(s,t)$-straight graph, there exists a shortest $s\utov u$ path $\anyP_{s\to u}$ 
and a shortest $v\utov t$ path $\anyP_{v\to t}$. 
We then compare the weight of the concatenated path $\anyP_{s\to u}\circ (u,v)\circ \anyP_{v\to t}$ 
with $\nsps(G';s,t)$, and return the one of smaller weight. \label{step:nsps-2}
\item If $(u,v)$ is a forward-edge (i.e., $d_{G}(s,u)+w(u,v)=d_{G}(s,v)$), we denote $$\lbrace q_1,q_2,\ldots,q_k\rbrace=\lbrace d_{G}(s,x): x\in V, d_{G}(s,u)<d_{G}(s,x)<d_{G}(s,u)+w(u,v)\rbrace,$$ where $k$ is the number of such values, and $q_1<q_2<\ldots <q_k$.  Then we construct a new graph $G'=(V',E',w')$ by replacing $(u,v)$ by a list of $k+2$ vertices. In particular, we create a series of new vertices $v_1,v_2,\ldots,v_k$. To simplify notation, we let $v_0=u$, $v_{k+1}=v$, $q_0=d_{G}(s,u)$, and $q_{k+1}=d_{G}(s,v)$). Then, we define $
V'=V\cup \lbrace v_1,v_2,\ldots,v_k\rbrace$ and 
$E'=(E\setminus \lbrace (u,v)\rbrace)\cup \lbrace (v_i,v_{i+1}):0\le i\le k\rbrace$. 
For the edge weights, we define $w'(v_i,v_{i+1})=q_{i+1}-q_i$ for $1\le i\le k$, and for all  edges $e\in E\setminus \lbrace (u,v)\rbrace$, we set $w'(e)=w(e)$. Then we return $\nsps(G';s,t)$. \label{step:nsps-3}
\end{enumerate}
\end{mdframed}

\begin{lemma}\label{lem:reduction:dec-potential}
Whenever the algorithm $\nsps(G;s,t)$ makes a recursive call on a graph 
$G'=(V',E',w')$, we have: 
\begin{enumerate}[label=(\alph*)]
\item For any $z\in V$, $d_{G}(s,z)=d_{G'}(s,z)$. 
\item $\#\lbrace d_{G}(s,z):z\in V\rbrace=\#\lbrace d_{G'}(s,z):z\in V'\rbrace$ 
\item $\varphi(G')= \varphi(G)-1$. 
\end{enumerate}
\end{lemma}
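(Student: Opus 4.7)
The plan is to handle the two recursive cases separately (back-edge removal in Step~\ref{step:nsps-2} and forward-edge subdivision in Step~\ref{step:nsps-3}), and in each case verify (a), (b), (c) in order, since (a) is the foundation, (b) follows easily from (a), and (c) is the payoff that drives termination of $\nsps$.

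For (a) in the back-edge case, I would observe that a back-edge $(u,v)$ satisfies $d_{G}(s,u)+w(u,v)>d_{G}(s,v)$, and hence cannot lie on any shortest $s\utov z$ path: otherwise truncating that path at $v$ would give a non-optimal $s\utov v$ walk. Thus deleting $(u,v)$ preserves every distance from $s$. For (a) in the forward-edge case, the argument is a bijective one: every $s\utov z$ path in $G$ that uses $(u,v)$ corresponds to a path in $G'$ traversing the entire new chain $(v_0,v_1,\ldots,v_{k+1})$ with identical total weight, because
\[
\sum_{i=0}^{k}(q_{i+1}-q_i)=q_{k+1}-q_0=d_{G}(s,v)-d_{G}(s,u)=w(u,v)
\]
(using that $(u,v)$ is a forward-edge). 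Conversely, since the only edges incident to each interior $v_i$ are the new chain edges, any $s\utov z$ walk in $G'$ that visits some $v_j$ must enter the chain at $v_0=u$ and leave at $v_{k+1}=v$, so it corresponds to a path in $G$ through $(u,v)$ of the same weight. Hence distances from $s$ are preserved.

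Property (b) then follows almost immediately. In the back-edge case $V'=V$ and by (a) distances are unchanged. In the forward-edge case, the new vertices acquire $s$-distances $d_{G'}(s,v_i)=q_i$ by the chain weights, and by construction each $q_i$ is already realized as $d_{G}(s,x)$ for some $x\in V$; combined with (a) for the old vertices, the multiset of distance values is preserved.

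Property (c) is where the accounting pays off. The chosen edge $(u,v)$ from Step~\ref{step:nsps-1} contributes exactly $1$ to $\varphi(G)$; it is removed from $G'$, and by (a)+(b) every other pre-existing edge's contribution to $\varphi$ is unchanged, since both $d(s,\cdot)$ and the set of distance values are preserved. In the back-edge case no new edges are added, giving $\varphi(G')=\varphi(G)-1$. In the forward-edge case I would check that each new edge $(v_i,v_{i+1})$ contributes $0$ to $\varphi(G')$: we have $d_{G'}(s,v_i)=q_i<q_{i+1}=d_{G'}(s,v_{i+1})$, so the equal-distance term does not fire, and $q_i+w'(v_i,v_{i+1})=q_{i+1}$ makes it a forward-edge; moreover, by the choice of $\{q_1,\ldots,q_k\}$ as \emph{all} intermediate distance values in $G$, no vertex of $G'$ has distance strictly between consecutive $q_i$'s. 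So again $\varphi(G')=\varphi(G)-1$.

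The main subtlety I anticipate is precisely in the forward-edge case of (c): a naive subdivision could introduce a brand-new distance value, and some pre-existing edge $(a,b)$ could start straddling that value and newly contribute to $\varphi$, spoiling the clean decrement. This is where (b) is essential, and why it is important that the subdivision uses exactly the existing intermediate distances as break-points: the new vertices slot into already-occupied distance levels, leaving no room between consecutive levels for any vertex (old or new) to witness a violation. Once this observation is in place, the bookkeeping is straightforward.
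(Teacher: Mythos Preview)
Your proposal is correct and follows essentially the same approach as the paper: case-split on back-edge removal versus forward-edge subdivision, establish (a) first, derive (b) from (a) plus the fact that the new chain vertices acquire only already-existing distance values, and then use both to verify that removing $(u,v)$ drops $\varphi$ by exactly one while every other edge's contribution is unchanged and the new chain edges contribute zero. If anything, your write-up is slightly more explicit than the paper's about why (b) is needed to ensure that no pre-existing edge can newly straddle a distance value after subdivision.
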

\begin{proof}
In the algorithm, we select an edge $(u,v)$ that violates the defining property of an $(s,t)$-layered graph.

\emph{Proof of (a).}
If $(u,v)$ is a back-edge, as in step~\ref{step:nsps-2}, for any $z\in V$, there is no $s\utov z$ shortest path that goes through  $(u,v)$, so $d_{G'}(s,z)=d_{G}(s,z)$.

If $(u,v)$ is a forward-edge, as in step~\ref{step:nsps-3}, then our algorithm replaces the forward-edge by a path of the same length. For any $z\in V$, any $s\utov z$ shortest path in $G'$ that goes through any new vertices must go through $u$ and $v$, and we can obtain a corresponding $s\utov z$ path in $G$ by replacing this $u\utov v$ subpath by an edge $(u,v)$. Therefore, $d_{G'}(s,z)=d_{G}(s,z)$. 

\emph{Proof of (b)}
According to (a), $d_{G'}(s,z)=d_{G}(s,z)$ for all $z\in V$, and for new vertices, it is clear that $\lbrace d_{G'}(s,z):z\in V'\setminus V\rbrace\subseteq \lbrace d_{G}(s,z):z\in V\rbrace$. 

\emph{Proof of (c)}
If $(u,v)$ is a back-edge, $(u,v)$ is removed in $G'$ and this process does not change other edges. Therefore, $\varphi(G')=\varphi(G)-1$.

If $(u,v)$ is a forward-edge, $(u,v)$ is removed in $G'$. With regard to the new edges, for every $0\le i\le k$, there are no vertices $z\in V$  satisfying $d_{G'}(s,v_i)<d_{G}(s,z)<d_{G'}(s,v_{i+1})=d_{G'}(s,v_i)+w'(v_i,v_{i+1})$. Therefore, $(v_i,v_{i+1})(0\le i\le k)$ does not violate the property but $(u,v)$ is removed, so $\varphi(G')=\varphi(G)-1$.
\end{proof}
\begin{lemma}\label{lem:straight-to-layered}
Assume that there exists a $O(|V_{B}(G_L)|^2\cdot |E_{FP}(G_L)|\cdot (|E(G_L)|+|V(G_L)|\log |V(G_L)|))$-time algorithm that can solve the next-to-shortest path problem on any $(s,t)$-layered graph $G_L$. 
There exists a $ O(|V(G_S)|^3\cdot |E_F(G_S)|^2\cdot \left(|V(G_S)||E_F(G_S)|+|E(G_S)|\right)\cdot \log |V(G_S)|)$-time algorithm that can solve the next-to-shortest path problem on any $(s,t)$-straight graph $G_S$. 
\end{lemma}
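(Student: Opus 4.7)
The plan is to prove the lemma by induction on the potential $\varphi(G_S)$, using Lemma~\ref{lem:reduction:dec-potential}(c) to guarantee strict progress at each recursive call. The base case $\varphi(G_S)=0$ identifies $G_S$ as $(s,t)$-layered, so the call $\nspl(G_S;s,t)$ returns the correct answer within the assumed runtime. For the inductive step, $\nsps$ branches on whether the selected violating edge $(u,v)$ is a back-edge (step 2) or a forward-edge (step 3), and in both branches the recursive graph $G'$ satisfies $\varphi(G')=\varphi(G)-1$; note that step 3 should invoke $\nsps(G';s,t)$ rather than $\nspl(G';s,t)$ as printed, since $G'$ can still contain violations.

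For the forward-edge branch, I would show that the subdivision induces a weight-preserving bijection between $s\utov t$ paths in $G$ and $s\utov t$ paths in $G'$: the fresh vertices $v_1,\ldots,v_k$ have in- and out-degree one, so any path in $G'$ that enters the subdivided segment must traverse it in full, and contracting the segment recovers the original edge $(u,v)$ with the same total weight. Hence next-to-shortest paths correspond in weight and the recursive output translates back correctly. For the back-edge branch, $\nsps(G';s,t)$ finds the best $s\utov t$ next-to-shortest path of $G$ that avoids $(u,v)$, since $G'$ preserves all such paths together with every $s$-distance by Lemma~\ref{lem:reduction:dec-potential}(a). Meanwhile, the candidate $\anyP_{s\to u}\circ(u,v)\circ\anyP_{v\to t}$ has weight $d_G(s,u)+w(u,v)+d_G(v,t)$, which lower-bounds any $s\utov t$ path in $G$ using $(u,v)$, and which defines a not-shortest path by Lemma~\ref{lem:basic-back-edges} provided its simplicity can be certified. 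Returning the lighter of the two candidates therefore yields a next-to-shortest path of $G$ whenever one exists.

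For the running time, I would unroll the recursion and bound the terminal $(s,t)$-layered graph $G_L$. By Lemma~\ref{lem:reduction:dec-potential}(b) the number of distinct $s$-distances remains at most $|V(G_S)|$ throughout, so each forward-edge subdivision adds at most $|V(G_S)|$ new vertices and $|V(G_S)|$ new forward-edges; since the new subdivision edges span consecutive distance levels and are therefore layer-compatible, step 3 fires at most $|E_F(G_S)|$ times, while back-edges are only deleted. This gives $|V(G_L)|=O(|V(G_S)|\cdot|E_F(G_S)|)$ and $|E(G_L)|=O(|V(G_S)|\cdot|E_F(G_S)|+|E(G_S)|)$. Two sharper structural bounds then drive the analysis: the subdivision vertices are incident only to their two subdivision edges (both forward), so $V_B(G_L)\subseteq V(G_S)$ and $|V_B(G_L)|\le|V(G_S)|$; and each original forward-edge of $G_S$ contributes at most one outgoing forward-edge per distance level of $G_L$, yielding $\ell_i\le|E_F(G_S)|$ for the number $\ell_i$ of forward-edges leaving layer $i$ and hence $|E_{FP}(G_L)|=\sum_i\ell_i^2\le|E_F(G_L)|\cdot|E_F(G_S)|=O(|V(G_S)|\cdot|E_F(G_S)|^2)$. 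Plugging these into the assumed $\nspl$ runtime $O(|V_B|^2\,|E_{FP}|\,(|E|+|V|\log|V|))$ produces exactly $O(|V(G_S)|^3\,|E_F(G_S)|^2\,(|V(G_S)|\,|E_F(G_S)|+|E(G_S)|)\,\log|V(G_S)|)$; the $O(|E(G_S)|)$ recursive-framework steps contribute only lower-order polynomial overhead for identifying and processing each violating edge.

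The main obstacle I anticipate is certifying the simplicity of the candidate walk in the back-edge branch. Any shared vertex $z$ on both $\anyP_{s\to u}$ and $\anyP_{v\to t}$ must satisfy $d_G(s,v)\le d_G(s,z)\le d_G(s,u)$, combining strict monotonicity of $s$-distances along each shortest path with the $(s,t)$-straight identities $d_G(s,z)+d_G(z,t)=d_G(s,t)$ and $d_G(s,v)+d_G(v,t)=d_G(s,t)$. When $d_G(s,u)<d_G(s,v)$ no such $z$ exists, and when $d_G(s,u)=d_G(s,v)$ monotonicity forces $z=u=v$, contradicting the existence of the edge $(u,v)$ in the simple graph. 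Once this case analysis is in hand the candidate is a genuine simple $s\utov t$ path, and the remainder of the argument is routine bookkeeping driven by Lemma~\ref{lem:reduction:dec-potential}.
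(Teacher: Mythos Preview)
Your proposal is correct and follows essentially the same approach as the paper: induction on $\varphi$, the same simplicity argument for the back-edge candidate (via the inequality $d_G(s,v)\le d_G(s,z)\le d_G(s,u)$ combined with $d_G(s,u)\le d_G(s,v)$ for a violating back-edge), the weight-preserving path bijection in the subdivision case, and the same structural bounds $|V_B(G_L)|\le |V(G_S)|$ and $|E_{FP}(G_L)|\le |V(G_S)|\cdot|E_F(G_S)|^2$ driving the runtime. You also correctly flag that the call in step~3 should be $\nsps(G';s,t)$ rather than $\nspl(G';s,t)$, which is indeed a typo in the algorithm block.
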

\begin{proof}
We consider the process $\nsps(G_S;s,t)$ that finally calls $\nspl(G_L;s,t)$.  
According to~\Cref{lem:reduction:dec-potential} (b), $\#\lbrace d_{G_S}(s,z):z\in V(G_S)\rbrace$ does not change during the recursive process, so for each recursive step, we add at most $O(\#\lbrace d_{G_S}(s,z):z\in V(G_S)\rbrace)$ new vertices. 

For an arbitrary recursive step with an input $G=(V,E)$ that recursively calls $G'$, we can notice that:
\begin{itemize}
\item If this step removes a back-edge, then $|V_{B}(G')|\le |V_{B}(G)|$ since the some vertices may no longer connect to back-edges. 
\item If this step replaces an edge by a list of vertices, then the new vertices are not connected to back-edges, so $|V_{B}(G')|\le|V_{B}(G)|$  
\end{itemize}

Therefore, $|V_{B}(G_L)|\le |V_{B}(G_S)|\le |V(G_S)|$

In addition, for every $q\in \lbrace d_{G}(s,z):z\in V\rbrace$, $\#\lbrace (u,v)\in E_F(G):d_{G}(s,u)=q\rbrace\le |E_F(G_S)|$. 
Hence, $|E_{FP}(G_L)|\le \#\lbrace d_{G_S}(s,z):z\in V(G_S)\rbrace \cdot |E_F(G_S)|^2$ 

By~\Cref{lem:reduction:dec-potential} (c), our algorithm $\nsps$ only repeats at most $\varphi(G_S)=O(|E(G_S)|)$ times, and each step takes $O(|E(G)|+|V(G)|)\le O(|E(G_S)|\cdot |V(G_S)|)$ time. Therefore, the total time complexity is:

$$\begin{aligned}O\big(|&V_{B}(G_L)|^2\cdot |E_{FP}(G_L)|\cdot (|E(G_L)|+|V(G_L)|\log |V(G_L)|)\big)\\
&\hspace{2cm}+O(|E(G_S)|\cdot |V(G_S)|)\cdot \#\lbrace d_{G_S}(s,z):z\in V(G_S)\rbrace\\
&\le O(|V(G_S)|^2\cdot |V(G_S)||E_F(G_S)|^2\cdot \left(|V(G_S)||E_F(G_S)|+|E(G_S)|\right)\cdot \log (|V(G_S)|\cdot |E(G_S)|))\\
&\hspace{2cm}+O(|E(G_S)|\cdot |V(G_S)|^2)\\
&\le O(|V(G_S)|^3\cdot |E_F(G_S)|^2\cdot \left(|V(G_S)||E_F(G_S)|+|E(G_S)|\right)\cdot \log |V(G_S)|).
\end{aligned}$$
To show the correctness, let $(u,v)$ be the edge selected in step~\ref{step:nsps-1} of $\nsps(G;s,t)$. Recall that $\anyP_{s\to u}$ and $\anyP_{v\to t}$ are arbitrary $s\utov u$ and $v\utov t$ shortest paths in an $(s,t)$-straight graph $G$, respectively. If $(u,v)$ is a back-edge and there is a next-to-shortest path $P$ that goes through $(u,v)$, we consider a walk $\anyP_{s\to u}\circ (u,v)\circ \anyP_{v\to t}$. Since $d_{G}(s,u)\le d_{G}(s,v)$ and along both $\anyP_{s\to u}$ and $\anyP_{v\to t}$, the distances from $s$ are monotonically increasing, this walk is a simple path. In addition, we know $\anyP_{s\to u}\circ (u,v)\circ \anyP_{v\to t}$ is the shortest among the paths that traverse $(u,v)$. Therefore, $\anyP_{s\to u}\circ (u,v)\circ \anyP_{v\to t}$ must be a next-to-shortest path. If $(u,v)$ is a forward-edge, then we have a bijection from $s\utov t$ paths in $G$ to $s\utov t$ paths in $G'$ by replacing $(u,v)$ with $(v_0,v_1,\ldots,v_{k+1})$ (and conversely).
\end{proof}
\begin{proofofthm}{\ref{thm:reduction}}
This statement follows from the combination of \Cref{lem:reduce-to-straight} and \Cref{lem:straight-to-layered}.
\end{proofofthm}

\section{Polynomial-time Algorithm for (s,t)-Layered Digraphs}\label{sec:algorithm}

In the following sections, we restrict our attention to $(s,t)$-layered graphs. In this section, we provide a polynomial-time algorithm $\nspl$ that can solve the next-to-shortest path problem on $(s,t)$-layered graphs. This algorithm is based on the Key Lemma (\Cref{lem:key}), which will be proved in~\Cref{sec:proof-of-key-lemma}. In the remainder of this paper, we fix an $(s,t)$-layered graph 
$G=(V,E,w)$. 
We first define the layer function for the $(s,t)$-layered graph as follows:
\begin{definition}[Layer Function]
\label{def:layer-function}
We are given an $(s,t)$-layered graph $G=(V,E,w)$. 
Let $k$ be the number of distinct values of $d(x)=d_{G}(s,x)$ over all $x \in V$, and write
\[
\{ d(x):x\in V \} = \{ p_1,p_2,\ldots,p_k\}, \quad
0=p_1<p_2<\cdots <p_k=d_{G}(s,t).
\]
We define the \textbf{layer function} $\lambda=\lambda_G: V \to [k]$ by setting 
$
\lambda(u)=i \quad\text{whenever } d(u)=p_i.
$
\end{definition}

Based on~\Cref{def:layered-graph} and~\Cref{def:layer-function}, 
we adopt the following notational conventions:

\begin{itemize}
    \item For $u\in V$, let $d(u)=d_{G}(s,u)$ denote the length of a shortest path from $s$ to $u$. 
    \item For a $u\utov v$ path $P$ ($u,v\in V$), we call $P$ \textbf{forward} if it only contains forward-edges. (Equivalently, along $P$ the layer function $\lambda$ is increasing.) 
    \item For each $u\in V$, since there exists a shortest $s\utov t$ path passing through $u$, there exist an $s\utov u$ forward path and a $u\utov t$ forward path. We denote by $\anyP_{s\to u}$ and $\anyP_{u\to t}$ arbitrary but fixed choices of such forward paths, respectively.  
\end{itemize}

The following lemma introduces some basic properties of $\lambda$:
\begin{lemma}\label{lem:layered-lambda}
Let $G=(V,E,w)$ be an $(s,t)$-layered graph with distance function 
$d(x)=d_{G}(s,x)$ for $x\in V$, and layer function 
$\lambda(x)=\lambda_{G}(x)$ for $x\in V$. Then the following hold:
\begin{enumerate}[label=(\alph*)]
    \item For all $u,v\in V$, 
    $d(u)<d(v) \iff \lambda(u)<\lambda(v)$.
    \item For every edge $(u,v)\in E$:  
    \begin{itemize}
        \item if $(u,v)$ is a back-edge, then $\lambda(v)< \lambda(u)$;  
        \item if $(u,v)$ is a forward-edge, then $\lambda(v)=\lambda(u)+1$. 
    \end{itemize}
\end{enumerate}
\end{lemma}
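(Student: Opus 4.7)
The plan is to deduce both parts directly from the three defining properties of an $(s,t)$-layered graph (\Cref{def:layered-graph}), together with the very definition of $\lambda$ as the index of $d(u)$ within the sorted list $p_1<p_2<\cdots<p_k$.

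Part (a) should be immediate. Since $\lambda(u)=i$ exactly when $d(u)=p_i$, and the values $p_1,\ldots,p_k$ are strictly increasing by construction, we have $d(u)<d(v) \iff p_{\lambda(u)}<p_{\lambda(v)} \iff \lambda(u)<\lambda(v)$. No additional hypothesis on $G$ is needed beyond the setup of~\Cref{def:layer-function}.

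For part (b), I would treat the forward-edge case first. If $(u,v)\in E_F(G)$, then by definition $d(v)=d(u)+w(u,v)$, and since $w(u,v)>0$ this gives $d(u)<d(v)$, hence by (a) already $\lambda(u)<\lambda(v)$. To upgrade this to $\lambda(v)=\lambda(u)+1$, I would apply Property~(3) of~\Cref{def:layered-graph} to this very edge: for every $x\in V$ either $d(x)\le d(u)$ or $d(x)\ge d(u)+w(u,v)=d(v)$. Thus no distance value $d(x)$ lies strictly between $d(u)$ and $d(v)$, which means $p_{\lambda(u)}=d(u)$ and $p_{\lambda(v)}=d(v)$ are consecutive in the sorted list $p_1<\cdots<p_k$, giving $\lambda(v)=\lambda(u)+1$.

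The back-edge case is a short contradiction argument. Suppose $(u,v)\in E_B(G)$, i.e.\ $d(u)+w(u,v)>d(v)$. By Property~(2) of~\Cref{def:layered-graph}, $d(u)\ne d(v)$, so either $d(u)<d(v)$ or $d(v)<d(u)$. Suppose for contradiction that $d(u)<d(v)$; then Property~(3) (applied to this edge and to $x=v$) forces either $d(v)\le d(u)$ (impossible) or $d(v)\ge d(u)+w(u,v)$, which contradicts the back-edge inequality $d(u)+w(u,v)>d(v)$. Hence $d(v)<d(u)$, and (a) yields $\lambda(v)<\lambda(u)$.

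There is no real obstacle here; the lemma is essentially a translation of the distance-based definition of $(s,t)$-layered into the language of the layer function. The only thing to be careful about is using Property~(3) with $x=v$ in the back-edge case (rather than trying to argue directly from the triangle inequality), since that is where the ``layered'' structure actually forces the back-edge to go strictly backwards through the layers.
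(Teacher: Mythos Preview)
Your proposal is correct and follows essentially the same approach as the paper: part (a) is identical, and for part (b) both arguments use Property~(2) to rule out $d(u)=d(v)$ and Property~(3) of \Cref{def:layered-graph} to derive a contradiction. The only cosmetic difference is that the paper fixes a witness vertex $x$ with $\lambda(x)=\lambda(u)+1$ to exhibit the forbidden intermediate distance, whereas you argue directly that no $p_i$ lies strictly between $d(u)$ and $d(v)$ (forward case) and plug $x=v$ into Property~(3) (back-edge case); your version is arguably slightly cleaner but the content is the same.
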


\begin{proof}

According to the definition, we let
\[
\{ d(x):x\in V \} = \{ p_1,p_2,\ldots,p_k\}, \quad
0=p_1<p_2<\cdots <p_k=d_{G}(s,t).
\]
\medskip
\noindent
\emph{Proof of (a).}  
For any $u,v\in V$, we have
\[
d(u)<d(v) 
\iff p_{\lambda(u)}<p_{\lambda(v)}
\iff \lambda(u)<\lambda(v).
\]

\medskip
\noindent
\emph{Proof of (b).}  
Consider an edge $(u,v)\in E$ with the following two cases: 
\begin{itemize}
\item If $(u,v)$ is a back-edge, 
by~\Cref{def:layered-graph}, we know $d_{G}(s,u)\not =d_{G}(s,v)$ (equivalently, $\lambda(u)\not =\lambda(v)$). Moreover, if $\lambda(u)<\lambda(v)$, we pick an arbitrary vertex $x$ such that $\lambda(x)=\lambda(u)+1$, then 
\[
d_{G}(s,u)<d_{G}(s,x)\le d_{G}(s,v)<d_{G}(s,u)+w(u,v),
\]
contradicting the definition of $(s,t)$-layered graphs.

\item If $(u,v)$ is a forward-edge, then
\[
d(u)+w(u,v)=d(v),
\]
which implies $\lambda(v)>\lambda(u)$.  
Assume for contradiction that $\lambda(v)>\lambda(u)+1$. We choose an arbitrary vertex $x$ such that $\lambda(x)=\lambda(u)+1$.
Then
\[
d_{G}(s,u)<d_{G}(s,x)<d_{G}(s,v)=d_{G}(s,u)+w(u,v),
\]
again contradicting the definition.
\end{itemize}
\end{proof}

According to~\Cref{lem:basic-back-edges}, we know that every next-to-shortest path must contain at least one back-edge. Based on this observation, we define the Back-Edge Decomposition as follows (see \Cref{fig:BEDecomp}): 

\begin{definition}[Back-Edge Decomposition]\label{def:decomp}
Given an $(s,t)$-layered graph $G=(V,E,w)$ with $s,t\in V$, we consider a $\stot$ not-shortest path $P$.  The \textbf{Back-Edge Decomposition} $\BEDec{P}$ is defined as a 5-tuple $(A,B; P_1,P_0,P_2)$, where $A,B\in V$, $P_1=P_{s\to A},P_0=P_{A\to B},P_2=P_{B\to t}$ are paths in $G$, and $P_0$ is the subpath of $P$ starting with the first back-edge of $P$ and ending with the last back-edge.

\end{definition}

\begin{figure}[ht]
    \centering
    \tgraphic{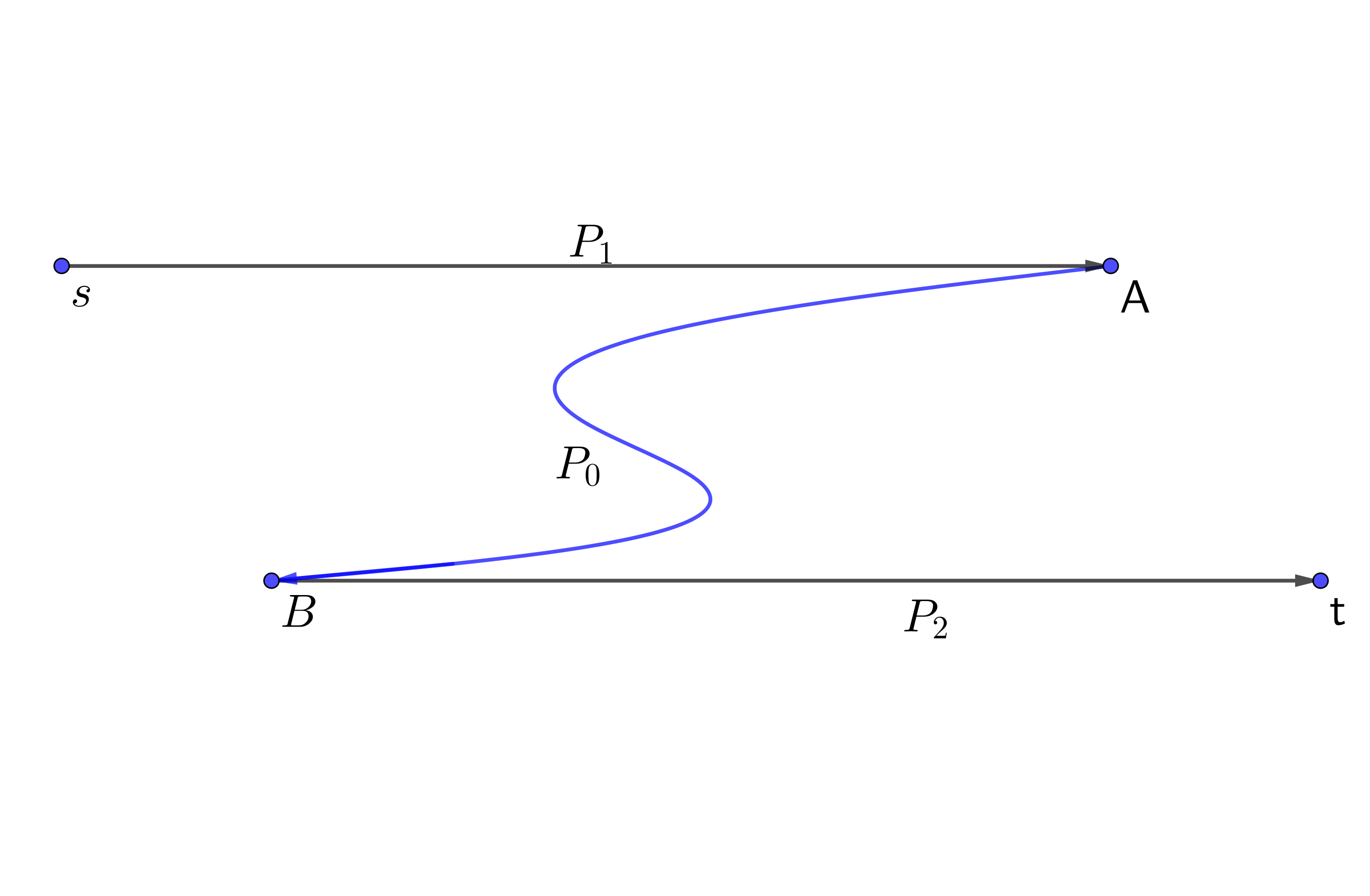}
    \caption{An illustration of Back-Edge Decomposition.}
    \label{fig:BEDecomp}
\end{figure}

 A similar concept ``maximal mixed path", which corresponds to $P_0$ in our~\Cref{def:decomp}, was used in \cite{DBLP:journals/networks/WuW15}. Lemma 1 in~\cite{DBLP:journals/networks/WuW15} implies the following lemma. Here we reprove the lemma using our notation.

\begin{lemma}\label{lem:layer-range}
Given an $(s,t)$-layered graph $G=(V,E,w)$ and any $\stot$ next-to-shortest path $P^*$, let $\BEDec{P^*}=(A,B;P^*_1,P^*_0,P^*_2)$. Then $d(B)<d(A)$, and for all $u\in \VSet(P^*_0)\setminus \lbrace A,B\rbrace$, $d(B)<d(u)<d(A)$. 
\end{lemma}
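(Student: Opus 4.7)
The plan is to prove both inequalities by contradiction: from any hypothetical violating vertex on $P^*_0$ I will manufacture a strictly shorter $s\utov t$ not-shortest simple path, contradicting that $P^*$ is next-to-shortest. First I note that $P^*_1$ and $P^*_2$ consist entirely of forward-edges (since they lie strictly before the first and strictly after the last back-edge of $P^*$), so $w(P^*_1)=d(A)$ and $w(P^*_2)=d(s,t)-d(B)$. Write $P^*_0$ as $A=u_0,u_1,\ldots,u_k=B$, where $(u_0,u_1)$ and $(u_{k-1},u_k)$ are back-edges. It suffices to show $d(u_i)<d(A)$ for every $i\geq 1$ and $d(u_i)>d(B)$ for every $i\leq k-1$.

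For the upper bound, suppose some $i\geq 1$ satisfies $d(u_i)\geq d(A)$ and let $i^*$ be the smallest such index. Since $d$ strictly increases across $(u_{i^*-1},u_{i^*})$, \Cref{lem:layered-lambda}(b) implies this edge is a forward-edge with $\lambda(u_{i^*})=\lambda(u_{i^*-1})+1$; combined with $\lambda(u_{i^*-1})<\lambda(A)\leq\lambda(u_{i^*})$, this forces $d(u_{i^*})=d(A)$. Also $i^*\neq k$: otherwise $d(B)\geq d(A)$ and the back-edge $(u_{k-1},B)$ would yield $d(u_{k-1})>d(B)\geq d(A)$, contradicting the minimality of $i^*$. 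Now consider
\[
W_2 \;:=\; P^*_{s\to u_{i^*}}\circ \anyP_{u_{i^*}\to t}.
\]
Because $P^*_{s\to u_{i^*}}$ contains the back-edge $(A,u_1)$, \Cref{lem:basic-back-edges} gives $w(P^*_{s\to u_{i^*}})>d(u_{i^*})=d(A)$, so $w(W_2)=w(P^*_{s\to u_{i^*}})+(d(s,t)-d(A))>d(s,t)$. Applying the analogous slack calculation (from the proof of \Cref{lem:basic-back-edges}) to $u_{i^*}\utov t$ walks, together with the layered-graph identity $d(u_{i^*},t)=d(s,t)-d(u_{i^*})$, the presence of the back-edge $(u_{k-1},B)$ in $P^*_{u_{i^*}\to t}$ yields $w(P^*_{u_{i^*}\to t})>d(s,t)-d(u_{i^*})$, so $w(W_2)<w(P^*)$. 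Finally, $W_2$ is simple: by the minimality of $i^*$, every vertex of $P^*_{s\to u_{i^*}}$ lies in a layer $\leq\lambda(A)$, while $\anyP_{u_{i^*}\to t}$ is a forward path through layers $\lambda(A),\lambda(A)+1,\ldots,\lambda(t)$, and at layer $\lambda(A)$ its only vertex is the shared endpoint $u_{i^*}$. Hence $W_2$ is a simple not-shortest $s\utov t$ path strictly shorter than $P^*$, a contradiction.

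The lower bound is proved by the symmetric argument: let $i^{**}$ be the largest $i\leq k-1$ with $d(u_i)\leq d(B)$, so that the analogous layering forcing gives $d(u_{i^{**}})=d(B)$, and the already-proved upper bound rules out $i^{**}=0$. A parallel calculation shows that
\[
W_1 \;:=\; \anyP_{s\to u_{i^{**}}}\circ P^*_{u_{i^{**}}\to t}
\]
is also a simple not-shortest $s\utov t$ path strictly shorter than $P^*$, again contradicting next-to-shortestness. I expect the main obstacle to be the simplicity verification of $W_1$ and $W_2$: it relies crucially on the extremal choice of $i^*$ (resp.\ $i^{**}$) together with the $(s,t)$-layered structure, which confines the two concatenated fragments to complementary layer ranges that meet only at the shared endpoint.
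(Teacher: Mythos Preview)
Your argument is correct and follows essentially the same strategy as the paper: from a hypothetical violating vertex on $P^*_0$, splice in a forward (shortest) path to obtain a strictly shorter not-shortest simple $s\utov t$ path, contradicting next-to-shortestness. The only cosmetic difference is your choice of pivot---you take the \emph{first} (resp.\ last) violating vertex along $P^*_0$ and use the layering to pin its distance to exactly $d(A)$ (resp.\ $d(B)$), whereas the paper directly takes the vertex of $\VSet(P^*_0)\setminus\{A,B\}$ with \emph{extremal layer} and gets simplicity immediately from that extremality; both choices make the simplicity check go through for the same reason.
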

\begin{proof}
If $\VSet(P^*_0)\setminus \lbrace A,B\rbrace$ is an empty set, then as $P^*_0$ begins with a back-edge, by~\Cref{lem:layered-lambda} (b), $d(B)<d(A)$.
In the following, we assume that $\VSet(P^*_0)\setminus \lbrace A,B\rbrace$ isn't empty.

First we prove that $\forall u\in \VSet(P^*_0)\setminus \lbrace A,B\rbrace$, $d(B)<d(u)$.
We choose $u\in \VSet(P^*_0)\setminus \lbrace A,B\rbrace$ with a minimum layer index. i.e., $u=\arg\min_{u'\in \VSet(P^*_0)\setminus \lbrace A,B\rbrace} \lambda(u')$. 
As $P^*_0$ begins with a back edge, we have $d(u)\leq d(A)$.
Suppose, for contradiction, that $d(u)\le d(B)$. 
Then the forward path $\anyP_{s\to u}$ does not intersect with $(P^*_0)_{u\to B}\circ (P^*_2)_{B\to t}$. 
Let us think about $P'=\anyP_{s\to u}\circ (P^*_0)_{u\to B}\circ (P^*_2)_{B\to t}$. 
We will derive a contradiction by showing that $P'$ is a not-shortest path that is shorter than $P^*$:
\begin{itemize}
\item $w(P')=w(\anyP_{s\to u})+w((P^*_0)_{u\to B})+w((P^*_2)_{B\to t})=d(u)+d(t)-d(B)+w((P^*_0)_{u\to B})$

$\le d(A)+d(t)-d(B)+w((P^*_0)_{u\to B})<w(P^*)$. 
\item Since the last edge of $P^*_0$ must be a back-edge, due to~\Cref{lem:basic-back-edges}, $w(P')>d(t)$. 
\end{itemize}
Therefore, $P'$ is a $\tightto{s}{t}$ not-shortest path that is shorter than $P^*$, which contradicts the assumption that $P^*$ is a next-to-shortest path. Hence, we have $\forall u\in \VSet(P^*_0)\setminus \lbrace A,B\rbrace$, $d(B)<d(u)$.

The rest of the proof is symmetric to the argument above.
We can assume that $v\in V(P^*_0)\setminus \lbrace A,B\rbrace$ is the vertex with maximum $d(v)$.
Suppose for contradiction that $d(v)\ge d(A)$. In this case, by a symmetric argument, $P^*_{s\to v}\circ \anyP_{v\to t}$ is a not-shortest path which is shorter than $P^*$. Consequently, if $\VSet(P_0^{*})\setminus \{A,B\}\neq\emptyset$, we also have $d(B)<d(A)$, since there exists $u\in \VSet(P_0^{*})\setminus \{A,B\}$ with $d(B)<d(u)<d(A)$. 
\end{proof}

The Key Lemma is as follows (see the accompanying \Cref{fig:XY}): 
\begin{lemma}[Key Lemma]\label{lem:key}
Given an $(s,t)$-layered graph $G=(V,E,w)$ and any $\stot$ next-to-shortest path $P^*$, let $\BEDec{P^*}=(A,B;P_1^*,P_0^*,P_2^*)$. There exist two edges $(X',X)\in E,(Y',Y)\in E$ 
satisfying the following conditions:

\begin{itemize}
\item $\lambda(X')=\lambda(Y')=\lambda(X)-1=\lambda(Y)-1$
\item There exists a pair of forward paths $(P_1,P_2)$ such that 
\begin{itemize}
    \item $P_1$ is a $s\utov A$ shortest path that goes through $(X',X)$
    \item $P_2$ is a $B\utov t$ shortest path that goes through $(Y',Y)$ 
    \item $\VSet(P_1)\cap \VSet(P_2)=\emptyset$;
\end{itemize}
and each pair of forward paths $(P_1,P_2)$ that satisfies the above conditions also satisfies:
$$\left(\VSet(P_0^*)\setminus \lbrace A,B\rbrace\right)\cap (\VSet(P_1)\cup \VSet(P_2))=\emptyset.$$
\end{itemize}
\end{lemma}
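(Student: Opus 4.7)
The plan is to pick $(X',X)$ and $(Y',Y)$ as a \emph{boundary} pair of edges on $P_1^*$ and $P_2^*$ and to establish condition~3 by contradiction. Since $P_0^*$ absorbs every back-edge of $P^*$, both $P_1^*$ and $P_2^*$ are entirely forward, hence shortest paths $s\utov A$ and $B\utov t$; by \Cref{lem:layered-lambda}(b) they visit each layer in their range exactly once, giving unique vertices $u_\ell\in\VSet(P_1^*)$ and $v_\ell\in\VSet(P_2^*)$ at layer $\ell$. By \Cref{lem:layer-range}, the interior vertices of $P_0^*$ have layers strictly between $\lambda(B)$ and $\lambda(A)$, and $u_\ell\neq v_\ell$ by simplicity of $P^*$. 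Call a layer $\ell\in[\lambda(B),\lambda(A)]$ \emph{intersecting} when there exist vertex-disjoint shortest forward paths $Q_1\colon s\utov u_\ell$ and $Q_2\colon B\utov v_\ell$ with $(\VSet(Q_1)\cup\VSet(Q_2))\cap(\VSet(P_0^*)\setminus\{A,B\})\neq\emptyset$; a direct layer check shows $\lambda(B)$ is not intersecting. If no layer is intersecting, take $(X',X)=(u_{\lambda(A)-1},A)$ and $(Y',Y)=(v_{\lambda(A)-1},v_{\lambda(A)})$: then $P_1=P_1^*, P_2=P_2^*$ witness condition~2 and condition~3 is immediate. Otherwise, let $\ell^*$ be the least intersecting layer and set $(X',X)=(u_{\ell^*-1},u_{\ell^*})$, $(Y',Y)=(v_{\ell^*-1},v_{\ell^*})$; condition~2 is again witnessed by $P_1^*,P_2^*$.

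For condition~3, I would suppose towards contradiction that some disjoint shortest forward $P_1\colon s\utov X'\utov X\utov A$ and $P_2\colon B\utov Y'\utov Y\utov t$ satisfy $(\VSet(P_1)\cup\VSet(P_2))\cap(\VSet(P_0^*)\setminus\{A,B\})\neq\emptyset$, and derive a strictly shorter $s\utov t$ not-shortest path, contradicting that $P^*$ is next-to-shortest. First, two structural rules (corresponding to \Cref{subsec:pdfps}) are established by exchange: neither $P_1$ nor $P_2$ can first meet $\VSet(P_1^*)\setminus\{s\}$ and later meet $\VSet(P_0^*)\setminus\{A,B\}$; neither can first meet $\VSet(P_0^*)\setminus\{A,B\}$ and later meet $\VSet(P_2^*)\setminus\{t\}$---either scenario directly splices into a not-shortest $s\utov t$ path strictly shorter than $P^*$. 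Second, two intersection witnesses on opposite sides of the boundary layer $\ell^*$ are extracted: the intersecting property of $\ell^*$ supplies disjoint shortest forward $Q_1\colon s\utov X$ and $Q_2\colon B\utov Y$ sharing a vertex $v\in\VSet(P_0^*)\setminus\{A,B\}$ at layer at most $\ell^*$; the non-intersecting property of $\ell^*-1$ forces the prefixes of $P_1,P_2$ up to $X',Y'$ to avoid $\VSet(P_0^*)\setminus\{A,B\}$, so the assumed intersection must lie in the $X\utov A$ part of $P_1$ or the $Y\utov t$ part of $P_2$, providing a vertex $w\in\VSet(P_0^*)\setminus\{A,B\}$ at layer at least $\ell^*$.

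Finally, following \Cref{subsec:final}, I would split on whether the first vertex of $P_0^*$ (traversed from $A$ to $B$) that lies on $P_1\cup P_2$ has layer at most $\ell^*$ or greater than $\ell^*$. In each case, using $v$, $w$, and the vertices from the structural rules as splice points, concatenate seven or nine subpath segments drawn from $P_1^*, P_0^*, P_2^*, P_1, P_2, Q_1, Q_2$ into an $s\utov t$ walk $P$ that (i) is simple, by the disjointness constraints plus the layer separation between $v$ and $w$; (ii) retains at least one back-edge of $P_0^*$, hence is not shortest by \Cref{lem:basic-back-edges}; and (iii) is strictly shorter than $P^*$, because each replaced $P^*$-fragment is swapped for a shortest-path fragment and at least one such swap is a strict improvement.

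The hard part will be the final case analysis. Verifying simplicity of $P$ requires every disjointness constraint from the structural rules, together with the layer separation between $v$ and $w$ that the boundary choice of $(X',X),(Y',Y)$ delivers; without this separation $P$ could revisit a vertex. Verifying the strict inequality $w(P)<w(P^*)$ requires that at least one spliced-out $P^*$-fragment be strictly longer than its replacement, and this is precisely where the asymmetry between the intersecting layer $\ell^*$ and the non-intersecting layer $\ell^*-1$ is essential---it is what forces a strict detour through $Q_1,Q_2$ that shortcuts part of $P^*$'s back-edge structure and hence yields the final contradiction.
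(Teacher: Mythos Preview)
Your overall architecture matches the paper's: pick $(X',X),(Y',Y)$ at a boundary layer (the least layer at which disjoint forward prefix paths can touch $P_0^*$), obtain intersection witnesses on both sides of that layer, and then splice together a shorter not-shortest path in a two-case analysis. Two points, however, are genuine gaps.

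First, your ``structural rules'' are stated too strongly to be true, and if they \emph{were} true they would make the rest of your argument unnecessary. You assert that neither $P_i$ can first meet $\VSet(P_1^*)\setminus\{s\}$ and later meet $\VSet(P_0^*)\setminus\{A,B\}$. But $P_1$ passes through $X\in\VSet(P_1^*)$ by construction, so if $P_1$ meets $P_0^*$ at all your rule already gives a contradiction and the ``seven or nine segment'' construction is never needed. In fact a forward path may weave between $P_1^*$ and $P_2^*$ many times, and the statement that actually survives the exchange argument is only \emph{local}: decomposing $P_i$ at its intersections with $P_1^*\cup P_2^*$, every segment containing a $P_0^*$-vertex must start on $P_2^*$ and end on $P_1^*$ (an ``up-path''; this is the paper's \Cref{lem:up-vertex-only}). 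That is strictly weaker than what you claim and does not finish the proof by itself.

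Second, you are missing the machinery that makes the hard case go through. You never combine $(Q_1,Q_2)$ with the suffixes of $(P_1,P_2)$ into a single PDFP $(P_1',P_2')$; without doing so, your case split on the first $P_0^*$-vertex in $P_1\cup P_2$ is vacuous, since by your own minimality argument all such vertices lie in the right region. In the paper this combination is \Cref{lem:two-inter}, and the hard case is when the first $P_0^*$-intersection along $P_0^*$ lies in the \emph{left} region. There the paper needs two further facts you do not mention: (\Cref{lem:no-gap}) one may replace the given compatible pair by another compatible $(P_1^{**},P_2^{**})$, chosen to maximise shared edges with $(P_1',P_2')$, so that there are \emph{no gap-paths}; and (\Cref{lem:up-path-exclusive}) once gap-paths are eliminated, the presence of an up-path in the right region forces $P_1'$ to contain a down-path and $P_2'$ an up-path in the right region with interleaved layer ranges. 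Those two subpaths supply four of the nine fragments in the Case~2 walk, and the interleaving is exactly what certifies simplicity. Your list of available pieces ($P_1^*,P_0^*,P_2^*,P_1,P_2,Q_1,Q_2$) together with the over-strong structural rules is not enough to build that walk or to verify it is simple.
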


\begin{figure}[ht]
    \centering
     \tgraphic{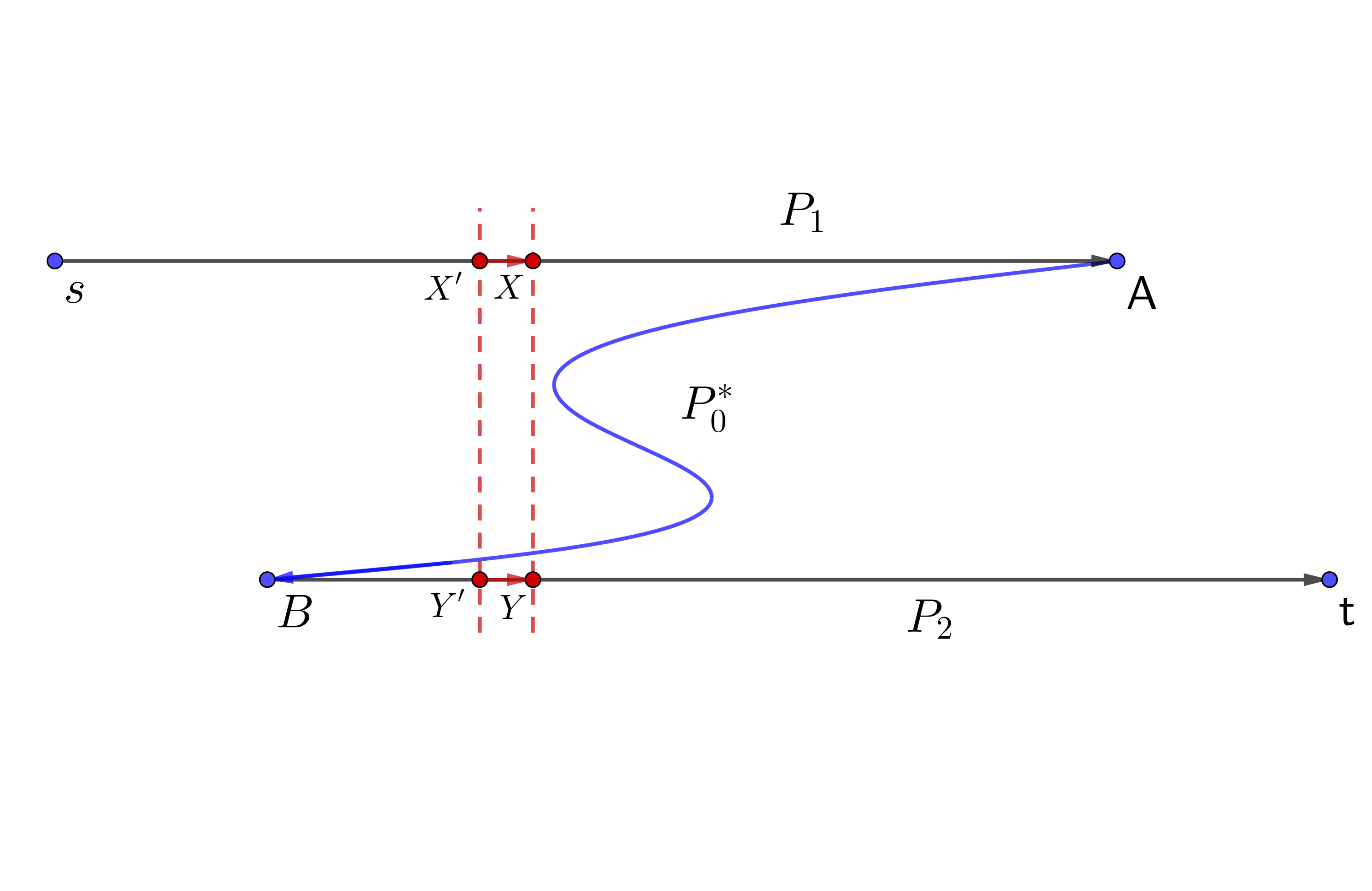}
    \caption{An illustration of $(X',X),(Y',Y)$ and $P^{*}$ in~\Cref{lem:key}}
    \label{fig:XY}
\end{figure}

The proof of the lemma is deferred to \Cref{sec:proof-of-key-lemma}.
Here, based on the Key Lemma, we provide a polynomial-time algorithm $\nspl$. The idea of the algorithm is to try every possible $A,B,X,X',Y,Y'$, and for each such tuple find paths $P_1$, $P_2$ using~\Cref{lem:DAG-VDP}, then combine these paths with a shortest $A\utov B$ path, which is guaranteed to result in a simple path by the Key Lemma.

\pagebreak
\begin{mdframed}
\vspace{-1em}
\paragraph{Algorithm $\nspl(G;s,t)$}\mbox{}\\
\textit{Input:} $(s,t)$-layered graph $G=(V,E,w)$, and two vertices $s,t\in V$ \\
\textit{Output:} A $s\utov t$ next-to-shortest path if one exists, or report $\bot$ otherwise to indicate that no next-to-shortest path exists. 
\begin{enumerate}
    \item Initialize $\hat{P}\gets \bot$.
    \item Enumerate all 6-tuples $(A,B,X',X,Y',Y)\in V^6$ satisfying $d(A)>d(B)$ (equivalently $\lambda(A)>\lambda(B)$), $A,B\in V_{B}(G)$,$(X',X)\in E,(Y',Y)\in E$ and $\lambda(X')=\lambda(Y')=\lambda(X)-1=\lambda(Y)-1$. For each such tuple of vertices, proceed as follows: 
        \begin{enumerate}[label=(\roman*),ref=\theenumi(\roman*)]
            \item Find two disjoint forward paths $P_1,P_2$ satisfying the conditions of~\Cref{lem:key}. 
            Since forward paths contain no back-edges, this reduces to finding two vertex-disjoint paths in a directed acyclic graph (DAG). 
            We can first apply~\Cref{lem:DAG-VDP} to find $(P_1)_{s\to X'},(P_2)_{B\to Y'}$, and then we call the algorithm in~\Cref{lem:DAG-VDP} again to find $(P_1)_{X\to A},(P_2)_{Y\to t}$. After that, we concatenate the output paths with edges $(X',X),(Y',Y)$. In particular, $P_1=(P_1)_{s\to X'}\circ (X',X)\circ (P_1)_{X\to A}$ and $P_2=(P_2)_{B\to Y'}\circ (Y',Y)\circ (P_2)_{Y\to t}$.
            If no such pair $(P_1,P_2)$ exists, proceed to the next iteration with a new tuple $(A,B,X',X,Y',Y)$.  \label{step:P1-P2}
            \item In graph $G\Big\lbrack \Big(V\setminus (\VSet(P_1)\cup \VSet(P_2))\Big)\cup \lbrace A,B\rbrace\Big\rbrack$, we find a shortest $A\utov B$ path $P_0$. (If there is no such path, $P_0\gets \bot$).   \label{step:dij}
            \item If $P_0\not =\bot$ and $w(P_1\circ P_0\circ P_2)<w(\hat{P})$, then we set $\hat{P}\gets P_1\circ P_0\circ P_2$.
        \end{enumerate} \label{step:enumerate}
    \item Return $\hat{P}$.
\end{enumerate}
\end{mdframed}

\begin{theorem}\label{thm:algo-nsp-layered}
Given an $(s,t)$-layered graph $G_L=(V(G_L),E(G_L),w)$ with two distinguished vertices $s,t\in V$, 
there exists a $O(|V_{B}(G_L)|^2|E_{FP}(G_L)|(|E(G_L)|+|V(G_L)|\log |V(G_L)|))$-time algorithm that either finds a $s\utov t$ next-to-shortest path 
(if one exists) or correctly reports that no such path exists. 
\end{theorem}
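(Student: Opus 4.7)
The plan is to prove correctness and then time complexity, with the Key Lemma supplying the crux of correctness. First I would verify \emph{soundness}: whenever the algorithm updates $\hat{P}$, the new value is a legitimate $s\utov t$ not-shortest simple path. This follows because $P_1,P_2$ are vertex-disjoint forward paths from $s$ to $A$ and from $B$ to $t$, and $P_0$ is chosen inside the induced subgraph where their interior vertices have been deleted, so $P_1\circ P_0\circ P_2$ is simple. Since $\lambda(A)>\lambda(B)$, every $A\utov B$ walk must traverse at least one back-edge, so by~\Cref{lem:basic-back-edges}, $P_1\circ P_0\circ P_2$ is not a shortest path. Consequently, if the algorithm outputs a path, it is an $s\utov t$ not-shortest simple path, and its length is the weight of $\hat{P}$.

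Next I would prove \emph{completeness}: if there exists a next-to-shortest path $P^*$, then $\hat{P}$ attains weight at most $w(P^*)$. Fix such a $P^*$ and let $\BEDec{P^*}=(A,B;P^*_1,P^*_0,P^*_2)$; by definition $A,B\in V_B(G)$, and $\lambda(A)>\lambda(B)$ by~\Cref{lem:layer-range}. Apply the Key Lemma (\Cref{lem:key}) to obtain edges $(X',X),(Y',Y)\in E$ whose endpoints satisfy the prescribed layer conditions. When the algorithm enumerates the tuple $(A,B,X',X,Y',Y)$, the two invocations of~\Cref{lem:DAG-VDP} on the forward-edge subgraph (which is acyclic by~\Cref{lem:layered-lambda}(b)) succeed because the Key Lemma guarantees the existence of a suitable disjoint pair. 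Crucially, in an $(s,t)$-layered graph every forward edge $(u,v)$ has weight $d(v)-d(u)$, so any forward $s\utov A$ path has total weight exactly $d(A)$ and any forward $B\utov t$ path has total weight exactly $d(t)-d(B)$; thus the pair $(P_1,P_2)$ returned by the 2-VDP subroutine automatically consists of shortest $s\utov A$ and $B\utov t$ paths, fulfilling all the hypotheses of~\Cref{lem:key}. Invoking the ``every such pair'' clause of the Key Lemma, we conclude
\[
\left(\VSet(P^*_0)\setminus\{A,B\}\right)\cap\left(\VSet(P_1)\cup\VSet(P_2)\right)=\emptyset.
\]
Hence $P^*_0$ itself lies inside $G\!\left[(V\setminus(\VSet(P_1)\cup\VSet(P_2)))\cup\{A,B\}\right]$, so the shortest $A\utov B$ path $P_0$ found there satisfies $w(P_0)\le w(P^*_0)$. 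Combining,
\[
w(P_1\circ P_0\circ P_2)\le d(A)+w(P^*_0)+(d(t)-d(B))=w(P^*_1)+w(P^*_0)+w(P^*_2)=w(P^*).
\]
Combined with soundness, this forces $\hat{P}$ to be a next-to-shortest path; and if no not-shortest path exists, then $\hat{P}$ remains at $\bot$ since every candidate $P_1\circ P_0\circ P_2$ would be a not-shortest path.

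For the \emph{running time}, I would count that the number of tuples enumerated is $O(|V_B(G)|^2\cdot |E_{FP}(G)|)$: two choices for $A,B\in V_B(G)$, and an $E_{FP}(G)$-sized collection of same-layer forward-edge pairs $((X',X),(Y',Y))$. Each iteration performs two calls to~\Cref{lem:DAG-VDP}, each in $O(|V|+|E|)$, followed by one Dijkstra on an induced subgraph to compute $P_0$ in $O(|E|+|V|\log|V|)$. Multiplying yields the claimed $O(|V_B(G)|^2|E_{FP}(G)|(|E(G)|+|V(G)|\log|V(G)|))$ bound.

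The main obstacle is conceptual rather than computational: the pair $(P_1,P_2)$ actually returned by~\Cref{lem:DAG-VDP} is some arbitrary vertex-disjoint pair of forward paths through $(X',X)$ and $(Y',Y)$, not necessarily the witness pair used to establish existence in the Key Lemma. The argument therefore hinges on the \emph{universal} quantifier in the second bullet of~\Cref{lem:key}, together with the observation that in an $(s,t)$-layered graph every forward $s\utov A$ (resp.\ $B\utov t$) path is automatically shortest. Without this two-step combination, there would be no guarantee that the specific $(P_1,P_2)$ produced by the 2-VDP subroutine inherits disjointness from $\VSet(P^*_0)\setminus\{A,B\}$, and the algorithm could fail to locate $P^*_0$ as a valid choice for $P_0$.
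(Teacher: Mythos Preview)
Your proposal is correct and follows essentially the same approach as the paper's proof: enumerate tuples, invoke the Key Lemma for existence, use the universal quantifier in the Key Lemma plus the fact that all forward $s\utov A$ (resp.\ $B\utov t$) paths have equal length to conclude optimality, and bound the running time by the same tuple-count--times--Dijkstra calculation. If anything, you are more explicit than the paper about soundness (why $P_1\circ P_0\circ P_2$ is simple and necessarily contains a back-edge) and about the role of the universal quantifier, which the paper leaves somewhat implicit.
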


\begin{proofwithassumption}{\Cref{thm:algo-nsp-layered}}{\Cref{lem:key}}
We consider the above algorithm $\nspl(G_L;s,t)$.
The number of choices for $A,B,(X',X),(Y',Y)$ is $O(|V_{B}(G_L)|^2\cdot |E_{FP}(G_L)|)$. Consider one choice.  
For each choice, by~\Cref{lem:DAG-VDP}, there is a $O(|V(G_L)|+|E(G_L)|)$-time algorithm that can solve 2-VDP in a DAG, so finding $(P_1,P_2)$ takes $O(|V(G_L)|+|E(G_L)|)$ time. To find the shortest from $A$ to $B$, we can use Dijkstra's algorithm, which runs in time $O(|E(G_L)|+|V(G_L)|\log |E(G_L)|)$. Hence, the total time complexity is $O(|V_{B}(G_L)|^2\cdot |E_{FP}(G_L)|\cdot (|E(G_L)|+|V(G_L)|\log |V(G_L)|))$


To prove correctness, we first consider whether a next-to-shortest path exists. 
If none exists, then clearly our algorithm will not find any path. 
Otherwise, let $P^{*}$ be an arbitrary $\stot$ next-to-shortest path with 
$\BEDec{P^{*}}=(A,B;P^{*}_1,P^{*}_0,P^{*}_2)$.

Our algorithm examines all choices of $A,B\in V(G_L)$ and $(X',X),(Y',Y)\in E(G_L)$, 
so there must exist some feasible $A,B,(X',X),(Y',Y)$ satisfying the condition 
in~\Cref{lem:key}. In the corresponding iteration of the loop, 
step~\ref{step:P1-P2} finds two disjoint shortest paths $P_1$ and $P_2$ 
that pass through $(X',X)$ and $(Y',Y)$, respectively. 
By~\Cref{lem:key}, both $P_1$ and $P_2$ do not intersect $P_0^*$ except at the endpoints. 
Therefore, in step~\ref{step:dij}, there must exist an $A\utov B$ path in the remaining graph. 
Let $P_0$ be the path found in step~\ref{step:dij}. 
By the above argument, $P_1\circ P_0\circ P_2$ must be a not-shortest path. In addition,
\[
\begin{aligned}
    w(P_1\circ P_0\circ P_2)
    &= (d(A)-d(s)) + w(P_0) + (d(t)-d(B))\\
    &= w(P_1^{*}) + w(P_0) + w(P_2^{*})\\
    &\le w(P_1^{*}) + w(P_0^{*}) + w(P_2^{*})\\
    &= w(P^{*}),
\end{aligned}
\]
so $P_1\circ P_0\circ P_2$ must be a next-to-shortest path. 
Therefore, if such a next-to-shortest path exists, 
$\nspl(G_L;s,t)$ must return a next-to-shortest path.
\end{proofwithassumption}

Now we restate the main theorem: 
\mainthm*

\begin{proofwithassumption}{\Cref{thm:main}}{\Cref{lem:key}}
    This statement follows by combining~\Cref{thm:reduction} and~\Cref{thm:algo-nsp-layered}.
\end{proofwithassumption}

\section{Proof of\texorpdfstring{~\Cref{lem:key}}{ Lemma 5.3}} \label{sec:proof-of-key-lemma}
In this section, we provide a proof for the Key Lemma (\Cref{lem:key}). We are given an $(s,t)$-layered graph $G=(V,E,w)$ with vertices $s,t\in V$. To simplify notation, we define the following:
\begin{itemize}
\item For $a,b,c,d\in V$, we call a pair of two paths $(P_1,P_2)$ an \textbf{$(\tightto{a}{b},\tightto{c}{d})$-Pair of Disjoint Forward Paths} ($(\tightto{a}{b},\tightto{c}{d})$-PDFP) if $P_1$ is a $a\utov b$ forward path (i.e., no back-edges), $P_2$ is a $c\utov d$ forward path  and these two paths are vertex-disjoint. 
Recall that such a pair can be found or determined not to exist in linear time by solving the 2-VDP problem on the DAG without back-edges, as stated in~\Cref{lem:DAG-VDP}. 
In addition, if we force these two paths to go through some specific vertices $(u_1,u_2,\ldots,u_p)$ and $(v_1,v_2,\ldots,v_{p})$ respectively, where $p$ is a constant and $\lambda(u_i)=\lambda(v_i)$ for all $1\le i\le p$, we abbreviate the pair as $(a\utov u_1\utov \ldots \utov u_p\utov b,c\utov v_1\utov \ldots \utov v_{p}\utov d)$-PDFP. In this case, we divide the DAG into $p+1$ parts according to the layers and solve $2$-VDP for each part. 
If every part admits such a pair of disjoint paths, then since the parts correspond to disjoint intervals of layers, 
the concatenated paths remain disjoint. Hence, such a pair can also be found or determined not to exist in linear time.  
\item We call an $\tightto{A}{B}$ path $P^{*}_0$ an \textbf{oracle middle path} if there exists an $\stot$ next-to-shortest path $P^{*}$ such that $\BEDec{P^{*}}=(A,B;P_1^*,P_0^*,P_2^*)$ for some other subpaths $P_1^*$,$P_2^*$. In addition, we say that such a pair $(P_1^{*},P_2^{*})$ is \textbf{compatible} with $P_0^*$ (or $P_0^{*}$-compatible).  
\item For an oracle middle path $P_0^{*}$ from $A$ to $B$, and a $(\tightto{a}{b},\tightto{c}{d})$-PDFP $(P_1,P_2)$ ($a,b,c,d\in V$), we denote $\calI(P_0^{*};P_1,P_2)=(\VSet(P_0^{*})\setminus \lbrace A,B\rbrace)\cap (\VSet(P_1)\cup \VSet(P_2))$ as the set of intersection points between $\VSet(P_0^{*})\setminus \lbrace A,B\rbrace$ and the union of $\VSet(P_1)$ and $\VSet(P_2)$.
\item Let $P_0^*$ be an oracle middle path from $A$ to $B$. We say that a 4-tuple $(X',X,Y',Y)\in V^4$ is \textbf{$P_0^*$-isolated} if the following hold: 
\begin{itemize}
    \item $(X',X),(Y',Y)\in E$ and $\lambda(X')=\lambda(Y')=\lambda(X)-1=\lambda(Y)-1$;
    \item There exists a $P_0^{*}$-compatible $(s\utov X'\utov X\utov A,B\utov Y'\utov Y\utov t)$-PDFP;
    \item For all $(s\utov X'\utov X\utov A,B\utov Y'\utov Y\utov t)$-PDFPs $(P_1,P_2)$, $\calI(P_0^*;P_1,P_2)=\emptyset$.
\end{itemize}
In other words, $(X',X,Y',Y)$ satisfies the conditions in~\Cref{lem:key}.
\end{itemize}

Now, we restate~\Cref{lem:key} using the above definitions:
\begin{lemma*}[Restatement of~\Cref{lem:key}]
Given an \textbf{oracle middle path} $P^{*}_0$ from $A$ to $B$, there exists a \textbf{$P_0^*$-isolated} 4-tuple $(X',X,Y',Y)$.
\end{lemma*}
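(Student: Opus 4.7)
My plan is to explicitly construct the 4-tuple $(X', X, Y', Y)$ via a critical-layer argument on the scaffolds $P_1^*$ and $P_2^*$, and then verify the isolatedness conditions, with the hard one handled by contradiction. Since $P_0^*$ begins and ends with back-edges, the subpaths $P_1^*$ and $P_2^*$ are forward, and hence each passes through every layer in $[1, \lambda(A)]$ and $[\lambda(B), k]$ exactly once; write $x_\ell$, $y_\ell$ for the unique vertex of $P_1^*$, $P_2^*$ at layer $\ell$. Call layer $\ell \in [\lambda(B), \lambda(A)]$ \emph{intersecting} if there exists a disjoint pair of forward paths $s \utov x_\ell$ and $B \utov y_\ell$ whose union meets $\VSet(P_0^*) \setminus \lbrace A, B\rbrace$. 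The base layer $\lambda(B)$ is non-intersecting: the $B \utov y_{\lambda(B)}$ path is the trivial path at $B$, the $s \utov x_{\lambda(B)}$ forward path is confined to layers $\le \lambda(B)$, and by~\Cref{lem:layer-range} every internal vertex of $P_0^*$ lies strictly between layers $\lambda(B)$ and $\lambda(A)$. Let $\ell^*$ be the smallest intersecting layer if one exists, otherwise $\ell^* := \lambda(A)$, and set $(X', X) := (x_{\ell^*-1}, x_{\ell^*})$, $(Y', Y) := (y_{\ell^*-1}, y_{\ell^*})$. These are edges of $P_1^*$ and $P_2^*$ with $\lambda(X') = \lambda(Y') = \lambda(X) - 1 = \lambda(Y) - 1$, and the pair $(P_1^*, P_2^*)$ is itself a $P_0^*$-compatible PDFP that traverses both, giving the required existence of a compatible PDFP.

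What remains is to verify that every $(s \utov X' \utov X \utov A,\, B \utov Y' \utov Y \utov t)$-PDFP $(P_1, P_2)$ satisfies $\calI(P_0^*; P_1, P_2) = \emptyset$, which I argue by contradiction from an assumed $v \in \calI$, splitting on $\lambda(v)$. In \emph{Case A} ($\lambda(v) \le \ell^* - 1$): since $P_1, P_2$ are forward, $v$ must lie on the prefix $(P_1)_{s \to X'}$ or $(P_2)_{B \to Y'}$; those prefixes then witness that layer $\ell^* - 1$ is intersecting, directly contradicting the minimality of $\ell^*$ (or, if no intersecting layer exists, contradicting our defining choice). In \emph{Case B} ($\lambda(v) \ge \ell^*$): if $\ell^*$ was set by the fallback $\lambda(A)$, then~\Cref{lem:layer-range} forces $\lambda(v) < \lambda(A) \le \ell^*$, making this case vacuous; otherwise $\ell^*$ really is intersecting, so by its definition I also obtain disjoint forward paths $P_1^\dagger$ from $s \utov X$ and $P_2^\dagger$ from $B \utov Y$ together with a second witness $v^\dagger \in \VSet(P_0^*) \setminus \lbrace A, B\rbrace$ at some layer in $(\lambda(B), \lambda(X)]$. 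With the two bracketing intersection points $v^\dagger$ (at or before layer $\lambda(X)$) and $v$ (at or after layer $\lambda(X)$) in hand, I plan to stitch a not-shortest $s \utov t$ path $P$ strictly shorter than $P^*$ from segments of $P_1, P_2, P_1^\dagger, P_2^\dagger, P_1^*, P_0^*, P_2^*$. The exact pattern, a 7- or 9-segment concatenation as depicted in~\Cref{fig:final-1} and~\Cref{fig:final-2}, depends on whether the \emph{first} vertex of $P_0^*$ (traversed from $A$ to $B$) lying in $\VSet(P_1)\cup\VSet(P_2)$ sits before or after layer $\lambda(X)$. Its existence contradicts $P^*$ being next-to-shortest.

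The main obstacle is Case B: confirming that the stitched walk $P$ is (i) a simple path, (ii) strictly shorter than $P^*$, and (iii) still contains at least one back-edge (so that it is genuinely a not-shortest path). Length and the surviving back-edge follow fairly directly from the forward and shortest-path nature of the individual pieces together with the retention of a chunk of $P_0^*$ inside $P$. Simplicity is the delicate part, and I expect to rely on the no-shortcut structural lemmas proved in~\Cref{subsec:pdfps}: intuitively, if any $P_1$ or $P_2$ ever jumped from a $P_1^*$-intersection to a $P_0^*$-intersection (or from $P_0^*$ to $P_2^*$), one could replace the corresponding portion of $P^*$ by the shortest-path shortcut through $P_1$ (or $P_2$) to obtain a not-shortest $s \utov t$ path shorter than $P^*$. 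These rigidity properties force the intersection pattern to be tame enough that $v^\dagger, v$ serve as the only pivots the stitching needs, and combining them with the layer-critical choice of $(X', X, Y', Y)$ then closes the argument.
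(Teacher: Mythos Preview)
Your construction and case split are essentially the paper's: your ``smallest intersecting layer'' $\ell^*$ plays the role of the leftmost critical pair, Case~A is the paper's minimality contradiction, and Case~B is the paper's two-sided-intersection setup (\Cref{lem:two-inter}) followed by the two stitching patterns of \Cref{fig:final-1} and \Cref{fig:final-2}.

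There is, however, one missing ingredient that your plan glosses over and that the up-vertex-only lemma of \Cref{subsec:pdfps} does not supply. You intend to stitch against the \emph{fixed} scaffold $(P_1^*,P_2^*)$, but for the 9-segment pattern (your Case~B, first intersection on the left) the paper must first replace $(P_1^*,P_2^*)$ by a \emph{new} compatible pair $(P_1^{**},P_2^{**})$ chosen to maximize edge-overlap with the intersecting pair $(P_1',P_2')$; this is \Cref{lem:no-gap}, and it is what guarantees $\gapV(P_1^{**},P_2^{**};P_1',P_2')=\emptyset$. Only with the no-gap property can one deduce (\Cref{lem:up-path-exclusive}) that in the right region $P_1'$ has a down-path and $P_2'$ has an up-path with interlocking layer indices $\lambda(u_{1R})>\lambda(u_{2L})$ and $\lambda(u_{2R})>\lambda(u_{1L})$; these four vertices are exactly the pivots the 9-segment walk needs, and they also guarantee the subpaths of $P_1^{**},P_2^{**}$ appearing in the stitch are pairwise disjoint. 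Without passing to $(P_1^{**},P_2^{**})$, the decomposition of $(P_1',P_2')$ relative to $(P_1^*,P_2^*)$ may contain gap-paths in the right region, and then neither the existence of those four pivots nor the simplicity of the stitched walk follows. Once you add this overlap-maximization step, your plan coincides with the paper's proof.
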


\noindent Here are some high-level ideas for our proof:
\begin{itemize}
    \item We prove the Key Lemma by contradiction. During our proof, we always assume that there exists an $(s\utov X'\utov X\utov A,B\utov Y'\utov Y\utov t)$-PDFP $(P_1,P_2)$ that intersects with $P_0^{*}$ (i.e., $\calI(P_0^{*};P_1,P_2)\not =\emptyset$). 
    \item First, we consider an $s\utov A$ or $B\utov t$ path $P$ that may intersect with $P_0^{*}$. We discuss the relationship between $P$ and any $(P_1^{*},P_2^{*})$ that is compatible with $P_0^{*}$. We claim that the intersection $(\VSet(P_0^{*})\setminus \lbrace A,B\rbrace)\cap \VSet(P)$ consists only of a special type of vertex that we call an \textbf{Up-Vertex}.
    \item Secondly, we explicitly construct a pair $(X,Y)$ such that   $d(B)\le d(X)=d(Y)\le d(A)$, and then further construct $X'$ and $Y'$. We show that if an $(s\utov X'\utov X\utov A,B\utov Y'\utov Y\utov t)$-PDFP $(P_1,P_2)$ intersects with $P_0^{*}$, then we can construct an $(s\utov X\utov A,B\utov Y\utov t)$-PDFP $(P'_1,P_2')$ that intersects $P_0^{*}$ in at least two vertices $\alpha,\beta\in \VSet(P_0^{*})$, where $d(\alpha)<d(X)$ and $d(\beta)>d(X)$.
    \item Finally, based on those two vertices $\alpha$ and $\beta$, we construct a next-to-shortest path with smaller length, which contradicts our assumption that the oracle middle path $P_0^{*}$ corresponds to a next-to-shortest path.
\end{itemize}

\subsection{Relationship between Arbitrary PDFPs and Compatible PDFPs}\label{subsec:pdfps}

The following definition decomposes a path $P$ into segments based on the intersection pattern of $P$ with paths $(P^*_1,P^*_2)$. See \Cref{fig:updown}.

\begin{definition}[Up/Down/Gap Paths and Vertices]\label{def:updown}
Consider an $(s\utov A,B\utov t)$-PDFP $(P^*_1,P^*_2)$ that is compatible with some oracle middle path $P_0^*$. 
For a forward path $P$ with both endpoints in $\VSet(P_1^*)\cup \VSet(P_2^*)$, denote $\interV(P_1^{*},P_2^*;P)=(\VSet(P^{*}_1)\cup \VSet(P^{*}_2))\cap \VSet(P)$. 
According to the order in $P$, we list the items of $\interV(P^*_1,P^*_2;P)$: $(u_1,\ldots,u_{k})$, where $k=|\interV(P^*_1,P^*_2;P)|$ is the number of intersection points.
Then $P$ can be decomposed into $k-1$  subpaths: the $i$-th subpath is $P_{u_i\to u_{i+1}}$ for $1\le i<k$. We classify each subpath $P_{u_i \to u_{i+1}}$ of $P$ into three types, based on the endpoints' membership in $P_1^*$ and $P_2^*$:
\begin{itemize}
  \item \textbf{Up-Path}: $u_i \in \VSet(P_2^*)$, $u_{i+1} \in \VSet(P_1^*)$;
  \item \textbf{Down-Path}: $u_i \in \VSet(P_1^*)$, $u_{i+1} \in \VSet(P_2^*)$;
  \item \textbf{Gap-Path}: both $u_i, u_{i+1} \in \VSet(P_1^*)$ or both in $\VSet(P_2^*)$.
\end{itemize}

For each case, we define the corresponding internal vertices (excluding shared intersection points)
$V(P_{u_i \to u_{i+1}}) \setminus \interV(P_1^*, P_2^*; P)$ 
as \textbf{Up-Vertices}, \textbf{Down-Vertices}, or \textbf{Gap-Vertices}, respectively.

We denote their union over all such subpaths in $P$ by
\[
\upV(P_1^*, P_2^*; P), \quad \downV(P_1^*, P_2^*; P), \quad \text{and } \gapV(P_1^*, P_2^*; P),
\]
respectively. For a PDFP $(P_1, P_2)$, we extend the notation by defining:
\[
\downV(P_1^*, P_2^*; P_1, P_2) := \downV(P_1^*, P_2^*; P_1) \cup \downV(P_1^*, P_2^*; P_2),
\]
and analogously for $\upV$ and $\gapV$.
\end{definition}

\paragraph*{Remark.} The notations $\calI(P_0^{*}; P_1, P_2)$ and $\interV(P_1^{*}, P_2^{*}; P)$ are similar but differ slightly. The former is defined in the context of an oracle middle path $P_0^*$, while the latter arises in the comparison between two PDFPs. Notably, $\interV(P_1^{*}, P_2^{*}; P)$ includes the endpoints of $P$. As a result, we have the following decomposition:
\[
V(P) = \upV(P_1^{*}, P_2^{*}; P) \uplus \downV(P_1^{*}, P_2^{*}; P) \uplus \gapV(P_1^{*}, P_2^{*}; P) \uplus \interV(P_1^{*}, P_2^{*}; P).
\]
\begin{figure}[h!]
    \centering
    \tgraphic{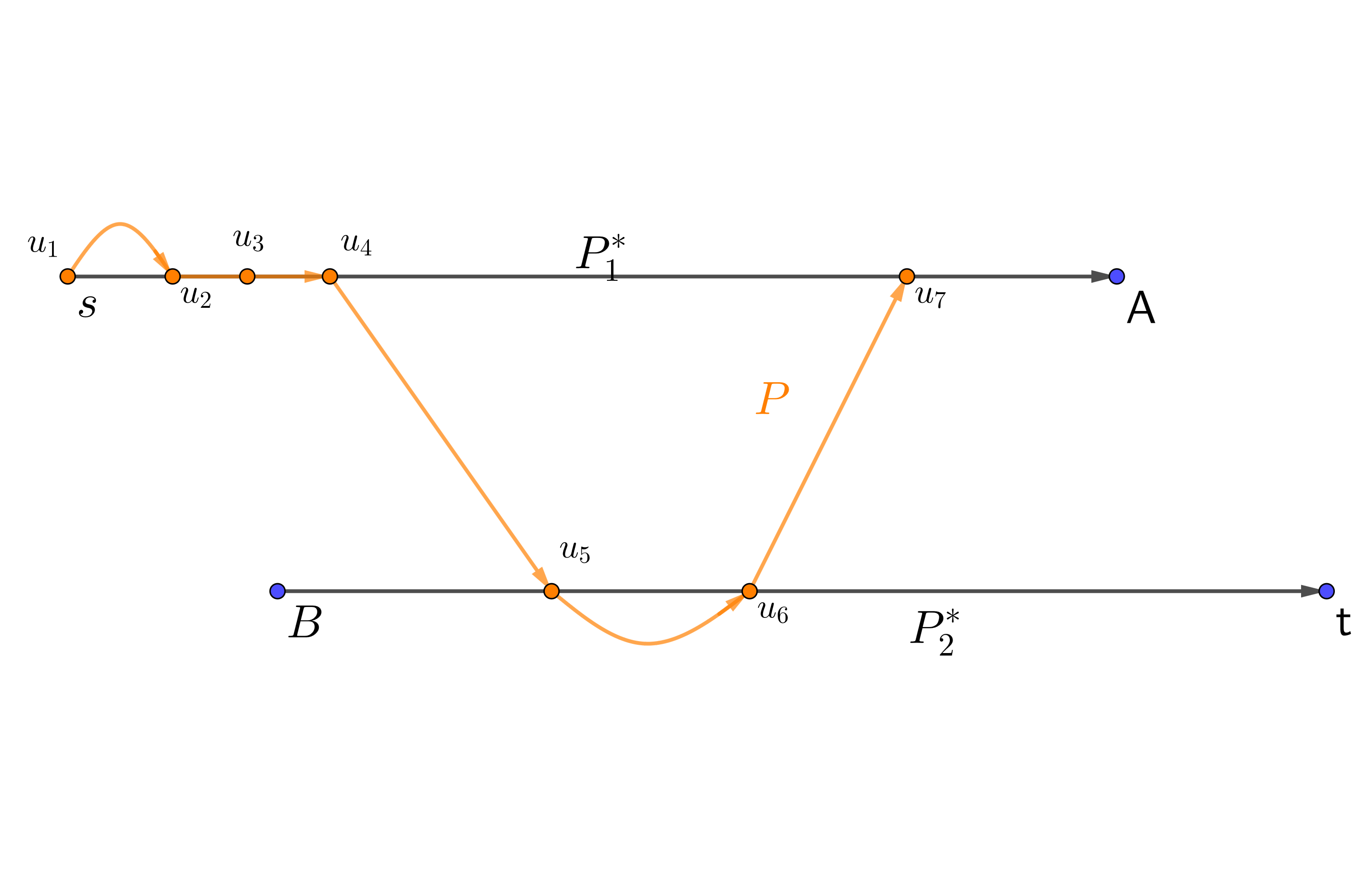}
    \caption{An example of~\Cref{def:updown}. $P$ is a path from $u_1$ to $u_7$, and $\interV(P_1^{*},P_2^{*};P)=\lbrace u_1,u_2,\ldots,u_7\rbrace$. $P_{u_1\to u_2},P_{u_2\to u_3},P_{u_3\to u_4}$ and $P_{u_5\to u_6}$ are gap-paths. However, $P_{u_2\to u_3}$ and $P_{u_3\to u_4}$ contain no gap-vertices. $P_{u_4\to u_5}$ is a down-path. $P_{u_6\to u_7}$ is an up-path.}
    \label{fig:updown}
\end{figure}

\begin{lemma}\label{lem:up-vertex-only}
Given an oracle middle path $P_0^*$ from $A$ to $B$, for any $P_0^{*}$-compatible $(s\utov A,B\utov t)$-PDFP $(P_1^*,P_2^*)$ and any $(s\utov A,B\utov t)$-PDFP $(P_1,P_2)$, $(P_1,P_2)$ intersects $P_0^{*}$ only in up-vertices. i.e.,  
$$\calI(P_0^*;P_1,P_2)\subseteq \upV(P_1^{*},P_2^{*};P_1,P_2).$$
\end{lemma}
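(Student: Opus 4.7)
The plan is to argue by contradiction: assuming there is some $v\in\calI(P_0^*;P_1,P_2)$ with $v\notin\upV(P_1^*,P_2^*;P_1,P_2)$, I will stitch fragments of $P_1^*,P_0^*,P_2^*$ and the path $P_j\in\{P_1,P_2\}$ containing $v$ into a simple $s\utov t$ not-shortest path $Q$ with $w(Q)<w(P^*)$, where $P^*:=P_1^*\circ P_0^*\circ P_2^*$. This contradicts $P^*$ being next-to-shortest, so $v$ must in fact be an up-vertex.

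First I would observe that $\VSet(P_0^*)\setminus\{A,B\}$ is disjoint from $\VSet(P_1^*)\cup\VSet(P_2^*)$ by the simplicity of $P^*$, so $v\notin\interV(P_1^*,P_2^*;P_1,P_2)$. Hence $v$ lies in the interior of some subpath $\sigma=(P_j)_{u_i\to u_{i+1}}$. Inspecting~\Cref{def:updown}, the complement of ``$\sigma$ is an up-path'' is exactly ``$u_i\in\VSet(P_1^*)$ or $u_{i+1}\in\VSet(P_2^*)$'' (using $\VSet(P_1^*)\cap\VSet(P_2^*)=\emptyset$), so I would split into these two cases, which are mirror images under $s\leftrightarrow t$ and swapping the first and last back-edge of $P_0^*$.

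For the case $u_i\in\VSet(P_1^*)$, I would first replace $v$ by the earliest vertex of $\sigma$ (in the $P_j$-order) lying in $\VSet(P_0^*)\setminus\{A,B\}$, so $(P_j)_{u_i\to v}$ meets $\VSet(P_0^*)$ only at $v$. Then define
\[
Q\;:=\;(P_1^*)_{s\to u_i}\circ (P_j)_{u_i\to v}\circ (P_0^*)_{v\to B}\circ P_2^*.
\]
Simplicity of $Q$ reduces to pairwise checks on its four fragments: $P_1^*,P_0^*,P_2^*$ share only $\{A,B\}$; the interior of $\sigma$ avoids $\VSet(P_1^*)\cup\VSet(P_2^*)$ by definition; the extremal choice of $v$ handles $(P_j)_{u_i\to v}\cap\VSet(P_0^*)$; and~\Cref{lem:layer-range} confines the interior layers of $P_0^*$ to $(\lambda(B),\lambda(A))$, ruling out the remaining crossings. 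Because $(P_0^*)_{v\to B}$ still contains the final back-edge of $P_0^*$, \Cref{lem:basic-back-edges} says $Q$ is not a shortest path. Using that forward paths from $s$ realize their layer-distance, a direct computation yields
\[
w(P^*)-w(Q)\;=\;d(A)+w((P_0^*)_{A\to v})-d(v)\;>\;0,
\]
since $\lambda(v)<\lambda(A)$ forces $d(A)>d(v)$ and $w((P_0^*)_{A\to v})\ge 0$. This gives the desired contradiction.

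The case $u_{i+1}\in\VSet(P_2^*)$ is symmetric: take $v$ to be the \emph{last} vertex of $\sigma$ in $\VSet(P_0^*)\setminus\{A,B\}$ and use
\[
Q\;:=\;P_1^*\circ (P_0^*)_{A\to v}\circ (P_j)_{v\to u_{i+1}}\circ (P_2^*)_{u_{i+1}\to t};
\]
now $Q$ traverses the first back-edge of $P_0^*$ and an analogous calculation gives $w(P^*)-w(Q)=w((P_0^*)_{v\to B})+d(v)-d(B)>0$ from $\lambda(v)>\lambda(B)$ and positivity of edge weights. The main obstacle I anticipate is the bookkeeping to confirm simplicity of $Q$: because the fragments span overlapping layer ranges and $P_0^*$ is not forward, verifying pairwise disjointness away from the prescribed concatenation points requires a careful case analysis leveraging \Cref{lem:layer-range}, the pairwise disjointness of $P_1^*,P_0^*,P_2^*$ outside $\{A,B\}$, the disjointness of $P_1$ from $P_2$, and the extremal choice of $v$ within $\sigma$.
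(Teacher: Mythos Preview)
Your proposal is correct and essentially identical to the paper's proof: the paper also argues by contradiction, splits into the two cases $u_R\in\VSet(P_1^*)$ versus $u_R\in\VSet(P_2^*)$ (logically equivalent to your split on $u_i\in\VSet(P_1^*)$ versus $u_{i+1}\in\VSet(P_2^*)$), and constructs precisely the same two walks $W_1=(P_1^*)_{s\to u_L}\circ p_{u_L\to u}\circ (P_0^*)_{u\to B}\circ P_2^*$ and $W_2=P_1^*\circ (P_0^*)_{A\to u}\circ p_{u\to u_R}\circ (P_2^*)_{u_R\to t}$, with the same extremal choice of $u$ within the subpath and the same use of \Cref{lem:layer-range} for both simplicity and the strict-length inequality.
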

\begin{proof}
Since $\calI(P_0^{*};P_1^*,P_2^*)=\emptyset$, for any $u\in \calI(P_0^{*};P_1,P_2)$, $u$ can be an up-vertex, a down-vertex, or a gap-vertex. We argue by contradiction. Suppose, for the sake of contradiction, that $u$ is either a down-vertex or a gap-vertex. 
By~\Cref{def:updown}, this implies that $u$ lies on either a down-path or a gap-path. 
Denote $p$ as this down-path or gap-path, and $u_L,u_R\in \VSet(p)$ are two endpoints of $p$. We discuss two cases based on whether $u_R\in \VSet(P_1^*)$. See \Cref{fig:up-only}. 

\textbf{Case 1: $u_R\in \VSet(P_1^*)$.  }\\ Without loss of generality, we assume that $u$ is the first vertex on $p$ (in the direction from $u_L$ to $u_R$) that is in $P_0^*$. That is, the internal vertices of the subpath $p_{u_L \to u}$ are disjoint from $P_0^*$; formally,
\[
\left( \VSet(p_{u_L \to u}) \setminus \{u_L, u\} \right) \cap \VSet(P_0^*) = \emptyset.
\]
In this case, $u$ must be a gap-vertex and $u_L\in \VSet(P_1^*)$. Therefore, we construct the following walk:
$$W_1=(P_1^*)_{s\to u_L}\circ (p)_{u_L\to u}\circ (P_0^*)_{u\to B}\circ (P_2^*).$$
\begin{itemize}
\item By~\Cref{def:updown}, we know that the subpath $(p)_{u_L \to u}$ does not intersect $P_1^*$ or $P_2^*$, except possibly at its endpoints. Moreover, since $P_0^*$, $P_1^*$, and $P_2^*$ together form a path, they do not intersect one another except at their endpoints. Therefore, this walk contains no repeated vertices.
\item By~\Cref{lem:layer-range}, we know $d(u)>d(B)$, so $(P_0^*)_{u\to B}$ contains a back-edge. By~\Cref{lem:basic-back-edges}, $w(W_1)>d_{G}(s,t)$.
\item 
Recall that $d(u)<d(A)$ by \Cref{lem:layer-range}.
Since $w((p)_{u_L\to u})=d(u)-d(u_L)$ is strictly smaller than $w((P_1^*)_{u_L\to A})$, we have $w(W_1)=d(u)+w\left((P_0^*)_{u\to B}\circ (P_2^*)\right)< w(P_1^{*}\circ P_0^{*}\circ P_2^{*})$. 
\end{itemize}
Therefore, $W_1$ is a shorter $s\utov t$ not-shortest path, which contradicts the assumption that $P_1^{*}\circ P_0^{*}\circ P_2^{*}$ is a next-to-shortest path.

\begin{figure}[t]
  \centering

  \begin{subfigure}{\linewidth}
    \centering
    \tgraphic{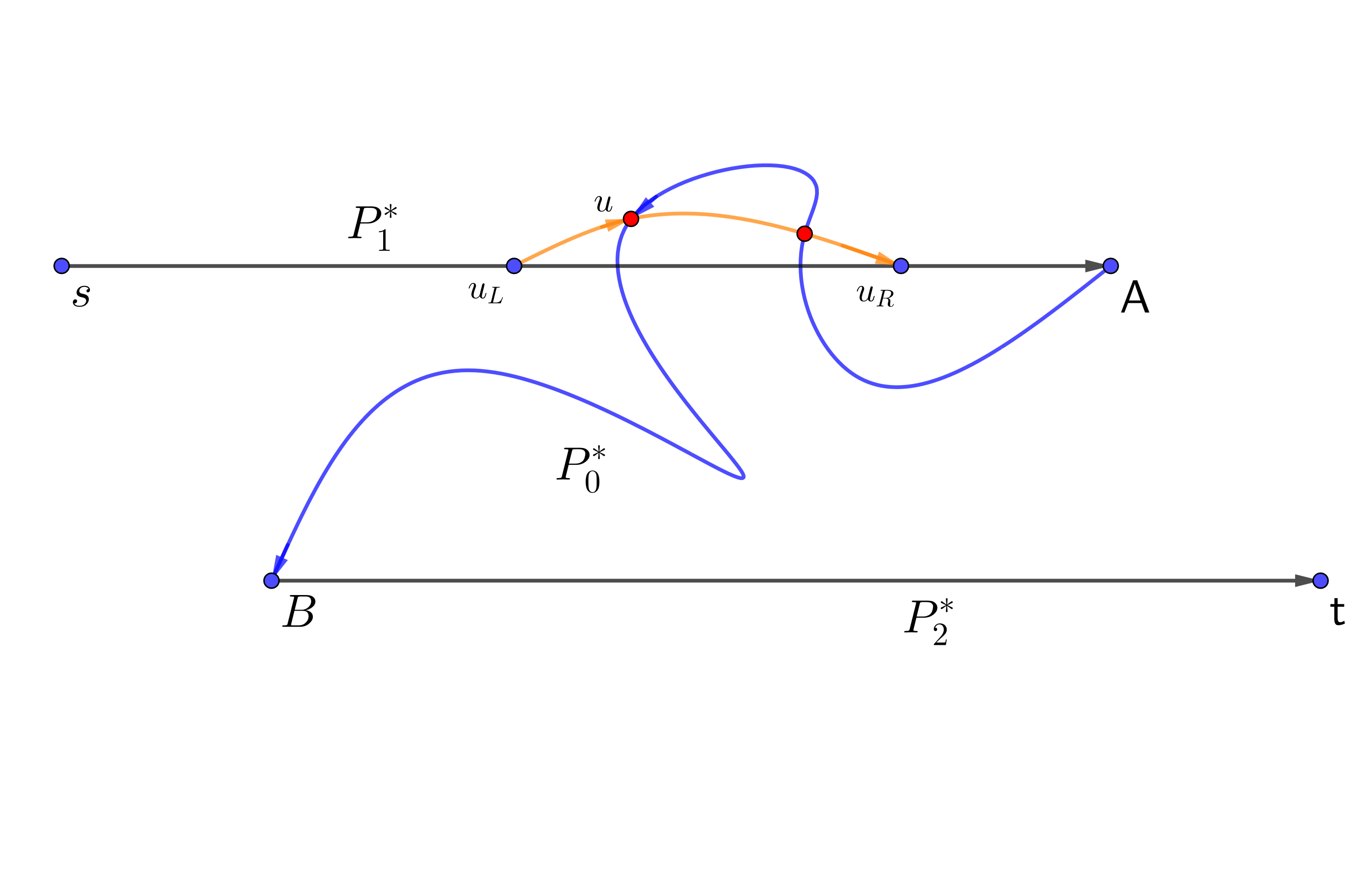}
    \caption{Case 1.}
    \label{fig:up-only-1}
  \end{subfigure}

  \vspace{1em} 

  \begin{subfigure}{\linewidth}
    \centering
    \tgraphic{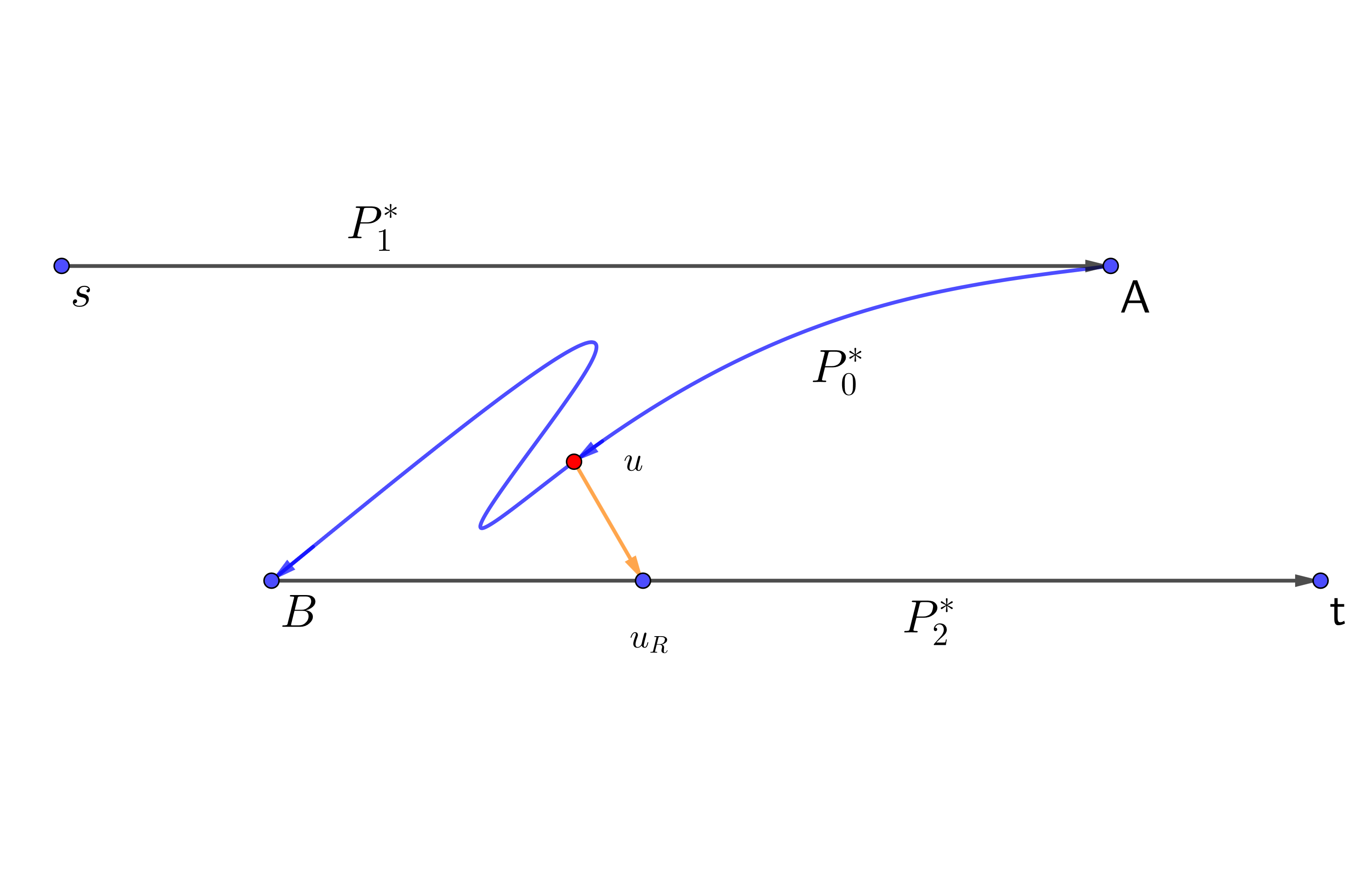}
    \caption{Case 2.}
    \label{fig:sub2}
  \end{subfigure}

  \caption{An illustration of proof for \Cref{lem:up-vertex-only}.}
  \label{fig:up-only}
\end{figure}

\textbf{Case 2: $u_R\in \VSet(P_2^*)$.  }\\ Without loss of generality, we let $u$ be the last vertex on $p$ (in the direction from $u_L$ to $u_R$) that intersects $P_0^{*}$. We consider the following walk:
$$W_2=P_1^{*}\circ (P_0^{*})_{A\to u}\circ p_{u\to u_R}\circ (P_2^{*})_{u_R\to t}.$$
\begin{itemize}
\item By our assumption that $u$ is the last vertex on $p$, $p_{u\to u_R}$ does not intersect $P_0^{*}$ (except for $u$).
Also, by~\Cref{def:updown}, $p_{u\to u_R}$ does not intersect $P_0^{*},P_1^{*}$ or $P_2^{*}$ except at its endpoints, so $W_2$ contains no repeated vertices.
\item By~\Cref{lem:layer-range}, we know $d(u)<d(A)$, so $(P_0^{*})_{A\to u}$ contains a back-edge and $w(W_2)>d_{G}(s,t)$.
\item Similar to case 1, by \Cref{lem:layer-range} we have $d(u)>d(B)$. Therefore,
$w(W_2)=w(P_1^*)+\left(w((P_0^*)_{A\to u})+w(p_{u\to u_R})\right)+w((P_2^*)_{u_R\to t})<w(P_1^*)+w(P_0^*)+w(P_2^*)$.
\end{itemize}
Therefore, $W_2$ is a shorter $s\utov t$ not-shortest path, which contradicts the assumption that $P_1^{*}\circ P_0^{*}\circ P_2^{*}$ is a next-to-shortest path.
\end{proof}

\subsection{Construction of the Isolated Tuple} \label{subsec:construction}
Recall that we are discussing an $(s,t)$-layered graph $G=(V,E,w)$. In this subsection, for an oracle middle path $P_0^{*}$, we provide a construction of the two edges $(X',X),(Y',Y)\in E$, and show some properties of this construction. 

\begin{definition}\label{def:critical}
Given an oracle middle path $P_0^{*}$ from $A$ to $B$, we call a pair $(X,Y)\in V\times V$ \textbf{critical} 
if it satisfies the following conditions:
\begin{enumerate}[label=(\roman*)]
\item $d(B)\le d(X)=d(Y)\le d(A)$
\item There exists a $P_0^{*}$-compatible $(s\utov X\utov A,B\utov Y\utov t)$-PDFP. 
\item There exists a $(s\utov X,B\utov Y)$-PDFP $(P_1',P_2')$ such that $\calI(P_0^{*};P_1',P_2')\not =\emptyset$.
\end{enumerate}
\end{definition}

From~\Cref{def:critical}, it is clear that for any critical pair $(A,B)$, $d(B)>d(X)$; otherwise any $(s\utov X,B\utov Y)$-PDFP would not intersect $P_0^{*}$ by~\Cref{lem:layer-range}.

\begin{lemma}\label{lem:if-no-critical}
Let $G=(V,E,w)$ be an $(s,t)$-layered graph, and $P_0^{*}$ be an oracle middle path from $A$ to $B$. Assume that there is no critical pair. Then for any $P_0^{*}$-compatible $(s\utov A,B\utov t)$-PDFP $(P_1^*,P_2^*)$, and for any tuple $(X',X,Y',Y)$ such that $(X',X)\in \ESet(P_1^*),(Y',Y)\in \ESet(P_2^*)$ and $\lambda(X')=\lambda(Y')$, the 4-tuple $(X',X,Y',Y)$ is $P_0^*$-isolated .
\end{lemma}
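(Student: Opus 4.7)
The plan is to verify the three defining conditions for the 4-tuple $(X',X,Y',Y)$ to be $P_0^*$-isolated. The first condition is immediate: since $(X',X)\in \ESet(P_1^*)$ and $(Y',Y)\in \ESet(P_2^*)$ are both forward-edges, Lemma~\ref{lem:layered-lambda}(b) yields $\lambda(X)=\lambda(X')+1$ and $\lambda(Y)=\lambda(Y')+1$, and combined with $\lambda(X')=\lambda(Y')$ this gives the claimed chain of equalities. The second condition is witnessed by $(P_1^*,P_2^*)$ itself, which is $P_0^*$-compatible by assumption and passes through both edges by hypothesis.

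For the third condition, I would argue by contradiction. Assume there is an $(s\utov X'\utov X\utov A, B\utov Y'\utov Y\utov t)$-PDFP $(P_1,P_2)$ with $\calI(P_0^*;P_1,P_2)\neq \emptyset$; the goal is to produce a critical pair, contradicting the hypothesis. Write $\Lambda := \lambda(X)=\lambda(Y)$, pick $v\in \calI(P_0^*;P_1,P_2)$, and without loss of generality take $v\in \VSet(P_1)$. By Lemma~\ref{lem:layer-range} we have $\lambda(B)<\lambda(v)<\lambda(A)$, and moreover $\lambda(v)\neq\Lambda$: otherwise $v$ would be the unique vertex of $P_1$ at layer $\Lambda$, namely $X\in \VSet(P_1^*)$, but then $v\in \VSet(P_1^*)\cap \VSet(P_0^*)=\{A\}$, contradicting $v\notin\{A,B\}$. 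So either $\lambda(v)<\Lambda$ or $\lambda(v)>\Lambda$.

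In the first case, $v$ lies on the prefix $(P_1)_{s\to X}$ of $P_1$, so the pair $\bigl((P_1)_{s\to X},(P_2)_{B\to Y}\bigr)$ is an $(s\to X, B\to Y)$-PDFP whose first path meets $\VSet(P_0^*)\setminus\{A,B\}$ at $v$. Combined with $d(B)\le d(X)=d(Y)\le d(A)$ (from $X\in \VSet(P_1^*)$ and $Y\in \VSet(P_2^*)$) and with $(P_1^*,P_2^*)$ witnessing condition (ii) of Definition~\ref{def:critical}, this makes $(X,Y)$ a critical pair, giving the desired contradiction.

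The case $\lambda(v)>\Lambda$ is the main obstacle. Here $v$ lies strictly inside $(P_1)_{X\to A}$, and Lemma~\ref{lem:up-vertex-only} forces $v$ to be an up-vertex of $P_1$ with respect to $(P_1^*,P_2^*)$. Hence $v$ is interior to an up-subpath $p=(P_1)_{u_i\to u_{i+1}}$ with $u_i\in \VSet(P_2^*)$, $u_{i+1}\in \VSet(P_1^*)$, and $\Lambda<\lambda(u_i)<\lambda(v)<\lambda(u_{i+1})$; the strict inequality $\Lambda<\lambda(u_i)$ holds because $X\in \VSet(P_1^*)\cap \VSet(P_1)$ is itself a member of the intersection list $\interV(P_1^*,P_2^*;P_1)$, and $u_i,u_{i+1}$ are consecutive in that list. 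The plan is to produce a critical pair $(\tilde X,\tilde Y):=(u_{i+1},P_2^*[\lambda(u_{i+1})])$: conditions (i) and (ii) are immediate since $(P_1^*,P_2^*)$ already passes through both vertices and $\lambda(u_{i+1})\in(\lambda(v),\lambda(A)]$, and for condition (iii) the goal is to build an $(s\to \tilde X, B\to \tilde Y)$-PDFP that touches $P_0^*$ at $v$ by stitching together portions of $P_1^*,P_1,P_2^*,$ and $p$. The delicate step is guaranteeing disjointness at $u_i$, which the naive concatenation $(P_1^*)_{s\to X}\circ (P_1)_{X\to u_{i+1}}$ with $(P_2^*)_{B\to\tilde Y}$ would violate. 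I expect the resolution to exploit the up-path structure once more---either by rerouting the $B$-side through $P_2$ at a carefully chosen layer, or by exchanging around $u_i$ through an interior segment of $p$---with the symmetric subcase $v\in \VSet(P_2)$ handled by interchanging the roles of $P_1$ and $P_2$.
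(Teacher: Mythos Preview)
Your handling of the first two conditions of ``$P_0^*$-isolated'' and of the subcase $\lambda(v)<\Lambda$ is fine and matches the paper in spirit: you pick a candidate pair on $P_1^*,P_2^*$, verify (i) and (ii) via $(P_1^*,P_2^*)$, and verify (iii) by truncating the offending PDFP $(P_1,P_2)$.

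The genuine gap is exactly where you flag it: the case $\lambda(v)>\Lambda$. Your plan there --- use the up-path endpoint $u_{i+1}\in\VSet(P_1^*)$ and set $\tilde Y=P_2^*[\lambda(u_{i+1})]$ --- runs into the disjointness failure at $u_i\in\VSet(P_2^*)$ that you yourself identify, and the ``rerouting through $P_2$'' or ``exchanging around $u_i$'' fixes you gesture at are not easy to complete: at layer $\lambda(u_i)$ the path $P_1$ occupies the same vertex as $P_2^*$, so any $B\to\tilde Y$ path that follows $P_2^*$ is blocked, while any that follows $P_2$ need not land at $\tilde Y$. There is no local swap that obviously repairs this, and a full argument along these lines would be substantially more involved than what the lemma deserves.

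The paper sidesteps the whole difficulty by a different choice of candidate critical pair. Rather than using $(X,Y)$ at the given layer $\Lambda$, it uses $(A,A')$ at the \emph{rightmost} relevant layer, where $A'$ is the vertex of $P_2^*$ with $d(A')=d(A)$. Conditions (i) and (ii) for $(A,A')$ are immediate (the latter witnessed by $(P_1^*,P_2^*)$), and the point is that by Lemma~\ref{lem:layer-range} every vertex of $\VSet(P_0^*)\setminus\{A,B\}$ has layer strictly below $\lambda(A)$, so your problematic case ``intersection to the right of the chosen layer'' simply cannot occur. In other words, the case split on $\lambda(v)$ versus $\Lambda$ is an artifact of tying the critical pair to the particular tuple $(X,Y)$; once you untie it and push the pair all the way to layer $\lambda(A)$, only your easy case survives. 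The paper's write-up of condition~(iii) for $(A,A')$ is quite terse (one still has to produce an $(s\!\to\!A,\,B\!\to\!A')$-PDFP meeting $P_0^*$, and the truncation of $(P_1,P_2)$ does not literally end at $A'$ unless $P_2$ passes through $A'$), but that is where the paper places the weight of the argument rather than in an up-path surgery.
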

\begin{proof}
By definition, we have: 
\begin{itemize}
\item $\lambda(X')=\lambda(Y')=\lambda(X)-1=\lambda(Y)-1$ since $P_1^*$ and $P_2^*$ are forward paths.
\item $(P_1^*,P_2^*)$ is a $P_0^{*}$-compatible $(s\utov X'\utov X\utov A,B\utov Y'\utov Y\utov t)$-PDFP.
\item Consider any $(s\utov X'\utov X\utov A,B\utov Y'\utov Y\utov t)$-PDFP $(P_1,P_2)$. Let $A'$ be the vertex in $P^*_2$ with $d(A')=d(A)$. Assume that $\calI(P_0^*;P_1,P_2)\neq\emptyset$. 
By~\Cref{lem:layer-range}, for all $u\in \calI(P_0^{*};P_1,P_2)$ we have 
$d(B)<d(u)<d(A)$. Hence $(A,A')$ is a critical pair, which contradicts our assumption. 
Therefore, $\calI(P_0^*;P_1,P_2)=\emptyset$.
\end{itemize}
Hence, $(X',X,Y',Y)$ is $P_0^{*}$-isolated.
\end{proof}

Now we provide our construction of $(X',X,Y',Y)$:
\begin{itemize}
    \item If there are no critical pairs, then by the definition of oracle middle path, there exists a $P_0^{*}$-compatible $(s\utov A,B\utov t)$-PDFP $(P_1^*,P_2^*)$. Thus by \Cref{lem:if-no-critical}, the construction of $(X',X,Y',Y)$ is trivial as we can choose any $(X',X)$ and $(Y',Y)$ in $P_1^{*}$ and $P_2^{*}$ respectively with the same layer index. 
    \item Otherwise, there exist at least one critical pairs. In this case, we choose this 4-tuple as follows:
    \begin{itemize}
        \item $(X,Y)$ is an arbitrary leftmost critical pair. i.e., $d(X)$ is minimum among all such pairs.
        \item By~\Cref{def:critical}, we know that there exists a $P_0^{*}$-compatible $(s\utov X\utov A,B\utov Y\utov t)$-PDFP $(P_1^{*},P_2^{*})$. Let $X'$ be the predecessor of $X$ on $P_1^{*}$ and $Y'$ be the predecessor of $Y$ on $P_2^*$, then it is clear that $(X',X),(Y',Y)\in E$ and $\lambda(X')=\lambda(Y')=\lambda(X)-1=\lambda(Y)-1$.
    \end{itemize}
\end{itemize}

\begin{lemma}\label{lem:two-inter}
We are given an $(s,t)$-layered graph $G=(V,E,w)$ and an oracle middle path $P_0^{*}$ from $A$ to $B$. Let $(X',X,Y',Y)$ be as constructed above. Assume that there exists an $(s\utov X'\utov X\utov A, B\utov Y'\utov Y\utov t)$-PDFP $(P_1, P_2)$ such that $\calI(P_0^*; P_1, P_2) \neq \emptyset$. Then there exists an $(s\utov X\utov A, B\utov Y\utov t)$-PDFP $(P_1', P_2')$ that has at least two intersection vertices $\alpha, \beta \in \calI(P_0^*; P_1', P_2')$ with $d(\alpha) < d(X)=d(Y)<d(\beta)$. 
\end{lemma}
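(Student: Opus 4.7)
The plan is to exploit the leftmost-ness of the critical pair $(X,Y)$ in order to control intersections strictly before layer $\lambda(X)$, and then to stitch together two PDFPs at layer $\lambda(X)$ to build the desired $(P_1', P_2')$. First I would argue that the pair $(X', Y')$ is \emph{not} critical. Condition (ii) of \Cref{def:critical} for $(X', Y')$ is witnessed by the very $P_0^*$-compatible PDFP $(P_1^*, P_2^*)$ used in the construction of $X', Y'$, since $X', Y'$ are defined as the predecessors of $X, Y$ along $P_1^*, P_2^*$. Condition (i) reduces to $\lambda(X') \ge \lambda(B)$, i.e.\ $\lambda(X) > \lambda(B)$, and this inequality must hold for any critical pair: if instead $\lambda(X) = \lambda(Y) = \lambda(B)$, then by \Cref{lem:layered-lambda} any forward path from $B$ to $Y$ collapses to $Y = B$ and any $s\utov X$ forward path lies in layers $\le \lambda(B)$, whereas \Cref{lem:layer-range} places every vertex of $\VSet(P_0^*) \setminus \lbrace A, B\rbrace$ strictly above layer $\lambda(B)$, making (iii) impossible. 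The leftmost-ness of $(X, Y)$ then forces $(X', Y')$ to fail (iii), so every $(s\utov X', B\utov Y')$-PDFP is disjoint from $\VSet(P_0^*) \setminus \lbrace A, B\rbrace$.

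Applying this to the prefix pair $\bigl((P_1)_{s \to X'},(P_2)_{B \to Y'}\bigr)$, which is itself an $(s\utov X', B\utov Y')$-PDFP, rules out any intersection with $\VSet(P_0^*) \setminus \lbrace A, B\rbrace$ on those prefixes. Combined with the hypothesis $\calI(P_0^*; P_1, P_2) \neq \emptyset$, every such intersection must lie on $(P_1)_{X \to A}$ or $(P_2)_{Y \to t}$. Since $X \in \VSet(P_1^*)$, $Y \in \VSet(P_2^*)$, and the three segments of the next-to-shortest path defining $P_0^*$ are internally disjoint, neither $X$ nor $Y$ belongs to $\VSet(P_0^*) \setminus \lbrace A, B\rbrace$; moreover, each forward path has a unique vertex at any given layer. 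Hence any intersection vertex $\beta$ obtained this way satisfies $d(\beta) > d(X)$. To produce $\alpha$, I would invoke condition (iii) for the critical pair $(X, Y)$, which supplies an $(s\utov X, B\utov Y)$-PDFP $(P_1^{(3)}, P_2^{(3)})$ with $\calI(P_0^*; P_1^{(3)}, P_2^{(3)}) \neq \emptyset$; any vertex $\alpha$ in that set lies on a forward path ending at $X$ or $Y$, and the same layer argument gives $d(\alpha) < d(X)$.

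Finally I would stitch: define $P_1' := (P_1^{(3)})_{s \to X} \circ (P_1)_{X \to A}$ and $P_2' := (P_2^{(3)})_{B \to Y} \circ (P_2)_{Y \to t}$, each a forward path passing through $X$ or $Y$ respectively. The main check is vertex-disjointness, which decomposes into four comparisons: the two left halves are disjoint by the PDFP property of $(P_1^{(3)}, P_2^{(3)})$; the two right halves are disjoint by the PDFP property of $(P_1, P_2)$; and the two ``cross'' pairs live in disjoint layer ranges $[0, \lambda(X)]$ versus $[\lambda(X), \lambda(t)]$ whose only common layer is $\lambda(X)$, at which the $P_1$-pieces carry only $X$ and the $P_2$-pieces only $Y$, with $X \neq Y$. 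Hence $(P_1', P_2')$ is a legitimate $(s\utov X\utov A, B\utov Y\utov t)$-PDFP, and by construction $\alpha$ (on a left half) and $\beta$ (on a right half) both lie in $\calI(P_0^*; P_1', P_2')$ with $d(\alpha) < d(X) = d(Y) < d(\beta)$. The main obstacle is the careful layer-based disjointness verification in this final stitching step, which is the point where the $(s,t)$-layered structure of $G$ is genuinely used.
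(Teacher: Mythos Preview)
Your proposal is correct and follows essentially the same route as the paper's proof: both use the leftmost-ness of $(X,Y)$ to force every intersection of $(P_1,P_2)$ with $P_0^*$ into layers strictly above $\lambda(X)$, then invoke condition~(iii) of criticality to obtain an intersection strictly below $\lambda(X)$, and finally stitch the two witnessing PDFPs at layer $\lambda(X)$. You are in fact more careful than the paper in two places: you explicitly verify condition~(i) for $(X',Y')$ by arguing $\lambda(X)>\lambda(B)$, and you spell out the layer-based vertex-disjointness check for the stitched pair $(P_1',P_2')$, both of which the paper simply asserts.
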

\begin{proof}
If there is no critical pair, the statement is trivial: by~\Cref{lem:if-no-critical}, there exists no $(s\utov X'\utov X\utov A, B\utov Y'\utov Y\utov t)$-PDFP that intersects $P_0^*$. 

In the remainder of the proof, we may therefore assume that a critical pair exists, and $(X,Y)$ is a leftmost one. Let $u\in \calI(P_0^{*};P_1,P_2)$ be an arbitrary intersection vertex. We discuss the following three cases: 
\begin{itemize}
\item \textbf{Case 1:} $\lambda(u)<\lambda(X)$.\\
In this case, $(X',Y')$ is a critical pair and $\lambda(X')<\lambda(X)$. This contradicts the assumption that $(X,Y)$ is leftmost.
\item \textbf{Case 2:} $\lambda(u)=\lambda(X)$.\\
It is impossible because $(P_1,P_2)$ only goes through $(X,Y)$ in layer $\lambda(X)$.
\item \textbf{Case 3:} $\lambda(u)>\lambda(X)$.\\
Since $(X,Y)$ is a critical pair (\Cref{def:critical}), there exists a $(s\utov X,B\utov Y)$-PDFP $(P_1'',P_2'')$ that intersects $P_0^{*}$. 
In other words, there exists some $\alpha\in \calI(P_0^*;P_1'',P_2'')$ such that $d(\alpha)<d(X)$ (equivalently, $\lambda(\alpha)<\lambda(X)$). 

Recall $\lambda(X)=\lambda(Y)$. We consider an $(s\utov X\utov A, B\utov Y\utov t)$-PDFP
$$(P_1',P_2')=(P_1''\circ (P_1)_{X\to A},P_2''\circ (P_2)_{Y\to t}).$$ It contains at least two intersection vertices $\alpha$ and $\beta=u$, where $\lambda(\alpha)<\lambda(X)$ and $\lambda(\beta)>\lambda(Y)$. This completes the proof. 
\end{itemize}
\end{proof}
\subsection{Final Analysis}\label{subsec:final}

 In this subsection, we will complete the proof of the Key Lemma (\Cref{lem:key}). We need to show that the choice of $(X',X,Y',Y)$ in~\Cref{subsec:construction} is $P_0^{*}$-isolated. According to~\Cref{lem:two-inter}, we only need to consider the case that there exists an $(s\utov X\utov A,B\utov Y\utov t)$-PDFP with at least two intersection vertices with $P_0^{*}$. 
 To prove this, we focus on the structural relationship between an arbitrary PDFP and a $P_0^*$-compatible PDFP. 

\begin{lemma}\label{lem:no-gap}
Given an $(s,t)$-layered graph $G=(V,E,w)$, an oracle middle path $P_0^{*}$, and an arbitrary corresponding critical pair $(X,Y)$, for any $(s\utov X\utov A,B\utov Y\utov t)$-PDFP $(P_1,P_2)$, there exists a $P_0^{*}$-compatible $(s\utov X\utov A,B\utov Y\utov t)$-PDFP $(P_1^{**},P_2^{**})$  such that: $$\gapV(P_1^{**},P_2^{**};P_1,P_2)=\emptyset$$ 
\end{lemma}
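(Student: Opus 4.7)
The plan is to take $(P_1^{**},P_2^{**})$ as a maximizer of a suitable edge-based potential over all $P_0^{*}$-compatible $(s\utov X\utov A, B\utov Y\utov t)$-PDFPs, and show that at the maximum no gap-vertex can exist by a local swap argument that crucially uses \Cref{lem:up-vertex-only} to preserve $P_0^{*}$-compatibility.

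First, since $(X,Y)$ is a critical pair, \Cref{def:critical} guarantees at least one $P_0^{*}$-compatible PDFP through $(X,Y)$. Among all such PDFPs $(\hat P_1,\hat P_2)$ I pick $(P_1^{**},P_2^{**})$ maximizing
\[
\Phi(\hat P_1,\hat P_2) \;=\; \bigl|\ESet(\hat P_1\cup \hat P_2)\cap \ESet(P_1\cup P_2)\bigr|,
\]
which is finite and integer-valued, so the maximum is attained. Suppose for contradiction that $\gapV(P_1^{**},P_2^{**};P_1,P_2)\neq\emptyset$. After swapping the roles of $P_1$ and $P_2$ if needed, I may assume that the gap-path containing the offending gap-vertex is on $P_1$, of the form $(P_1)_{u\to w}$, with either (type $1\text{-}1$) both $u,w\in \VSet(P_1^{**})$ or (type $2\text{-}2$) both $u,w\in \VSet(P_2^{**})$.

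Next, in the type $1\text{-}1$ case I perform the splice $\tilde P_1 = (P_1^{**})_{s\to u}\circ (P_1)_{u\to w}\circ (P_1^{**})_{w\to A}$. I claim $(\tilde P_1,P_2^{**})$ is again a $P_0^{*}$-compatible PDFP through $(X,Y)$ and has strictly larger $\Phi$, contradicting maximality. The essential points are: (i) $X$ is preserved, because if $X$ lay strictly between $u$ and $w$ in layer, then on the forward path $P_1$ the unique vertex at layer $\lambda(X)$ would be $X$, placing $X$ internally on $(P_1)_{u\to w}$ while $X\in\VSet(P_1^{**})$, so $X$ is an internal $\interV$-vertex of the gap-path, contradicting its definition (via \Cref{lem:layered-lambda}); (ii) the newly introduced vertices of $\tilde P_1$ are gap-vertices of $(P_1,P_2)$, and hence, by \Cref{lem:up-vertex-only} (which confines intersections with $P_0^{*}$ to up-vertices), they do not meet $\VSet(P_0^{*})\setminus\{A,B\}$, and they avoid $A,B$ themselves because $A\in\VSet(P_1^{**})$, $B\in\VSet(P_2^{**})$ while gap-vertices lie outside $\VSet(P_1^{**}\cup P_2^{**})$; so $\tilde P_1\circ P_0^{*}\circ P_2^{**}$ is a simple $s\utov t$ path of the same length as the original next-to-shortest path, hence itself next-to-shortest; (iii) $\Phi$ strictly grows because all added edges of $(P_1)_{u\to w}$ lie in $\ESet(P_1)$, whereas among the $\lambda(w)-\lambda(u)$ removed edges of $(P_1^{**})_{u\to w}$, none lies in $\ESet(P_1)$ (by the same layer argument as in (i)) and the two removed edges incident to $u$ and to $w$ fail to lie in $\ESet(P_2)$ since $u,w\in\VSet(P_1)$ and $\VSet(P_1)\cap\VSet(P_2)=\emptyset$. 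The type $2\text{-}2$ case (splice into $P_2^{**}$ instead), and the two symmetric cases for a gap-path on $P_2$, are handled identically; the preservation of $Y$ (respectively $X$) under the splice follows from the same layer obstruction, since the unique $P_1$-vertex at layer $\lambda(X)=\lambda(Y)$ is $X\neq Y$, and symmetrically the unique $P_2$-vertex at that layer is $Y\neq X$, so a gap-path straddling layer $\lambda(X)$ would force $X$ or $Y$ to be internal $\interV$.

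The main obstacle is selecting the right potential. A naive vertex-based choice such as $|\VSet(P_1^{**})\cap \VSet(P_1)|+|\VSet(P_2^{**})\cap \VSet(P_2)|$ fails on cross-type swaps (for instance splicing a piece of $P_1$ into $P_2^{**}$), because the newly inserted vertices come from $\VSet(P_1)$ and are disjoint from $\VSet(P_2)$, so they do not raise the $P_2$-intersection term. The edge-based choice $\Phi$ sidesteps this: added edges always belong to $\ESet(P_1)$ or $\ESet(P_2)$ and hence count positively regardless of which of $P_1^{**}, P_2^{**}$ they are spliced into, while the layer obstruction above guarantees that enough of the removed edges lie outside $\ESet(P_1\cup P_2)$ to make the net change strictly positive.
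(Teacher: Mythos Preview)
Your proposal is correct and follows essentially the same approach as the paper: choose $(P_1^{**},P_2^{**})$ among $P_0^{*}$-compatible $(s\utov X\utov A,B\utov Y\utov t)$-PDFPs maximizing the number of shared edges with $(P_1,P_2)$, then swap in a gap-path segment to contradict maximality, using \Cref{lem:up-vertex-only} to ensure the swap preserves $P_0^{*}$-compatibility. Your write-up is in fact more careful than the paper's on why $X,Y$ are preserved and why $\Phi$ strictly increases, but the strategy is identical.
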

\begin{proof}
Assume that we have an $(s\utov X\utov A,B\utov Y\utov t)$-PDFP $(P_1,P_2)$.
We choose $(P_1^{**},P_2^{**})$ as an arbitrary $P_0^{*}$-compatible $(s\utov X\utov A,B\utov Y\utov t)$-PDFP that maximizes:
$$|\left(\ESet(P_1^{**})\cup \ESet(P_2^{**})\right)\cap \left(\ESet(P_1)\cup \ESet(P_2)\right)|$$

i.e., the number of common edges. 

We prove the statement by contradiction. Suppose that $\gapV(P_1^{**},P_2^{**};P_1,P_2)\not =\emptyset$, i.e.,~there exists a vertex $u\in \gapV(P_1^{**},P_2^{**};P_1,P_2)$. We know $u$ is in $P_1$ or $P_2$. Denote this path as $P_i$ for a specific $i\in \lbrace 1,2\rbrace$, and let $u_L,u_R\in V$ be the two endpoints of this gap-path. 

Let $u_L,u_R\in \VSet(P_j^{**})$ for a specific $j\in \lbrace 1,2\rbrace$. We can replace $(P_j^{**})_{u_L\to u_R}$ by $(P_i)_{u_L\to u_R}$. 

\begin{itemize}
\item By~\Cref{lem:up-vertex-only}, $(P_i)_{u_L\to u_R}$ does not intersect $P_0^{*}$. i.e., $\VSet((P_i)_{u_L\to u_R})\cap \left(\VSet(P_0^*)\setminus \lbrace A,B\rbrace\right)=\emptyset$. In addition, according to the definition of gap-path, $(P_i)_{u_L\to u_R}$ does not intersect $P^{**}_{3-j}$, and neither $X$ nor $Y$ is an internal vertex of $(P_i)_{u_L\to u_R}$. Therefore, after replacing $(P_j^{**})_{u_L\to u_R}$ by $(P_i)_{u_L\to u_R}$, the paths $P_1^{**}$, $P_2^{**}$
are $P_0^{*}$-compatible $(s\utov X\utov A,B\utov Y\utov t)$-PDFP.
\item The replacement provides more common edges: 
$$\begin{aligned}
    |\ESet((P_i)_{u_L\to u_R})\cap \left(\ESet(P_1)\cup \ESet(P_2)\right)|&=\lambda(u_R)-\lambda(u_L)\\
    &>|\ESet((P^{**}_j)_{u_L\to u_R})\cap \left(\ESet(P_1)\cup \ESet(P_2)\right)|
\end{aligned}$$
\end{itemize}

This contradicts the assumption that $(P_1^{**}$, $P_2^{**})$ is chosen to maximize the number of common edges with $(P_1,P_2)$. 
\end{proof}

Based on a critical pair $(X,Y)$, we define two regions with respect to $\lambda(X)$:
\begin{itemize}
  \item the left region: $\{ u \in V : \lambda(u) < \lambda(X) \}$,
  \item the right region: $\{ u \in V : \lambda(u) > \lambda(X) \}$.
\end{itemize}

\begin{lemma}
\label{lem:up-path-exclusive}
Let $G=(V,E,w)$ be an $(s,t)$-layered graph, and let $P_0^{*}$ be an oracle middle path with critical pair $(X,Y)$.  
For any $(s\utov X\utov A,B\utov Y\utov t)$-PDFP $(P_1,P_2)$ and $P_0^{*}$-compatible $(s\utov X\utov A,B\utov Y\utov t)$-PDFP $(P_1^{**},P_2^{**})$  such that $\gapV(P_1^{**},P_2^{**};P_1,P_2)=\emptyset$, the following hold:
\begin{enumerate}[label=(\alph*)]
\item For any up-path $p$, all vertices of $p$ except the endpoints lie in exactly one of the two regions; that is, either $d(u)\le d(X)$ for all $u\in \VSet(p)$, or $d(u)\ge d(X)$ for all $u\in \VSet(p)$.
\item If $P_1$ or $P_2$ contains an up-path in the right region, then $P_1$ contains a down-path $u_{1L}\to u_{1R}$ in the right region and $P_2$ contains an up-path $u_{2L}\to u_{2R}$ in the right region. Moreover, $\lambda(u_{1R})>\lambda(u_{2L})$ and $\lambda(u_{2R})>\lambda(u_{1L})$.
\end{enumerate}
\end{lemma}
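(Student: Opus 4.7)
For part (a), the plan is to argue by contradiction. Suppose an up-path $p$ on some $P_i$ ($i\in\{1,2\}$) has a vertex $v_1$ with $\lambda(v_1)<\lambda(X)$ and another vertex $v_2$ with $\lambda(v_2)>\lambda(X)$. Since $p$ is a forward subpath of $P_i$ and forward edges advance the layer by exactly one (\Cref{lem:layered-lambda}), $p$ visits layer $\lambda(X)$ at a unique vertex $v$ that lies strictly between $v_1$ and $v_2$, hence is interior to $p$. The unique layer-$\lambda(X)$ vertex of $P_i$ is $X$ (if $i=1$) or $Y$ (if $i=2$), and both belong to $\VSet(P_1^{**})\cup \VSet(P_2^{**})$, contradicting the definition of an up-path, whose interior vertices avoid $\VSet(P_1^{**})\cup \VSet(P_2^{**})$.

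For part (b), assume WLOG that $P_1$ contains an up-path $q_1$ in the right region, with endpoints $u_L\in \VSet(P_2^{**})$ and $u_R\in \VSet(P_1^{**})$ at layers strictly greater than $\lambda(X)$ (the alternatives $u_L=Y$ or $u_R=X$ would put $Y\in \VSet(P_1)$ or $X\in \VSet(P_2)$, contradicting disjointness). I would restrict every path to the right region, writing $\pi_1,\pi_1^{**}$ for the $X\to A$ parts of $P_1,P_1^{**}$ and $\pi_2,\pi_2^{**}$ for the $Y\to t$ parts of $P_2,P_2^{**}$; by construction $(\pi_1,\pi_2)$ and $(\pi_1^{**},\pi_2^{**})$ are each pairs of disjoint forward paths. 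The first claim to establish is that $\pi_2$ contains an up-path: if not, every intersection vertex of $\pi_2$ would lie on $\VSet(\pi_2^{**})$ (since $\pi_2$ starts on color~$2$ and only an up-path can switch it to color~$1$), and by $\gapV(P_1^{**},P_2^{**};P_1,P_2)=\emptyset$ every gap-path of $\pi_2$ would be a single edge. The one-vertex-per-layer structure of forward paths in $(s,t)$-layered graphs (\Cref{lem:layered-lambda}) would then force these intersection vertices to match $\VSet(\pi_2^{**})$ vertex-by-vertex and upgrade to the path identity $\pi_2=\pi_2^{**}$, placing $u_L\in \VSet(\pi_2^{**})\subseteq \VSet(P_2)$ and contradicting $u_L\in \VSet(P_1)$.

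Next I would pick $u_{1L}\to u_{1R}$ to be the first down-path of $\pi_1$ and $u_{2L}\to u_{2R}$ the first up-path of $\pi_2$; both exist by parity ($\pi_1$ starts and ends on $\VSet(\pi_1^{**})$, $\pi_2$ on $\VSet(\pi_2^{**})$, so up- and down-paths alternate, and $q_1$ forces at least one color switch on $\pi_1$). The same coincidence argument as above gives $\pi_1[X,u_{1L}]=\pi_1^{**}[X,u_{1L}]$ and $\pi_2[Y,u_{2L}]=\pi_2^{**}[Y,u_{2L}]$. To derive $\lambda(u_{1R})>\lambda(u_{2L})$: both $u_{1R},u_{2L}\in \VSet(P_2^{**})$ and are layer-comparable along $P_2^{**}$; since $u_{1R}\in \VSet(P_1)$ is disjoint from $\VSet(P_2)\supseteq \VSet(\pi_2^{**}[Y,u_{2L}])$, the vertex $u_{1R}$ cannot sit on that prefix and must come strictly after $u_{2L}$ on $P_2^{**}$. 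The symmetric argument on $P_1^{**}$ yields $\lambda(u_{2R})>\lambda(u_{1L})$.

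The main obstacle is the first step of part~(b): converting the local assumption ``$\pi_2$ has no up-path'' into the global identity $\pi_2=\pi_2^{**}$. The leverage is the combination of $\gapV=\emptyset$, which forces every gap-path to be a single edge, and the one-vertex-per-layer structure of forward paths in $(s,t)$-layered graphs, which together promote vertex-set containment into path identity. Once this identity is established, the remaining work---choosing the first down-path on $\pi_1$ and the first up-path on $\pi_2$, obtaining the prefix coincidences, and deriving the layer inequalities via disjointness---is routine.
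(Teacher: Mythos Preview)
Your proposal is correct and follows essentially the same route as the paper: part~(a) via the observation that $X$ (resp.\ $Y$) is the unique layer-$\lambda(X)$ vertex on $P_1$ (resp.\ $P_2$) and lies in $\VSet(P_1^{**})\cup\VSet(P_2^{**})$; part~(b) via the prefix identity $\pi_2[Y,u_{2L}]=\pi_2^{**}[Y,u_{2L}]$ (and its analogue for $\pi_1$), from which the layer inequalities follow by disjointness of $P_1,P_2$. One small organizational point: your ``WLOG $P_1$ has the up-path'' is not a genuine symmetry (the conclusion treats $P_1,P_2$ asymmetrically), so the case where only $P_2$ carries the up-path needs its own line---the paper sidesteps this by arguing in one stroke that the existence of \emph{any} up-path forces $\pi_1\neq\pi_1^{**}$ and $\pi_2\neq\pi_2^{**}$ (an up-path's endpoints lie in $P_2^{**}$ and $P_1^{**}$ respectively, so equality of either restricted pair would force $P_1\cap P_2\neq\emptyset$), after which both the required down-path on $\pi_1$ and up-path on $\pi_2$ drop out immediately from $\gapV=\emptyset$.
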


\begin{proof}
\mbox{}\\
\noindent
\emph{Proof of (a).} Since $P_1$ and $P_1^{**}$ are forward paths and both pass through $X$, any up-path in $P_1$ must be contained entirely in either the left region or the right region, except for the endpoints. Similarly, since $P_2$ and $P_2^{**}$ are forward paths and both pass through $Y$, any up-path in $P_2$ must be contained entirely in either the left region or the right region, except for the endpoints.

\medskip
\noindent
\emph{Proof of (b).}
As $P_1$ or $P_2$ contains an up-path in the right region, we have $(P_1)_{X\to A}\neq (P_1^{**})_{X\to A}$ and $(P_2)_{Y\to t}\neq (P_2^{**})_{Y\to t}$ (because otherwise $P_1$ and $P_2$ would intersect).
Notice that there are no gap paths in $P_1$ and $P_2$.
Therefore, in the right region, $P_1$ contains a down-path, and $P_2$ contains an up-path.

Let the first down path in $P_1$ in the right region be from $u_{1L}$ to $u_{1R}$, and the first up-path of $P_2$ in the right region be from $u_{2L}$ to $u_{2R}$. Here ``first path" means the one with the leftmost starting vertex.
Since $P_1$ and $P_2$ are forward paths, this is an intuitive definition. 
As $u_{1R}\not\in P_2,u_{1R}\in P_2^{**}$ and $(P_2)_{Y\to u_{2L}}\subseteq P_2^{**}$, we must have $\lambda(u_{1R})>\lambda(u_{2L})$ to avoid intersection between $P_1$ and $P_2$.
Similarly, we have $\lambda(u_{2R})>\lambda(u_{1L})$.
\end{proof}


\begin{proofoflem}{\ref{lem:key}}
We prove the Key Lemma by contradiction.

Recall that we are given an $(s,t)$-layered $G = (V, E,w)$ and an oracle middle path $P_0^*$. We construct the 4-tuple $(X', X, Y', Y)$ as described in~\Cref{subsec:construction}. That is, by~\Cref{lem:if-no-critical}, we may assume that critical pairs exist, and we fix a leftmost critical pair $(X, Y)$. Then we select a $P_0^*$-compatible $(s \utov X \utov A, B \utov Y \utov t)$-PDFP $(P_1^*, P_2^*)$, and choose edges $(X', X) \in \ESet(P_1^*)$ and $(Y', Y) \in \ESet(P_2^*)$. 

Suppose for contradiction that there exists an $(s\utov X'\utov X\utov A,B\utov Y'\utov Y\utov t)$-PDFP $(P_1,P_2)$ that intersects $P_0^{*}$. Then by~\Cref{lem:two-inter}, there exists an $(s\utov X\utov A,B\utov Y\utov t)$-PDFP $(P_1',P_2')$ with two intersection vertices in the left region and the right region, respectively. Applying~\Cref{lem:no-gap} to $(P_1',P_2')$, we obtain a $P_0^{*}$-compatible $(s\utov X\utov A,B\utov Y\utov t)$-PDFP $(P_1^{**},P_2^{**})$ such that $\gapV(P_1^{**},P_2^{**};P_1',P_2')=\emptyset$.

Let $u_{\text{first}}$ be the first vertex in $\calI(P_0^{*};P_1',P_2')$ with respect to $P_0^{*}$ (in the direction from $A$ to $B$). 
Based on the layer index of $u_{\text{first}}$, we distinguish the following two cases (we know $d(u_{\text{first}})\not =d(X)$ since $u_{\text{first}}$ is either in $P_1'$ or $P_2'$). An illustration of the paths and vertices introduced in each of the two cases is shown in \Cref{fig:final-1} and \Cref{fig:final-2}. 

\begin{figure}[ht]
    \centering
    \tgraphic{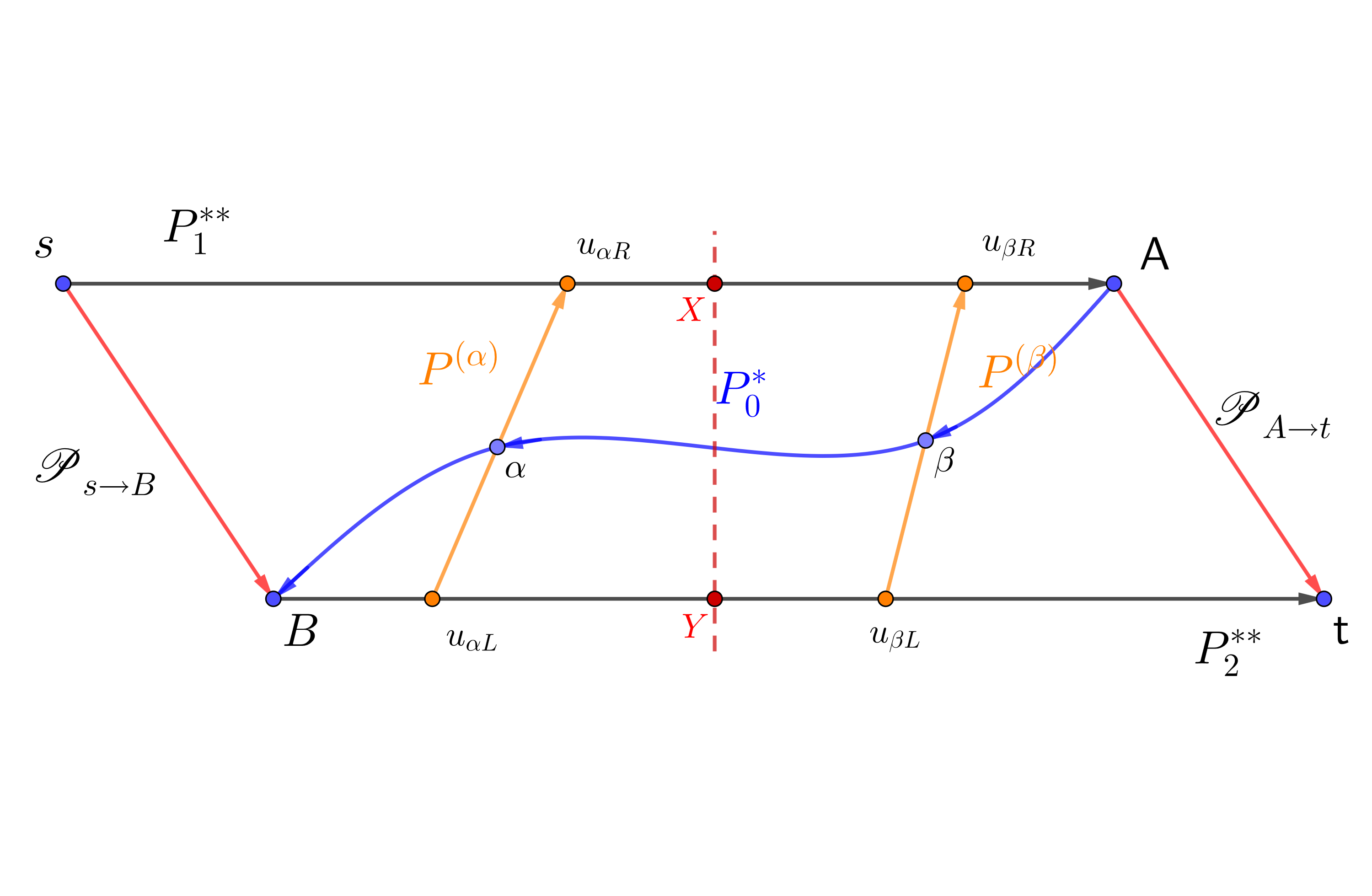}
    \caption{An illustration of Case 1 in the proof of~\Cref{lem:key}. }
    \label{fig:final-1}
\end{figure}

\textbf{Case 1: }$d(u_{\text{first}})>d(X)$.\\
In this case, by~\Cref{lem:two-inter}, there must exist another intersection vertex in the left region. Hence, there exist two consecutive intersection vertices $\beta,\alpha$ (in the order along $P_0^{*}$) such that $\beta$ lies in the right region and $\alpha$ lies in the left region. Then we have: $(\VSet((P_0^{*})_{\beta\to \alpha})\setminus \lbrace \alpha,\beta\rbrace)\cap (\VSet(P'_1)\cup \VSet(P'_2))=\emptyset$.  By~\Cref{lem:up-vertex-only}, we know both $\alpha$ and $\beta$ are in up-paths, so we introduce the following notation:
\begin{itemize}
\item Let $\alpha$ be in the up-path $P^{(\alpha)}$, and the two endpoints of this up-path are $u_{\alpha L}$ and $u_{\alpha R}$.
\item Let $\beta$ be in the up-path $P^{(\beta)}$, and the two endpoints of this up-path are $u_{\beta L}$ and $u_{\beta R}$.
\end{itemize}
Recall that $\anyP_{s\to B}$ and $\anyP_{A\to t}$ are arbitrary but fixed shortest $s\to B$ and $A\to t$ paths, respectively. Now we consider the following walk:

$$\begin{aligned}W_2^*=\anyP_{s\to B}\circ (P^{**}_2)_{B\to u_{\beta L}}\circ (P^{(\beta)})_{u_{\beta L}\to \beta}\circ (P_0^{*})_{\beta\to \alpha}\\\circ (P^{(\alpha)})_{\alpha\to u_{\alpha R}}\circ (P_1^{**})_{u_{\alpha R}\to A}\circ \anyP_{A\to t}\end{aligned}$$

To obtain a contradiction, we will show that $W_2^{*}$ is an $s\utov t$ not-shortest path that is shorter than $P_1^{**}\circ P_0^{*}\circ P_2^{**}$.
\begin{itemize}
    \item (No Repeated Vertices) We can notice the followings:
    \begin{itemize}
        \item By~\Cref{lem:layer-range}, $\anyP_{s\to B}$ and $\anyP_{A\to t}$ are disjoint and they do not intersect other fragments. 
        \item $\beta,\alpha$ are consecutive intersection vertices, so $(P_0^{*})_{\beta\to \alpha}$ does not intersect the fragments that are subpaths of $P^{(\alpha)},P^{(\beta)}$ except the endpoints $\lbrace \alpha,\beta\rbrace$. In addition, $(P_0^{*})_{\beta\to \alpha}$ does not intersect the fragments that are subpaths of $P^{**}_1,P^{**}_2$ because those are compatible with $P^*_0$. Hence, $(P_0^{*})_{\beta\to \alpha}$ does not intersect the rest of $W_2^{*}$. 
        \item $P^{(\alpha)}$ and $P^{(\beta)}$ do not intersect each other because they are different parts of the same PDSP. Also, their internal vertices do not intersect $P_1^{**}$ or $P_2^{**}$ since they are up-paths.
    \end{itemize}
    \item (Not Shortest Path) $d(\beta)>d(X)>d(\alpha)$, so $W_2^*$ includes a back-edge.
    \item (Shorter than $P_1^{**}\circ P_0^{*}\circ P_2^{**}$) We can notice that $d(\alpha)>d(B)$ and $d(\beta)<d(A)$ (\Cref{lem:layer-range}). Therefore, since $W^*_2$ consists of a forward path from $s$ to $\beta$, a subpath of $P^*_0$, and a forward path from $\alpha$ to $t$, 
    $$\begin{aligned}w(W_2^{*}) & \le d(\beta)+w(P_0^{*})+d(t)-d(\alpha)\\
    &<d(A)+w(P_0^{*})+d(t)-d(B)=w(P_1^{**}\circ P_0^{*}\circ P_2^{**})\end{aligned}$$
\end{itemize}

\begin{figure}[ht]
    \centering
    \tgraphic{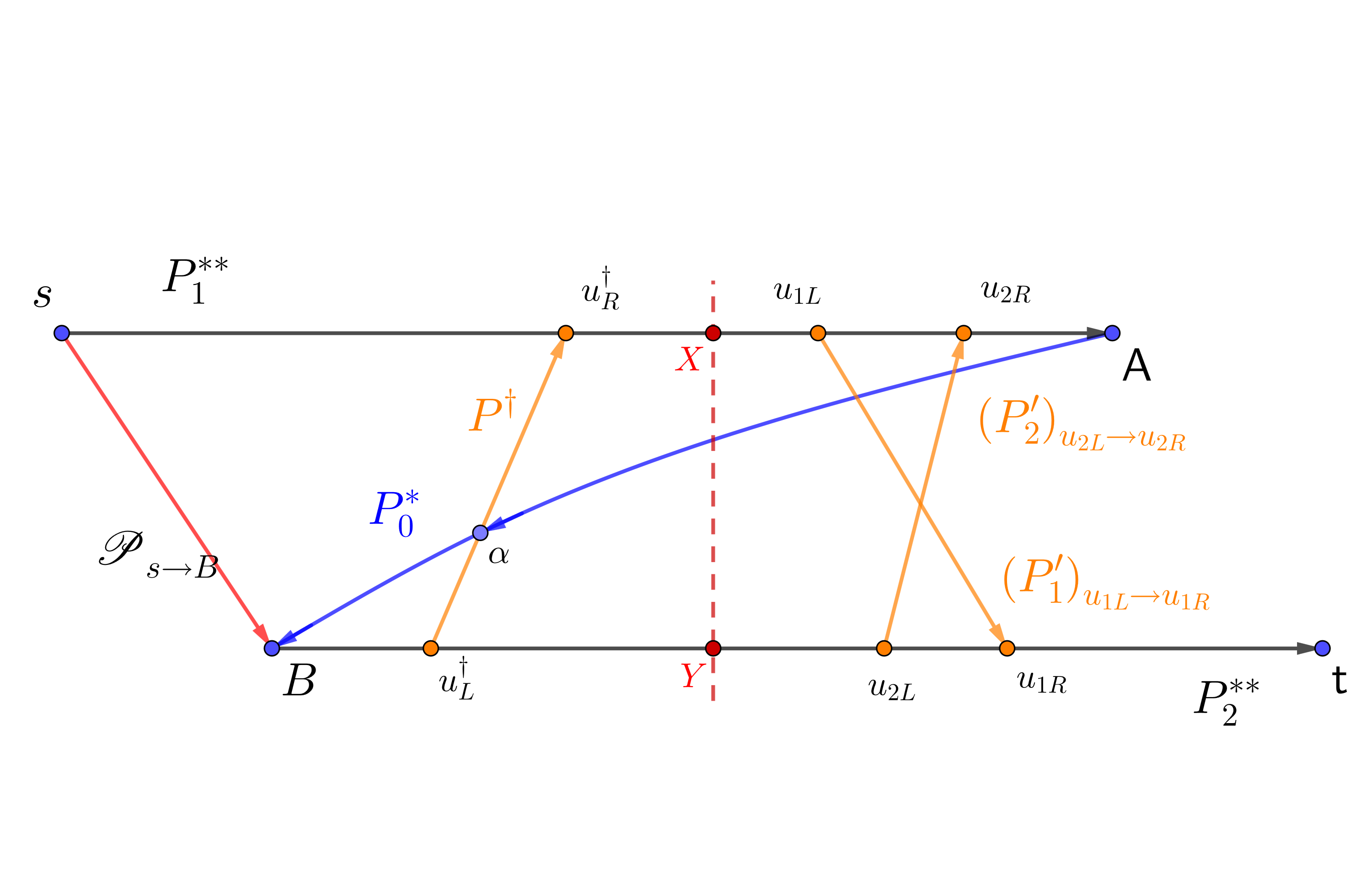}
    \caption{An illustration of Case 2 in the proof of~\Cref{lem:key}. }
    \label{fig:final-2}
\end{figure}

\textbf{Case 2: } $d(u_{\text{first}})<d(X)$.\\
In this case, we cannot apply the same reasoning as in Case 1. Specifically, if $P_0^*$ traverses the layers in a “squiggly’’ manner such that all intersection points in the left region appear before those in the right region along $P_0^*$, then the fragment $(P_0^*)_{\beta \to \alpha}$ no longer exists.

Here we let $\alpha=u_{\text{first}}$, and let $\beta$ be an arbitrary intersection vertex in the right region (i.e., $d(\beta)>d(X)$). By~\Cref{lem:up-vertex-only}, $\beta$ must be an up-vertex (i.e., $\beta\in \upV(P_1^{**},P_2^{**};P'_1,P_2')$). Therefore, by \Cref{lem:up-path-exclusive}, there exists an up-path that entirely lies in the right region. 
  
By \Cref{lem:up-path-exclusive}, there exist a down-path of $P_1'$ in the right region from $u_{1L}$ to $u_{1R}$, and an up-path of $P_2'$ in the right region  from $u_{2L}$ to $u_{2R}$. 
Similarly, according to~\Cref{lem:up-vertex-only}, we know that $\alpha$ is in an up-path. We denote this up-path as $P^{\dagger}$, and let it be from $u^{\dagger}_{L}\in \VSet(P_2^{**})$ to $u^{\dagger}_{R}\in \VSet(P_1^{**})$. By~\Cref{lem:up-path-exclusive}(a), we know that $P^{\dagger}$ lies entirely in the left region except for the endpoints, and $\lambda(u^{\dagger}_{R})\le \lambda(X)$. Recall that $\anyP_{s\to B}$ is an arbitrary but fixed forward path from $s$ to $B$. Then we consider the following walk:

$$\begin{aligned}W^{*}_1=\anyP_{s\to B}\circ (P_2^{**})_{B\to u_{2L}}\circ (P_2')_{u_{2L}\to u_{2R}}\circ (P_1^{**})_{u_{2R}\to A}\circ (P_0^{*})_{A\to \alpha}\\ 
\circ (P^{\dagger})_{\alpha\to u^{\dagger}_{R}}\circ (P_1^{**})_{u^{\dagger}_{R}\to u_{1L}}\circ (P_1')_{u_{1L}\to u_{1R}}\circ (P_2^{**})_{u_{1R}\to t}\end{aligned}$$

 and for a concatenation $W=W_1\circ \ldots \circ W_{k}$, we call $W_1,\ldots,W_{k}$ \textbf{fragments} of $W$. To obtain a contradiction, we will show that $W_1^{*}$ is an $s\utov t$ not-shortest path that is shorter than $P_1^{**}\circ P_0^{*}\circ P_2^{**}$. 
\begin{itemize}
\item (No Repeated Vertices)  According to~\Cref{lem:layer-range}, $\anyP_{s\to B}$ doesn't intersect the other fragments of $W_1^{*}$ because all vertices in other fragments are of layer index not smaller than $B$.  In addition, since $\alpha=u_{\text{first}}$ is the first intersection vertex, $(P_0^{*})_{A\to \alpha}$ does not intersect the remaining fragments. The fragments that are subpaths of $P^{**}_1$ and $P^{**}_2$ are disjoint from each other by the fact that $u_{1L}$ comes before $u_{2R}$ on $P^{**}_1$, and that $u_{2L}$ comes before $u_{1R}$ on $P^{**}_2$. These fragments are also disjoint from the internal vertices of the fragments that are subpaths of $P'_1,P'_2$, since the latter are up-paths and down-paths. So, it remains to show that subpaths of $P'_1,P'_2$ are disjoint from each other. $(P^{\dagger})_{\alpha\to u^{\dagger}_{R}}$, $(P'_1)_{u_{1L}\to u_{1R}}$, and $(P'_2)_{u_{2L}\to u_{2R}}$ are mutually disjoint, since $P^{\dagger}$ is an up-path and any up-path does not intersect other up-paths or down-paths except at the endpoints. 
\item (Not Shortest Path) By \Cref{lem:layer-range}, $d(u_{\text{first}})<d(A)$, so $W_1^{*}$ includes a back-edge.
\item (Shorter than $P_1^{**}\circ P_0^{*}\circ P_2^{**}$) We can notice that $d(\alpha)>d(u^{\dagger}_{L})\ge d(B)$. Since $W^*_1$ consists of a forward path from $s$ to $A$, a subpath of $P^*_0$, and a forward path from $u_{\text{first}}$ to $t$, we have: $$\begin{aligned}w(W_1^{*}) & \le d(A)+w(P_0^{*})+d(t)-d(u_{\text{first}})\\
&<d(A)+w(P_0^{*})+d(t)-d(B)=w(P_1^{**}\circ P_0^{*}\circ P_2^{**}).\end{aligned}$$
\end{itemize}

To summarize, in both \textbf{Case 1} and \textbf{Case 2}, we derived a contradiction by constructing a not-shortest path that is shorter than $P_1^{**}\circ P_0^{*}\circ P_2^{**}$, contradicting the assumption that $P_0^*$ is an oracle middle path. This completes the proof of~\Cref{lem:key} by showing that $(X',X,Y',Y)$, as constructed in~\Cref{subsec:construction}, is $P_0^{*}$-isolated. 
\end{proofoflem}



\section*{Acknowledgments}
We would like to thank Luke Schaeffer and Fabrizio Grandoni for making us aware of the problem. We would like to thank the Bertinoro 2019 Fine-Grained Approximation Algorithms \& Complexity Workshop for initial discussions about the problem, including discussions with Andreas Bj\"orklund, Thore Husfeldt, Vijaya Ramachandran, and Virginia Vassilevska Williams. We are very grateful to Shyan Akmal, Jacob Holm, Eva Rotenberg, and Virginia Vassilevska Williams for extended discussions about the problem in 2020. 

More recently, we would like to thank the participants of the open problem sessions of Nicole Wein's \emph{Graph Algorithms} course for additional discussions about the problem. In particular, we would like to thank participants Zixi Cai, Shengquan Du, Bing Han, Teo Miklethun, Benyu Wang, and Qingyue Wu. We also thank Erik Demaine for use of his ``supercollaboration'' method and accompanying \emph{Coauthor} software, for the open problem sessions. We would also like to thank Yang Hu and Junkai Zhang for additional discussions. Finally, we would like to thank Seth Pettie for organizing the semester-long visit of Tsinghua students to the University of Michigan.

\bibliographystyle{alpha} 
\bibliography{references}
\end{document}